\newcommand{\makro}[1] { [\![ #1]\!]}
\newcommand{\stat}[2] { #1\!\! :\!#2 }
\newcommand{\rhs}[1] { \text{rhs}(#1)}
\newcommand{\corner}[1] { \ulcorner\! #1 \!\urcorner}
\begin{document}
	
\title{Definability Results for Top-Down Tree Transducers} 
%
%
\author{Sebastian Maneth\inst{1} \and
	Helmut Seidl\inst{2} \and
	Martin Vu\inst{1}}
\authorrunning{F. Author et al.}
%
\institute{Universit\"at Bremen, Germany \and
	TU M\"unchen, Germany
}
\maketitle              
	
	\begin{abstract}
We prove that for a given deterministic top-down transducer with look-ahead
it is decidable whether or not its translation is definable
(1)~by a linear top-down tree transducer or
(2)~by a tree homomorphism.
We present algorithms that construct equivalent such transducers if they exist.
	\end{abstract}

\section{Introduction}\label{sec:intro}

Tree transducers are fundamental devices that were invented in the 1970's in
the context of compilers and mathematical linguistics.
Since then they have been applied in a huge variety of contexts.
%
The perhaps most 
basic type of tree transducer is the
top-down tree 
transducer~\cite{DBLP:journals/jcss/Thatcher70,DBLP:journals/mst/Rounds70} 
(for short \emph{transducer}).
Even though top-down tree transducers are very well studied,
some fundamental problems about them have remained open.
For instance, given a (deterministic) such transducer, is it decidable whether
or not its translation can be realized by a  linear  transducer?
In this paper we show that indeed this problem is decidable, 
and that in the affirmative case such a linear 
transducer can be constructed. 

In general, it is advantageous to know whether a translation belongs to a smaller class,
i.e., can be realized by some restricted model with more properties and/or better
resource utilization. The corresponding decision problems for transducers, though, 
are rarely studied and mostly non-trivial.
One recent break-through  is the decidability of
 one-way string
transducers within (functional) two-way string 
transducers~\cite{DBLP:journals/lmcs/BaschenisGMP18}.
Even more recently, it has been proven that
 look-ahead  removal for linear deterministic
top-down tree transducers is decidable~\cite{DBLP:conf/icalp/ManethS20}. 
In our case, one extra advantage of linear transducers (over non-linear ones) is that
linear transducers effectively preserve the regular tree languages ---
implying that forward type checking (where a type is a regular tree language)
can be decided in polynomial time. For non-linear transducers on the other hand, 
type checking 
is DEXPTIME-complete \cite{DBLP:journals/ipl/PerstS04,DBLP:journals/ijfcs/Maneth15}.

The idea of our proof uses the canonical earliest normal form 
for top-down tree transducers~\cite{DBLP:journals/jcss/EngelfrietMS09}
(to be precise, our proof even works for transducers with look-ahead
for which a canonical earliest normal form is presented 
in~\cite{DBLP:journals/tcs/EngelfrietMS16}).
A given canonical earliest transducer $M$ 
produces its output at least
as early as any equivalent linear transducer.
From this we can deduce that if $M$ is equivalent to some linear transducer,
then it has two special properties:
\begin{enumerate}
\item $M$ is \emph{lowest common ancestor conform} (\emph{lca-conform}) and
\item $M$ is \emph{zero output twinned}.
\end{enumerate}
Lca-conformity means that if the transducer copies (i.e., processes
an input subtree more than once), then the output subtree rooted at
the lowest common ancestor of all these processing states may not depend 
on any other input subtree. 
Zero output twinned means that in a loop of two states
that process the same input path, no output whatsoever may be produced.
These properties are decidable in polynomial time and if they hold,
an equivalent linear transducer can be constructed.

In our  second result we prove that for a transducer (with regular look-ahead)
it is decidable whether or not it is equivalent to a
\emph{tree homomorphism}, and that in the affirmative case 
such a homomorphism can be constructed. 
In order to obtain this result, we prove that whenever
a transducer  $T$ is equivalent to a homomorphism,
then any subtree of a certain height of any partial output of $T$ is either ground, 
or effectively identical to the axiom of $T$.
This property can again be checked in polynomial time, and if it holds 
a corresponding homomorphism can be effectively constructed.

For simplicity and better readability we consider total transducers (without look-ahead)
in our first result though we remark that the result  can be extended to partial transducers with look-ahead.
All proofs 
for partial transducers with look-ahead are 
technical variations of the proofs for the total case
and can be found in the Appendix.
Note that our results also work  for  given bottom-up tree transducers,
because they can be simulated by transducers with look-ahead~\cite{DBLP:journals/mst/Engelfriet77}.

\section{Preliminaries}
For $k\in\mathbb{N}$, we denote by $[k]$ the set $\{1,\dots,k\}$.
Let $\Sigma=\{e_1^{k_1},\dots, e_n^{k_n}\}$ be a \emph{ranked alphabet}, where
$e_j^{k_j}$ means that the symbol $e_j$ has
\textit{rank} $k_j$.
By $\Sigma_k$ we denote the set of  all symbols of $\Sigma$ which have rank $k$.
The set $T_\Sigma$ of 
\textit{trees over $\Sigma$} 
consists of all strings of the form 
$a(t_1,\dots, t_k)$, where
$a\in \Sigma_k$, $k\geq 0$, and $t_1,\dots,t_k \in T_\Sigma$. 
We denote by $[a_i \leftarrow t_i \mid i\in [n]]$ the \emph{substitution} 
that replaces each leaf labeled $a_i$ by
the tree $t_i$ (where the $a_i$ are distinct symbols of rank zero
and the $t_i$ are trees).
The set $V(t)$ of  nodes consists of $\lambda$ (the root node) 
and strings  $i.u$, where $i$ is a positive integer and $u$ is a node.
E.g. for the tree $t=f(a,f(a,b))$ we have $V(t)=\{\lambda,1,2,2.1,2.2\}$.
For  $v\in V(t)$, $t[v]$ is the label of $v$, 
$t/v$ is the subtree of $t$ rooted at $v$, and
$t[v \leftarrow t']$ is obtained from $t$ by replacing $t/v$ by $t'$.
 The \emph{size} of a tree $t$, denoted by $|t|$, is its number 
 of nodes, i.e., $|t|=|V(t)|$.

We fix the \textit{set $X$ of variables} as $X=\{x_1,x_2,x_3\dots\}$
and let $X_k=\{x_1,\dots, x_k\}$.
Let $A,B$ be sets.
We let $A(B)=\{a(b)\mid a\in A,b\in B\}$
and $T_\Sigma[S]=T_{\Sigma'}$ where $\Sigma'$ is obtained from $\Sigma$
by $\Sigma'_0=\Sigma_0\cup S$.
The set of \emph{patterns} $T_\Sigma\{X_k\}$ is $T_\Sigma$ plus
all trees in $T_\Sigma[X_k]$ that contain each $x\in X_k$ exactly once,
and in-order of the tree, e.g., $f(a,x_1,f(x_2,x_3))$ is a pattern. 
%
We say that  a tree $t\in T_\Sigma [X_k]$, $k\geq 0$, is a \textit{prefix} of $t'$
 if there are suitable trees  $t_1,\dots,t_k$ such that $t[t_1,\dots, t_k]=t'$, where $t[t_1,\dots, t_k]$ denotes the tree $t[x_i \leftarrow t_i \mid x_i\in X_k]$. By definition, any tree $t$ is a prefix of itself.
For trees $t_1,t_2$, $t_1 \sqcup t_2$ denotes
the \emph{maximal} pattern that is a prefix of   $t_1$ and $t_2$.
E.g. $f(a, g(a)) \sqcup  f(b, b)  = f(x_1,x_2)$
and $ g(g(b)) \sqcup g(a)   =  g(x_1)$.
  
\subsection{Transducers} \label{transducer section}
A \textit{deterministic total top-down tree transducer}  $T$
(or \emph{transducer} for short) is a tuple $T=(Q,\Sigma,\Delta,R, A)$ where
\begin{itemize}
\item $Q$ is a finite set of \emph{states},
\item $\Sigma$ and $\Delta$ are the ranked  \emph{input} and \emph{output alphabets}, respectively, disjoint with $Q$,
\item $R$ is the \emph{set of rules},
\end{itemize}
and
$A\in T_\Delta [Q(X_1)]$ is the \emph{axiom}.
For every $q\in Q$ and $a\in\Sigma_k$, $k\geq 0$,
the set $R$ contains exactly one rule of the form
$q(a(x_1,\dots,x_k)) \to t$, where $t\in T_\Delta [Q(X_k)]$ is also
denoted by $\text{rhs}_T(q,a)$.
The \emph{size} $|T|$ of $T$ is $\sum_{q\in Q,\sigma\in\Sigma}|\text{rhs}_T(q,a)| + |A|$.
We say that  $T$ is \textit{linear} if 
in the axiom and in each right-hand side, every variable occurs at most once.
A transducer $h$ with $|Q|=1$ and axiom of the form $q(x_1)$ is called a \textit{homomorphism}. 
As $h$  has only one state, we write 
$h(a)$ instead of $\rhs{q,a}$.
For $q\in Q$, we denote by $\makro{q}$  the  function from
$T_\Sigma [X]$ to $T_\Delta [Q(X)]$ defined  as follows:
\begin{itemize}
\item 
$\makro{q}(s)=\rhs{q,a}[q(x_i) \leftarrow \makro{q} (s_i) \mid q\in Q, i\in [k]]$
for $s=a(s_1,\dots,s_k)$, $a\in \Sigma_k$, and $s_1,\dots,s_k\in T_\Sigma[X]$ 
\item $\makro{q}(x)= q(x)$ for $x\in X$.
\end{itemize}
We define the function $\makro{T}:T_\Sigma[X] \to
T_\Delta [Q(X)]$ by $\makro{T}(s)=A [q(x_1) \leftarrow \makro{q}(s) \mid q\in Q]$, 
where $s\in T_\Sigma [X]$. For simplicity we also write $T(s)$ instead of  $\makro{T}(s)$.
Two transducers $T_1$ and $T_2$ are \emph{equivalent} if
the  functions $\makro{T_1},\makro{T_2}$ restricted to ground input trees are equal.

\begin{example}\label{transducer example}
Let $\Sigma =\{a^1, e^0 \}$ and $\Delta = \{ f^2, e^0\}$. Consider the transducers 
$T=(Q,\Sigma,\Delta,R, A)$ where $Q=\{q\}$, $A=f(q(x_1), q(x_1))$ and $R$ consists of the rules
\begin{center}
	$\begin{array}{lcl c lcl }
	q (a(x_1)) &\to&  f(q(x_1), q(x_1)) & \text{and} &   q (e) &\to&   e,  \\
	\end{array}$
\end{center}
and $T'=(Q',\Sigma,\Delta,R', A')$
where $Q'=\{q'\}$, $A'=q'(x_1)$ and $R'$ consists of 
 \begin{center}
	$\begin{array}{lcl c lcl }
	q' (a(x_1)) &\to&  f(q'(x_1), q'(x_1)) & \text{and} &   q' (e) &\to&  f( e, e).  \\
	\end{array}$
\end{center}
Clearly, it can be verified that both transducers are equivalent, i.e., both transducers transform a monadic
input tree of height $n$ with nodes labeled by $a$ and $e$ to a full binary tree of height $n+1$
with nodes labeled by $f$ and $e$. Additionally, we remark that 
$T'$ is  a homomorphism.
\end{example}

We say that a  transducer $T$ is \emph{earliest} if $\bigsqcup \{\makro{q}(s)\mid s\in T_\Sigma \}=x_1$ holds for all $q\in Q$. 
Informally this means that
for each state $q\in Q$ there are input trees $s_1,s_2 \in T_\Sigma$ such that $\makro{q}(s_1)$ and $\makro{q} (s_2)$ have different root symbols.
We call $T$ \emph{canonical earliest} if $T$ is earliest and for distinct states $q_1$, $q_2$ it holds that  a ground tree $s$ exists such that $\makro{q_1}(s)\neq \makro{q_2}(s)$.\\
Consider the transducers in Example~\ref{transducer example}. Clearly, $T'$ is not canonical earliest as
 for all input trees $s'$ the root symbol of $\makro{q'}(s')$ is $f$.
The transducer $T$ on the other hand is canonical earliest as  the root symbol of $\makro{q}(e)$ is $e$ while
for all $s\neq e$ the root symbol of $\makro{q}(s)$ is $f$.
We remark that canonicity of $T$ is obvious as $T$  has only one state. 

\begin{proposition}
	\cite[Theorem 16]{DBLP:journals/jcss/EngelfrietMS09}
	For every  transducer $T$ an equivalent  canonical earliest transducer $T'$  can be constructed
	in polynomial time.
\end{proposition}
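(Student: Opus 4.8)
The plan is to proceed in two phases: first transform $T$ into an equivalent \emph{earliest} transducer, and then merge equivalent states to obtain canonicity.

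\medskip
\noindent\textbf{Phase 1 (earliest).}
For each state $q$ I would compute the maximal common output prefix
$p_q=\bigsqcup\{\makro{q}(s)\mid s\in T_\Sigma\}$.
The key observation is that, although the supremum ranges over infinitely many ground inputs, it equals a finite pattern bounded in size by any single output $\makro{q}(s_0)$, and that it satisfies the fixpoint equation
$p_q=\bigsqcup_{a\in\Sigma}\rhs{q,a}\big[q'(x_i)\leftarrow p_{q'}\big]$,
where each recursive call $q'(x_i)$ is read through the already pulled-out prefix $p_{q'}$ of its target state. I would solve this system by a fixpoint iteration that only ever enlarges the pulled prefixes, and argue that it stabilizes after polynomially many rounds with prefixes of polynomial size. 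Given the $p_q$, I would rewrite $T$: a state $q$ whose prefix $p_q$ has $m$ holes is replaced by $m$ fresh states $q^{(1)},\dots,q^{(m)}$ computing the respective hole subtrees; every call $q(x_i)$ occurring in a right-hand side or in the axiom $A$ is replaced by $p_q[x_j\leftarrow q^{(j)}(x_i)\mid j\in[m]]$, and the new right-hand sides of the $q^{(j)}$ are obtained correspondingly. By construction each new state is earliest, and a routine induction on input trees shows that the translation is preserved.

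\medskip
\noindent\textbf{Phase 2 (canonical).}
On the earliest transducer I would compute the equivalence $\equiv$ on states defined by $q_1\equiv q_2$ iff $\makro{q_1}(s)=\makro{q_2}(s)$ for all ground $s$, and then merge each class into a single state. For earliest transducers this equivalence admits a Myhill--Nerode style characterization: $q_1\equiv q_2$ iff for every input symbol $a$ the right-hand sides $\rhs{q_1,a}$ and $\rhs{q_2,a}$ coincide up to replacing each leaf $q'(x_i)$ by the class of $q'$. This yields a partition-refinement procedure: start from the partition identifying all states, and repeatedly split two states whenever some symbol $a$ witnesses a structural difference of their right-hand sides modulo the current partition. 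The procedure stabilizes in polynomially many rounds, and after merging, the resulting transducer has the property that distinct states are separated by some ground input, i.e., it is canonical; earliest-ness is inherited, since merging equivalent states does not alter any $\makro{q}$.

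\medskip
\noindent The main obstacle is Phase~1: one must establish that the common prefixes $p_q$ are finite and of polynomially bounded size, that the fixpoint iteration computing them terminates in polynomial time, and that a single rewriting pass already yields an earliest transducer rather than requiring unboundedly many iterations. The correctness of the rewriting (equivalence preservation) and the convergence and correctness of the partition refinement in Phase~2 are comparatively routine, following the standard pattern for minimization of deterministic devices.
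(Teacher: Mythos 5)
The paper does not prove this proposition at all: it is imported verbatim from \cite{DBLP:journals/jcss/EngelfrietMS09}, and the construction there indeed follows your two-phase outline (pull out common output prefixes to become earliest, then merge semantically equivalent states). So your architecture matches the cited proof, and your Phase~2 is the standard minimization argument, which is sound for earliest transducers. However, Phase~1 as you describe it contains a genuine error, not just the open issue you flag.

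The problem is the \emph{direction} of your fixpoint iteration. The operation $\sqcup$ is the greatest common prefix, i.e.\ the meet in the prefix order, and while the true prefix tuple $p_q=\bigsqcup\{\makro{q}(s)\mid s\in T_\Sigma\}$ does satisfy your system $p_q=\bigsqcup_{a\in\Sigma}\rhs{q,a}[q'(x_i)\leftarrow p_{q'}]$, it is the \emph{greatest} fixpoint of that system, whereas an iteration that starts from trivial prefixes and ``only ever enlarges'' computes the \emph{least} fixpoint. The two differ exactly when $T$ has output-free cycles. Concretely, take $\Sigma=\{a^1,e^0\}$, $\Delta=\{b^0\}$, one state $q$ with rules $q(a(x_1))\to q(x_1)$ and $q(e)\to b$, and axiom $q(x_1)$. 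Every output of $q$ is $b$, so $p_q=b$; but your iteration is stuck at $p_q=x_1$ from the very first round, since $\rhs{q,a}[q(x_1)\leftarrow x_1]\sqcup\rhs{q,e}=x_1\sqcup b=x_1$. Your rewriting then changes nothing, and the resulting transducer is not earliest. The repair --- which simultaneously settles the size-bound obstacle you name --- is to iterate \emph{downward}: since $T$ is total and $\Sigma$ contains some nullary symbol $e$, initialize $p_q^{(0)}=\rhs{q,e}$ and apply your operator repeatedly. The sequence is non-increasing in the prefix order, so every intermediate $p_q^{(n)}$ is a prefix of $\rhs{q,e}$ and hence of size at most $|T|$; it stabilizes after at most $|Q|\cdot|T|$ strict decreases; and its limit is the greatest fixpoint, which a structural induction shows equals the true common prefix tuple (every fixpoint is a prefix of every $\makro{q}(s)$, and the true tuple is itself a fixpoint). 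With that correction, the hole-splitting rewriting of Phase~1 (including the degenerate case of a ground $p_q$, where the state disappears entirely --- precisely the case your iteration never discovers) and your Phase~2 go through.
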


Intuitively, any   transducer $T'$ that
is equivalent to a canonical earliest  transducer $T$ cannot 
generate ``more'' output than $T$ on the same input.
Therefore the following holds.

\begin{lemma}\label{auxiliary total lemma}
	Let $T$ and $T'$ be equivalent  transducers. Let $T$ be  canonical earliest.
	Let $s\in T_\Sigma [X]$.  Then  $V(T'(s)) \subseteq V(T(s))$ and  if $T'(s)[v]=d \in \Delta$ then 
	$T(s)[v]=d$ for all $v\in V(T'(s))$.
\end{lemma}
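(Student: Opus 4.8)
The plan is to reduce the statement about the partial outputs $T(s)$ and $T'(s)$ (which live in $T_\Delta[Q(X)]$) to the equality of ground outputs guaranteed by equivalence, and then to exploit the earliest property of $T$ at exactly one point. First I would establish the routine compositionality identity: for every state $p$, every $s\in T_\Sigma[X]$ and all ground trees $u_i$ substituted for the variables $x_i$ of $s$,
$$\makro{p}(s[x_i \leftarrow u_i]) = \makro{p}(s)[q(x_i) \leftarrow \makro{q}(u_i) \mid q\in Q, i],$$
together with its lifting to $\makro{T}$ through the axiom; this is proved by a straightforward induction on $s$. Combining it with the equivalence of $T$ and $T'$ on ground inputs yields, for every ground choice $\bar u$ of the variables in $s$, a single tree
$$G(\bar u) \;=\; T(s)[q(x_i) \leftarrow \makro{q}(u_i)] \;=\; T'(s)[q(x_i) \leftarrow \makro{q}(u_i)], \qquad (\star)$$
where these substitutions leave every $\Delta$-labeled node of $T(s)$ and of $T'(s)$ untouched and only expand the state-labeled leaves.

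The key structural observation is that $Q(X)$-symbols occur as leaves (rank $0$) in $T_\Delta[Q(X)]$, so a state application $q(x_i)$ can never sit strictly above another node of the same partial output. I would then fix an arbitrary $v\in V(T'(s))$ and let $w$ be the longest prefix of $v$ lying in $V(T(s))$ (well defined since $\lambda\in V(T(s))$). If $w\neq v$, the child $w.j$ of $w$ on the path to $v$ lies in $V(T'(s))\subseteq V(G(\bar u))$ but not in $V(T(s))$, and I distinguish two cases. If $T(s)[w]\in\Delta$, its rank forces $w.j\notin V(G(\bar u))$, contradicting $(\star)$. If $T(s)[w]=q(x_i)$, the subtree of the left-hand side of $(\star)$ at $w$ is $\makro{q}(u_i)$, whereas $w$ is a proper ancestor of $v$ in $T'(s)$, so $T'(s)[w]$ is a $\Delta$-symbol and the subtree of the right-hand side at $w$ has that fixed root; hence $\makro{q}(u_i)$ has a constant $\Delta$-root for all ground $u_i$, so $\bigsqcup\{\makro{q}(u)\mid u\}\neq x_1$, contradicting that $T$ is earliest. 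Thus $w=v$, which proves $V(T'(s))\subseteq V(T(s))$.

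For the labeling claim I would assume $T'(s)[v]=d\in\Delta$ (so now $v=w\in V(T(s))$) and distinguish whether $T(s)[v]\in\Delta$ or $T(s)[v]$ is a state. In the first case $(\star)$ gives $T(s)[v]=G(\bar u)[v]=T'(s)[v]=d$ directly; in the second case the same earliest argument applies with $v$ in place of $w$, the left subtree at $v$ being $\makro{q}(u_i)$ while the right subtree at $v$ has fixed root $d\in\Delta$, again forcing $\makro{q}$ to have a constant root and contradicting earliestness. I expect the only genuinely delicate point to be this repeated appeal to the earliest property: one must locate the node $w$ (respectively $v$) where $T$ still carries a pending state while $T'$ has already committed to a $\Delta$-symbol, and use precisely that $Q(X)$ occurs only at leaves, so that such a commitment of $T'$ always surfaces as a fixed root symbol of the offending subtree. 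The compositionality identity and the surrounding case analysis are otherwise routine.
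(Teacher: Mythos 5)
Your proof is correct and follows essentially the same route as the paper's: both locate the frontier node where $T$ still carries a pending state $q(x_i)$ while $T'$ has already committed to a $\Delta$-symbol, substitute arbitrary ground trees for the variables, and use equivalence on the resulting ground inputs to force $\makro{q}$ to have a constant root symbol, contradicting the earliest property. Your version merely makes explicit some steps the paper leaves implicit (the compositionality identity $(\star)$ and the rank-mismatch case when $T(s)[w]\in\Delta$), which is a welcome but not substantively different elaboration.
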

\section{From Transducers  to   Linear Transducers}	
In the following let
$T$ be a  transducer with the same tuple as defined in Section~\ref{transducer section}
 that is canonical earliest.
 We show that it is decidable in polynomial time whether or not
 there is a linear transducer $T'$ equivalent to $T$  and 
 if so how to construct such a linear transducer.
 
 Before we introduce properties which any  canonical earliest transducer that is equivalent to some linear transducer    must have, consider the following definitions.
   We call a non-ground tree  $c\in T_\Sigma \{X_1\}$  a \emph{context}. 
 Let $s$ be an arbitrary tree.
 For better readability we simply write $cs$ instead of $c[x_1\leftarrow s]$. 
 Let $v_1$ and $v_2$ be distinct nodes with labels $q_1(x_1)$ and $q_2(x_1)$, respectively, that occur in ${T}(c)$,
 where $q_1,q_2\in Q$.
 Then we say that  $q_1$ and $q_2$ \emph{occur pairwise in $T$}. 
 
  Next we show a transducer for which no equivalent linear transducer exists.
 
  \begin{example}\label{zo-twined introducrion example}
    Let $T_0$ be a canonical earliest  transducer with axiom
 	$f(q_1(x_1),q_2(x_1))$  and the following rules.
 	\begin{center}
 		$\begin{array}{lcl c lcl }
 		q_{1} (a(x_1)) &\to&  g(q_1(x_1)) &\quad &   q_{2} (a(x_1)) &\to&  q_2(x_1) \\
 		q_{1} (e) &\to&  e & \quad&   q_{2} (e) &\to&  e \\
 		  q_{1} (e') &\to&  e' &\quad &  q_{2} (e') &\to&  e' \\
 		\end{array}$
 	\end{center}
 	 Clearly,  
 	$T_0(a^n (x_1))=f(g^n (q_1(x_1)) ,q_2(x_1) )$ for all  $n\in \mathbb{N}$. 
 	Assume that a linear transducer $T'$   equivalent to $T_0$ exists. 
 	Clearly $T'(a^{n}  (x_1))$ cannot be of the form $f(t_1, t_2)$, where $t_1$
 	and $t_2$ are some trees as otherwise  either $t_1$ or $t_2$ must be ground
 	due to the linearity of $T'$. This contradicts Lemma~\ref{auxiliary total lemma}  as
 	$T_0$ is canonical earliest. 
 	Hence, $T'(a^{n}  (x_1))=q'_n(x_1)$ holds for all  $n\in \mathbb{N}$ where $q'_n$ is some state.
 	Thus, for all $k\in \mathbb{N}$ a ``partial'' input 
 	tree $s$ exists such that 
 	 the height difference of $T_0(s)$ and $T'(s)$ is greater than $k$.
 	It is well known that the height difference of the trees generated by equivalent transducer on the same partial input tree is bounded~\cite{DBLP:journals/ijfcs/Maneth15}.   Hence, $T'$ cannot exist. 
 \end{example}

 The reason why  no  linear transducer  that  is equivalent to the  transducer $T_0$ in Example~\ref{zo-twined introducrion example} exists, is because $q_1$ and $q_2$ occur pairwise and 
 $\makro{q_1}(a(x_1))$ and $\makro{q_2} (a(x_1))$ are ``loops of which at least one
 generates output''. This leads us to our first property, which we call \emph{zero output twinned}.
 
Let $c$ be a context and $q_1$ and $q_2$ be pairwise occurring states.
We write
$(q_1,q_2) {\vdash}^{c} (q_1', q_2')$ if
a node with label $q_i'(x_1)$ occurs in
$\makro{q_i}(c)$ for $i=1,2$  and either $q_1$ or $q_2$ \emph{generate output on input} $c$, i.e.,
either $\makro{q_1}(c) \neq q'_1(x_1)$ or  $\makro{q_2}(c) \neq q'_2(x_1)$ holds.

\begin{definition}
	Let $q_1$ and $q_2$ be   pairwise occurring states of $T$.
We say that $T$ is \emph{zero output twinned} if  no context $c\neq x_1$  exists such that
$(q_1,q_2) {\vdash}^{c} (q_1, q_2)$ holds.
\end{definition}

 In other words, $T$ is zero output twinned if 
 there exists no context $c$  such that 
$\makro{q_1}(c)$ and  $\makro{q_2}(c)$ are loops and at least one of those loops generates output.

\begin{lemma}\label{zero output twinned}
	If $T$ is equivalent to a linear transducer, 
	then $T$ is zero output twinned. 
	It can be decided in time polynomial in $|T|$ whether or not $T$ is zero output twinned.
\end{lemma}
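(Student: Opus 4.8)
The statement has two parts: \emph{necessity} (equivalence to a linear transducer forces the zero output twinned property) and \emph{decidability} in polynomial time. I would prove them separately, and the necessity part generalizes the reasoning of Example~\ref{zo-twined introducrion example}. For necessity I argue by contraposition. Assume $T$ is \emph{not} zero output twinned, so there are pairwise occurring states $q_1,q_2$ and a context $c\neq x_1$ with $(q_1,q_2)\vdash^{c}(q_1,q_2)$; without loss of generality let $q_1$ generate output on $c$, i.e. $\makro{q_1}(c)\neq q_1(x_1)$ while $\makro{q_1}(c)$ still contains a node labelled $q_1(x_1)$ and $\makro{q_2}(c)$ contains a node labelled $q_2(x_1)$. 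Since $q_1,q_2$ occur pairwise, fix a context $d$ in whose output $T(d)$ two \emph{distinct} nodes $v_1,v_2$ carry the labels $q_1(x_1)$ and $q_2(x_1)$, and let $w$ be their lowest common ancestor. Iterating the loop I would first show, by induction on $n$, that $\makro{q_1}(c^n\cdot x_1)$ contains a node labelled $q_1(x_1)$ whose depth grows linearly in $n$ (the recurring occurrence cannot sit at the root, for otherwise $\makro{q_1}(c)=q_1(x_1)$ would produce no output), whereas $\makro{q_2}(c^n\cdot x_1)$ keeps a node labelled $q_2(x_1)$. Substituting these into $T(d)$ shows that, below $w$, the partial output $T(d\cdot c^n\cdot x_1)$ has two sibling subtrees, each still depending on the single remaining input leaf $x_1$, and the one under $v_1$ has height at least linear in $n$.

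Now let $T'$ be a linear transducer equivalent to $T$. The key structural fact about linear transducers is that each input node is processed by at most one state, so on the partial input $d\cdot c^n\cdot x_1$ the output $T'(d\cdot c^n\cdot x_1)$ contains at most one node labelled by a state applied to $x_1$; all its remaining leaves are ground. By Lemma~\ref{auxiliary total lemma}, $V(T'(d\cdot c^n\cdot x_1))\subseteq V(T(d\cdot c^n\cdot x_1))$, and wherever $T'(d\cdot c^n\cdot x_1)$ carries a $\Delta$-symbol, $T$ carries the same symbol there, because $T$ is canonical earliest. Hence at most one of the two $x_1$-dependent branches below $w$ can be carried by $T'$ to a state leaf; on the other branch $T'$ must either stop early or commit ground symbols. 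A premature ground commitment contradicts either Lemma~\ref{auxiliary total lemma} (if the committed symbol differs from the one $T$ is forced to place there) or equivalence on a ground completion $d\cdot c^n\cdot s$ (since the corresponding subtree of $T$ genuinely depends on $s$). Stopping early on the tall $v_1$-branch forces $T'(d\cdot c^n\cdot x_1)$ to lag behind $T(d\cdot c^n\cdot x_1)$ in height by an amount growing with $n$, contradicting the known bound on the height difference of equivalent transducers on partial inputs~\cite{DBLP:journals/ijfcs/Maneth15}. Either way we reach a contradiction, so $T$ must be zero output twinned.

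For the decidability part I would reduce the property to a reachability-plus-cycle question on a product graph over state pairs. First, the pairwise occurring pairs are exactly those reachable, along a common input path, from the pairs of distinct state leaves of the axiom $A$; this is computed by a product construction on $Q\times Q$ in polynomial time. Next I build a finite directed graph on $Q\times Q$ whose edges correspond to \emph{unit} contexts: for every input symbol $a$ and every choice of hole position $j$, there is an edge $(q_1,q_2)\to(q_1',q_2')$ whenever $q_i'(x_j)$ occurs in $\rhs{q_i,a}$ for $i=1,2$, and this edge is labelled \emph{output} unless $\rhs{q_i,a}$ is exactly the single state leaf $q_i'(x_j)$ for both $i$ (the other children are always completable to ground output, so the flag is determined solely by the right-hand sides). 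Composing edges realizes precisely the relation $\vdash^{c}$ for path contexts $c$, and an output-labelled step corresponds to generation of output. Consequently $T$ fails to be zero output twinned iff some pairwise occurring pair lies in a strongly connected component of this graph that contains an output-labelled edge with both endpoints in the component (equivalently, lies on a closed walk of length at least one that uses an output edge). This is decided by a standard SCC decomposition, so the whole test runs in time polynomial in $|T|$.

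The main obstacle is the necessity direction, specifically the \emph{asymmetry} that only one of $q_1,q_2$ need generate output: I must rule out that $T'$ cleverly absorbs the short $v_2$-branch into ground output and devotes its single state thread to the tall $v_1$-branch. Handling this requires the combination of Lemma~\ref{auxiliary total lemma} (to forbid wrong ground symbols), the genuine input dependence of the $v_2$-subtree (to forbid premature grounding), and the height-difference bound (to convert an early stop into an unbounded lag); making these three sub-arguments fit together for every $n$ simultaneously is the delicate point. The decidability direction, by contrast, is routine once the product graph and its output labelling are set up correctly, the only care being the correct treatment of distinct output nodes when $q_1=q_2$.
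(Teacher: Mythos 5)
Your necessity argument is sound and is essentially the paper's own: pump the loop $c$ to make the $q_1$-branch arbitrarily deep, use linearity to conclude that $T'(d\cdot c^n\cdot x_1)$ has at most one state leaf, and then derive a contradiction from Lemma~\ref{auxiliary total lemma}, the earliest property (no premature grounding of an input-dependent branch), and a height argument. The only real difference is the final step: you invoke the bounded height-difference result of \cite{DBLP:journals/ijfcs/Maneth15} as a black box, whereas the paper's appendix proof is self-contained --- it observes that the unique state leaf $v$ of $T'$ must be an ancestor of both $v_1$ and $v_2$, and that on a ground completion $s$ the tree $\makro{q'}(s)$ has height at most $\text{height}(s)\cdot\text{maxrhs}_{T'}$, which for large $n$ is beaten by $\text{height}(\makro{q_1}(c^n s))\geq n$. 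Both routes work; the paper itself uses your citation-style shortcut in Example~\ref{zo-twined introducrion example}.

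The decidability half, however, has a genuine gap: your characterization of the pairwise occurring pairs is wrong. You seed the product construction only with ``pairs of distinct state leaves of the axiom $A$'' and close under parallel moves into a common variable. This misses pairs that originate from \emph{copying inside a right-hand side} reached by a \emph{single} occurrence of a state. Concretely, take the axiom $A=q_0(x_1)$ and the rule $q_0(a(x_1))\to f(q_1(x_1),q_2(x_1))$, with $q_1,q_2$ as in Example~\ref{zo-twined introducrion example}: then $(q_1,q_2)$ is pairwise occurring (witnessed by the context $a(x_1)$) and the transducer is not zero output twinned, yet your seed set is empty (the axiom has a single leaf), so your algorithm computes $\text{PO}=\emptyset$ and wrongly answers ``zero output twinned.'' The paper's construction of $\text{PO}$ avoids this by using three seeding/closure conditions: (a) pairs of distinct leaves of the axiom, (b) pairs of distinct leaves carrying the \emph{same} variable $x_i$ inside the right-hand side of \emph{any} rule, and (c) the inductive closure you describe; your proposal omits (b), and (b) cannot be recovered from (a)+(c) because your pair graph never tracks singleton reachability. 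Once $\text{PO}$ is computed correctly, your SCC-with-output-edge criterion is equivalent to the paper's test (reachability of $[q_1,q_2]$ from $(q_1,q_2)$ in a doubled graph), and the unit-context edge labelling you define does match the relation $\vdash^{c}$, so the rest of your algorithm is fine.
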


Consider the following transducer. Though this transducer
is zero output twinned  no equivalent linear transducer exists.

\begin{example}\label{lca-conform introducrion example}
	Let $\Sigma=\{a^2,e^0\}$ and $\Delta=\{f^3,e_0^0, e_1^0, e_2^0, e_3^0\}$.
Consider the canonical earliest transducer $T_0=(Q,\Sigma,\Delta,R, A)$.
Let $A=q_0(x_1)$ be the axiom of $T_0$ and
\begin{center}
 $\begin{array}{lcl c lcl }
q_{0} (a(x_1,  x_2)) \to f(q_1( x_1),q_2(x_2), q_3(x_1))&\quad &   q_{0} (e) &\to&  e_0 \\
q_{i} (a(x_1,  x_2)) \to f(e_i,e_i, e_i)& \quad &   q_{i} (e) &\to&  e_i,  \\
\end{array}$
\end{center}
where $i=1,2,3$, be the rules of $T_0$.
Clearly, it holds that $T_0(a(x_1,x_2))= f(q_1(x_1),q_2(x_2), q_3(x_1))$. 
Note that as no loops occur in $T_0$, $T_0$ is zero output twinned.
Assume that a linear transducer $T'$  equivalent to $T_0$ exists.
Analogously to Example~\ref{zo-twined introducrion example}, Lemma~\ref{auxiliary total lemma} and the 
 linearity of $T'$ yield that  either $T'(a(x_1,x_2))=q'(x_1)$ or  $T'(a(x_1,x_2))=q'(x_2)$ where $q'$ is some state.
 
W.l.o.g. consider the former case. Let $s_1, s_2$ be distinct trees. Then 
$\makro{q'}(s_1)=T'(a(s_1,s_2)) = T_0(a(s_1,s_2))= f(\makro{q_1}(s_1),\makro{q_2}(s_2), \makro{q_3}(s_1))$,
which means that the tree generated by $q_2$ cannot depend on its input $s_2$.
This contradicts the earliest property of $T_0$.  Hence, $T'$ cannot exist. 
\end{example}

In the following we show that the situation described in Example~\ref{lca-conform introducrion example} occurs if 
the transducer does not have the following property which we call \emph{lowest common ancestor conform}.

First, we introduce some terminology.
Let $v, v'$ be  nodes. Recall that by
definition nodes are strings. Then $v$ is an \emph{ancestor} $v'$ if $v$ is a prefix of $v'$. 
Furthermore, by the \emph{lowest common ancestor} of nodes $v_1,\dots, v_n$ we refer to the longest common prefix of those
nodes.  
Let $s\in T_\Sigma  \{X_k\}$.
  For $1\leq i\leq k$, we denote by $\nu_T (s,x_i)$ the lowest common ancestor of all leaves of $T(s)$ that have
labels of the form $q(x_i)$ where $q\in Q$. If $T$ and $s$ are clear from context we just write
$\nu (x_i)$.

\begin{definition}
The transducer $T$ is called \emph{lowest common ancestor conform} \emph{(lca-conform} for short\emph{)} if  for an arbitrary  input tree $s\in T_\Sigma \{X_k\}$, $k\in\mathbb{N}$, 
the output  tree ${T}(s)$ is lca-conform. 
An output tree  ${T}(s)$  is \emph{lca-conform} 
if for all $x_i$ such that $\nu_T (s,x_i)$ is defined, no leaf with label of the form $q(x_j)$, $j\neq i$, occurs in $T(s)/ \nu_T (s,x_i)$.
\end{definition}

 For instance,  $T_0$ in Example~\ref{lca-conform introducrion example}  is  not lca-conform because 
 $\nu_{T_0} (a(x_1,x_2)  ,x_1)$ is the root of $T_0(a(x_1,x_2))= f(q_1(x_1),q_2(x_2), q_3(x_1))$ and this tree obviously contains $q_2(x_2)$.
On the other hand, the transducer in Example~\ref{zo-twined introducrion example} is lca-conform because its input trees are monadic.

\begin{lemma}\label{lca-conform}
	If $T$ is equivalent to a linear transducer, then $T$ is  lca-conform.
		It can be decided in time polynomial in $|T|$ whether or not $T$ is lca-conform.
\end{lemma}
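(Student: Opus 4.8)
The plan is to prove the two assertions separately: first that equivalence to a linear transducer forces lca-conformity, then the polynomial-time decidability.

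For the implication I argue by contradiction, generalizing Example~\ref{lca-conform introducrion example}. Suppose the canonical earliest transducer $T$ is equivalent to a linear transducer $T'$ but is not lca-conform. Then there is a pattern $s\in T_\Sigma\{X_k\}$, a variable $x_i$ with $\nu:=\nu_T(s,x_i)$ defined, and a leaf labeled $q_j(x_j)$ with $j\neq i$ occurring in $T(s)/\nu$. Since $\nu$ is the lca of the $x_i$-leaves yet strictly dominates a leaf carrying a different variable, $x_i$ must occur at least twice in $T(s)$, say at nodes $\ell_1,\ell_2$ lying in different subtrees below $\nu$. I record two structural facts. First, because $T'$ is linear, each variable occurs at most once in $T'(s)$: by induction on $s$, no child position is referenced by two state-calls in any right-hand side or in the axiom, so each input leaf is processed by at most one state and contributes at most one state-call to $T'(s)$. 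Second, by Lemma~\ref{auxiliary total lemma} we have $V(T'(s))\subseteq V(T(s))$ with matching $\Delta$-labels, and on ground inputs $T$ and $T'$ agree exactly; hence for ground $r_1,\dots,r_k$ the trees $T(s[r_1,\dots,r_k])$ and $T'(s[r_1,\dots,r_k])$ are literally equal, and both contain the node $\nu$ together with the same subtree rooted there.

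The decisive step exploits the earliest property: the states at $\ell_1,\ell_2$ and at the $x_j$-leaf are earliest, so varying the corresponding ground subtree genuinely changes the output at that position. Let $w_i$ be the unique node of $T'(s)$ carrying $x_i$; it must exist, for otherwise $T'(s[r_1,\dots,r_k])$ would be independent of $r_i$ while $T(s[r_1,\dots,r_k])$ varies with $r_i$ at $\ell_1$, contradicting ground equality. Now I pin down the position of $w_i$ relative to $\nu$. If $w_i$ were disjoint from $\nu$, or lay strictly below $\nu$ inside a single branch, then one of $\ell_1,\ell_2$ would not be a descendant of $w_i$, so in $T'(s[r_1,\dots,r_k])$ the subtree there would be independent of $r_i$, whereas in $T(s[r_1,\dots,r_k])$ it equals $\makro{q}(r_i)$ for an earliest state $q$ and varies with $r_i$ \emph{(contradiction)}. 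Therefore $w_i$ is an ancestor of $\nu$. But then in $T'(s[r_1,\dots,r_k])$ the entire subtree rooted at $\nu$ lies inside the subtree $\makro{p_i}(r_i)$ grown from $w_i$, hence depends only on $r_i$ and never on $r_j$; in $T(s[r_1,\dots,r_k])$, however, that same subtree contains the call $\makro{q_j}(r_j)$ at the $x_j$-leaf, which varies with $r_j$ since $q_j$ is earliest. Choosing two values of $r_j$ yielding different outputs contradicts ground equality at $\nu$. As every case is contradictory, $T$ is lca-conform. I expect this case analysis, fixing $w_i$ relative to $\nu$ and turning the earliest property into a genuine dependence on the input, to be the technical heart of the argument.

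For decidability I reduce lca-conformity to a reachability condition over a structure of size polynomial in $|T|$. Non-conformity means that some reachable state $q$ and symbol $a$ produce, in $t=\rhs{q,a}$, a copied child $c$ (two state-calls to $x_c$) whose lca $\nu_0$ in $t$ dominates an output position that can be driven to reference an input leaf different from the copied one. This happens in one of two locally checkable ways. Either $t/\nu_0$ already contains a call to a different child $c'\neq c$, in which case choosing the subtree at $c$ to be the single variable $x_i$ and at $c'$ a variable $x_j$ exhibits the violation; or all calls below $\nu_0$ go to the same child $c$, and then one must additionally verify that the copying states below $\nu_0$ can, on a suitable subtree at $c$, drive two copies to two distinct leaves (equivalently, jointly generate both an $x_i$- and an $x_j$-leaf below $\nu_0$). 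It is exactly the latter possibility that fails for monadic inputs, as in Example~\ref{zo-twined introducrion example}, where every copy is forced to the unique leaf; hence the check must consult which branching input symbols and which leaf-reaching behaviours of the copying states are reachable.

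Both ingredients, namely reachability of a state from the axiom and reachability of a ``copies-reach-two-distinct-leaves'' configuration from a pair of states, are least-fixed-point computations over the polynomially many states and right-hand-side positions of $T$, so they terminate in polynomial time; and whenever a counterexample pattern exists, a counterexample of polynomial size exists, since it can be assembled from one witnessing right-hand side together with short realizing subtrees at the relevant children. The main obstacle in this second part is making the characterization of non-conformity simultaneously sound, complete, and purely local and reachability-based, in particular correctly separating the genuinely branching case from the harmless monadic copying of Example~\ref{zo-twined introducrion example}.
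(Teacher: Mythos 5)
Your first half --- that equivalence to a linear transducer forces lca-conformity --- is correct, and it is essentially the paper's own proof run by contradiction rather than directly: your node $w_i$ is exactly the node $v$ with $T'(s)[v]=q'(x_i)$ from which the paper starts, and both arguments combine Lemma~\ref{auxiliary total lemma}, ground equality of equivalent transducers, linearity of $T'$, and the earliest property of the states of $T$ to show that this node must dominate all $x_i$-leaves of $T(s)$ (hence $\nu$), while nothing below it may mention another variable. No objection there.

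The decidability half, however, has a genuine gap: the local characterization of non-conformity you reduce to is not sound. Your case (B) test --- ``all calls below $\nu_0$ go to the same child, and the copying states can drive two copies to two distinct leaves, equivalently jointly generate both an $x_i$- and an $x_j$-leaf below $\nu_0$'' --- fires on lca-conform transducers. Concretely, take the axiom $f(p_1(x_1),p_2(x_1))$ with rules $p_1(a(x_1,x_2))\to q(x_1)$, $p_2(a(x_1,x_2))\to q'(x_2)$, $p_1(e)\to e_1$, $p_2(e)\to e_2$, where $q,q'$ are linear earliest states distinguished on $e$ (say, relabeling states producing $e_1$ resp.\ $e_2$ at the leaves); this transducer can be taken canonical earliest. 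Then $T(a(x_1,x_2))=f(q(x_1),q'(x_2))$: the two copies created in the axiom are driven to two distinct input leaves, and an $x_1$-leaf and an $x_2$-leaf both occur below their lca (the output root), so your test reports a violation. Yet $T$ is lca-conform --- each $\nu(x_i)$ is a single leaf here --- and $T$ is in fact equivalent to the linear transducer with axiom $p_0(x_1)$ and rules $p_0(a(x_1,x_2))\to f(q(x_1),q'(x_2))$, $p_0(e)\to f(e_1,e_2)$. A genuine violation requires three calls in the configuration ``two calls on the \emph{same} input leaf whose lca dominates a call on a \emph{different} input leaf''; ``two copies reaching two distinct leaves'' is strictly weaker, and the two conditions differ exactly on examples like this one.

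Your top-level reduction is also incomplete: it asks for a copied child inside a single $\rhs{q,a}$, but changing the second rule above to $p_2(a(x_1,x_2))\to g(q'(x_2),q(x_1))$ gives a genuinely non-conform transducer (the lca of the two $x_1$-calls in $T(a(x_1,x_2))=f(q(x_1),g(q'(x_2),q(x_1)))$ is the root, which dominates the $x_2$-call) in which no right-hand side copies any child --- the two calls on the same leaf come from the two pairwise occurring states $p_1,p_2$ created in the axiom. This is precisely why the paper's procedure first proves a localization lemma (any violation already appears on a tree of the form $c\,a[l_1,\dots,l_k]$, Lemma~\ref{decision aux}) and then computes, besides the set $\text{PO}$ of pairwise occurring pairs, the ternary relation of \emph{triplets} $((q_1,q_2),q_3)$ --- a pair together with a third state sitting below the lca of the pair's occurrences in some $T(c)$ --- so that the final check on right-hand sides at a branching symbol (three cases: single state, pair, triplet) can tell the conform example apart from the non-conform one. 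You flagged this soundness/completeness tension yourself as the main obstacle, but the proposal as written does not overcome it.
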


\begin{proof}
	We prove  that  $T$ is  lca-conform if $T$ is equivalent to a linear transducer~$T'$.
	For the second part of our statement we refer to the Appendix.
	
	Let   $s\in T_\Sigma \{X_k\}$.
	Let $v$ be a node such that ${T'}(s)[v]= q'(x_i)$ where $q'$ is some state  and $1\leq i \leq k$. 
	First, we show that the subtree ${T}(s)/v$  contains no leaves that have labels of the form $q(x_j)$ with $i\neq j$.	
	Note that due to Lemma~\ref{auxiliary total lemma},  ${T} (s) / v$ is  defined.
	
	Assume to the contrary that  ${T}(s)/v$ contains  some node labeled $q(x_j)$, i.e., ${T}(s)[ \hat{v}]=q(x_j)$ for some descendant
	$\hat{v}$ of $v$.
	Consider the trees ${T'}(s[x_i\leftarrow s'])$ and ${T}(s[x_i\leftarrow s'])$ where $s'$ is some ground tree.
	 Clearly   ${T}(s[x_i\leftarrow s']) [\hat{v}]=q(x_j)$ and hence ${T}(s[x_i\leftarrow s'])/v$ is not ground. However ${T'}(s[x_i\leftarrow s'])/v= \makro{q'} (s')$ is ground. This contradicts Lemma~\ref{auxiliary total lemma}.
	Thus we deduce that for all $v\in V(T'(s))$ if  $T'(s) / v =q'(x_i)$ then  
	only  leaves with labels in $\Delta$ or with labels  of the form $q(x_i)$, $q\in Q$, occur in $T(s)/v$  (*).
	
	We now show that if leaves with label of the form $q(x_i)$ occur in $T(s)$ then $v$ is an ancestor node of all those 
	leaves. Assume to the contrary that some leaf $\tilde{v}$ with label of the form $q(x_i)$ occurs in $T(s)$ that is not a descendant of $v$. Due to Lemma~\ref{auxiliary total lemma} it is clear that either $\tilde{v}\notin V(T'(s))$
	or $T'(s) [\tilde{v}]$ is not labeled by a symbol in $\Delta$. In both cases, some ancestor of $\tilde{v}$ must
	have label of the form $q'(x_j)$ in $T'(s)$ otherwise $T$ and $T'$ cannot be equivalent.
	Note that by definition any node is an ancestor of itself.  As  $\tilde{v}$ is not a descendant of $v$,  ${T'}(s)[v]= q'(x_i)$ and $T'$ is linear, we conclude that $x_i\neq x_j$ which contradicts (*). 
	Thus,  
	we conclude that  $v$  is a common ancestor of all  leaves with label of the form $q(x_i)$ in $T(s)$.
	
	As ${T}(s)/v$  does not contain leaves that have label of the form $q(x_i)$  neither does $T(s)/ \nu (x_i)$
	as $v$ is an ancestor of $\nu (x_i)$.
	 Hence, ${T}(s)$ is lca-conform.  \qed
\end{proof}

\noindent
 The next lemma is used to show that the following construction terminates.
\begin{lemma}\label{lca-tree bound}
	Let $T$ be lca-conform and zero output twinned.
	Let $c$ be a context. Then $\text{height}({T}(c)/ \nu (x_1) )\leq(|Q|^2+1)\eta$, where                     
	 $\eta=\text{max} \{\text{height}(\text{rhs} (q,a)) \mid q\in Q, a\in \Sigma \}$.
\end{lemma}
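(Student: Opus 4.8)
The plan is to reduce the claim to a statement about a multi-variable pattern, so that lca-conformity becomes applicable below $\nu$, and then to bound the spine below $\nu$ by a pigeonhole argument on pairs of states that exploits the zero output twinned property. Concretely, let $u_0,\dots,u_\ell$ be the spine of $c$, i.e. the path from the root to the unique occurrence of $x_1$. Since $x_1$ occurs only once, every subtree hanging off this spine is ground; replacing each such maximal ground subtree, together with the hole, by fresh variables (taken in order of occurrence) yields a pattern $\hat c\in T_\Sigma\{X_k\}$ in which the hole becomes some variable $x_h$ and $c=\hat c[x_j\leftarrow g_j\mid j\neq h]$ for suitable ground trees $g_j$. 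By the defining substitution property of $\makro{\cdot}$ we get $T(c)=T(\hat c)[q(x_j)\leftarrow\makro{q}(g_j)\mid j\neq h,\,q\in Q]$, so the $x_h$-leaves keep their nodes and $\nu:=\nu_T(\hat c,x_h)=\nu_T(c,x_1)$. As $T$ is lca-conform, $T(\hat c)/\nu$ contains no leaf $q(x_j)$ with $j\neq h$; hence the substitution changes nothing below $\nu$ and $T(c)/\nu=T(\hat c)/\nu$. In particular, every right-hand side applied along the spine strictly below $\nu$ may call only the spine child, since a call on any other child would put an $x_j$-leaf ($j\neq h$) below $\nu$. Thus below $\nu$ all output is generated along the spine and the arbitrarily tall ground subtrees of $c$ are deleted or sit above $\nu$.

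It then remains to bound $\text{height}(T(\hat c)/\nu)$. If at most one leaf $q(x_h)$ occurs then $\nu$ is that leaf and the height is $0$, so assume there are at least two. Since $\hat c$ has a single hole, all such leaves are hole copies reached by spine chains of full length $\ell$, and at least two of them diverge exactly at $\nu$. Let $P$ realize the longest path of $T(\hat c)/\nu$ and let $P'$ be any chain diverging from $P$ at $\nu$; write $q^{(j)},r^{(j)}$ for the states they carry at spine level $j$ below $\nu$. Each pair $(q^{(j)},r^{(j)})$ occurs pairwise, as both states label $x_1$-leaves of $T(d_j)$, where $d_j$ is $c$ truncated to spine depth $j$ with a fresh hole. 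Call $j$ an \emph{output level} if $\makro{q^{(j)}}$ or $\makro{r^{(j)}}$ emits an output symbol when reading the spine symbol at $u_j$. There are at most $|Q|^2$ output levels: otherwise two of them, at depths $j<j'$, carry the same pair, and the single-hole sub-context $c''\neq x_1$ of $c$ between $u_j$ and $u_{j'}$ witnesses $(q^{(j)},r^{(j)})\vdash^{c''}(q^{(j)},r^{(j)})$ while emitting output, contradicting that $T$ is zero output twinned. That $c''$ emits output only along the spine, so that this is a genuine such loop, is again ensured by lca-conformity.

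Finally I would add up the height. The longest path of $T(\hat c)/\nu$ follows $P$ down the spine and, at its last spine level, possibly dives into a pure $\Delta$-subtree of the corresponding right-hand side. A spine level emitting no output contributes $0$ to this path, and every other level contributes at most $\eta$; since the levels at which $P$ emits output are among the at most $|Q|^2$ output levels, the spine part has length at most $|Q|^2\eta$ and the final dive adds at most $\eta$, giving $\text{height}(T(\hat c)/\nu)\leq(|Q|^2+1)\eta$. I expect the first step to be the main obstacle: without lca-conformity the subtree below $\nu$ could contain calls on other, arbitrarily tall, ground input subtrees and its height would be unbounded. Lca-conformity is precisely what confines all non-hole calls above $\nu$ and reduces the problem to the spine, where the zero output twinned property can finally be used.
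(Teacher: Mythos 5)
Your proof is correct and follows essentially the same route as the paper's: lca-conformity is used to argue that everything hanging off the spine is irrelevant below $\nu$ (the paper does this via ``origins'' and a sequence of truncated inputs, you via a multi-variable pattern and the substitution property), and then a pigeonhole over the at most $|Q|^2$ pairs of pairwise-occurring states carried by two chains diverging at $\nu$, together with zero output twinnedness, bounds the number of output-emitting spine levels. The only cosmetic difference is where the extra $\eta$ is charged (the paper charges it to the right-hand side of the application creating $\nu$, you to the final dive into a ground part of a right-hand side); the substance of the argument is the same.
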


\subsection{Constructing a  Linear Transducer}\label{construction}
Subsequently, we give a construction which yields  a linear transducer $T'$ equivalent to a given canonical earliest transducer $T$ 
if $T$ is zero output twinned and lca-conform.

Assume in the following that  the axiom $A$ of $T$ is  not ground (the case that $A$ is ground is trivial). 
By Lemma~\ref{auxiliary total lemma} the output of $T$ is ``ahead'' of the output of any equivalent transducer on the same input tree. Therefore, the basic idea is the same as in~\cite{DBLP:journals/tcs/EngelfrietMS16} and \cite{DBLP:conf/icalp/ManethS20}: we store the ``aheadness'' of $T$ compared to $T'$ in the  states of $T'$.
The  states of $T'$ are  of the form $\langle t \rangle$, where $t$ is a tree in $T_\Delta [Q]\setminus T_\Delta$
that has a height bounded according to Lemma 4. 
Furthermore, we demand that the root of $t$
 is the lowest common ancestor of all leaves of $t$ labeled by  symbols in $Q$.
For  such a state $\langle t\rangle$ 
we write ${t} [\leftarrow x]$ instead of $t[q\leftarrow q(x) \mid q\in Q]$
for  better readability. 

We define  $T'$ inductively.
 We define its axiom as $A'=c \langle t_0 \rangle (x_1)$, where $c$ is a context and $t_0\in T_\Delta [Q]$  such that  $c  t_0 [\leftarrow x_1]   =A$. 
Note that $ t_0 [\leftarrow x_1] $ is the subtree  of $A$ rooted at the lowest common ancestor of all nodes
with labels of the form $q(x_1)$.
We now define the rules of $T'$.
Let  $\langle t \rangle$ be a state of $T'$ where $t=p[x_i\leftarrow q_i \mid q_i\in Q, i\in [n]]$ and $p\in T_\Delta [X_n]$. Let $a\in \Sigma_k$ with $k\geq 0$.
Then, we define
$\langle t \rangle (a(x_1,\dots x_k)) \to p'[x_j\leftarrow \langle t_j \rangle (x_j)\mid j\in [k]  ]$
where $p'\in T_\Delta [X_k]$  such that
\[ p'[x_j\leftarrow t_j[\leftarrow  x_j]\mid j\in [k]  ]= p[ x_i \leftarrow \rhs{q_i,a}  \mid i\in [n]]\]
and $t_j [\leftarrow  x_j]$ is the subtree of
$ p[ x_i \leftarrow \rhs{q_i,a}  \mid i\in [n]]$ rooted at the lowest common ancestor of all nodes
 with labels of the form $q(x_j)$. 
In the following we show that  states of $T'$ indeed  store the ``aheadness'' of $T$ compared to $T'$.

\begin{lemma}\label{aux lemma}
	Assume that in our construction some state $\langle t \rangle$  has been defined at some point.
	Then a context $c$   exists such that ${T}(c)/\nu (x_1)=t[\leftarrow x_1]$.
\end{lemma}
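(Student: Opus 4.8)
The plan is to argue by induction on the order in which the states $\langle t \rangle$ are introduced by the construction. The central tool is the composition identity $T(c[x_1\leftarrow c']) = T(c)[q(x_1) \leftarrow \makro{q}(c') \mid q \in Q]$, valid for any context $c$ and any $c'\in T_\Sigma\{X_1\}$, which follows by a routine structural induction from the definition of $\makro{\cdot}$. Throughout I would use that, since $c$ is a context, every variable-leaf of $T(c)$ has the form $q(x_1)$ and all of them lie inside the subtree $T(c)/\nu(x_1)$, while outside this subtree $T(c)$ is ground.

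For the base case consider the state $\langle t_0 \rangle$ of the axiom. I would take the context $c=x_1$. Since $\makro{q}(x_1)=q(x_1)$ we get $T(x_1)=A$, and $\nu_T(x_1,x_1)$ is by definition the lowest common ancestor of all $q(x_1)$-leaves of $A$; hence $T(x_1)/\nu(x_1)$ is exactly the subtree of $A$ rooted at that node, which the construction declared to be $t_0[\leftarrow x_1]$.

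For the inductive step let $\langle t \rangle$ with $t=p[x_i\leftarrow q_i\mid i\in[n]]$ be a state for which, by the induction hypothesis, a context $c$ with $T(c)/\nu(x_1)=t[\leftarrow x_1]=p[x_i\leftarrow q_i(x_1)]$ exists. Fix $a\in\Sigma_k$ and a freshly created successor state $\langle t_j\rangle$, $j\in[k]$. I would first analyze the auxiliary input $\hat s=c[x_1\leftarrow a(x_1,\dots,x_k)]\in T_\Sigma\{X_k\}$: using $\makro{q}(a(x_1,\dots,x_k))=\rhs{q,a}$ together with the composition identity gives $T(\hat s)=T(c)[q(x_1)\leftarrow\rhs{q,a}]$, so the subtree of $T(\hat s)$ rooted at $\nu_T(c,x_1)$ equals $p[x_i\leftarrow\rhs{q_i,a}\mid i\in[n]]$, which is precisely the tree whose lca-subtree of $q(x_j)$-leaves the construction calls $t_j[\leftarrow x_j]$. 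Consequently $t_j[\leftarrow x_j]=T(\hat s)/\nu_T(\hat s,x_j)$, and here the hypothesis that $T$ is \emph{lca-conform} (Lemma~\ref{lca-conform}) is essential: it guarantees that this subtree contains no leaf $q(x_l)$ with $l\neq j$, so that $t_j$ is a well-defined element of $T_\Delta[Q]$ in the first place.

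Finally I would set $c_j:=c[x_1\leftarrow a(s_1,\dots,s_{j-1},x_1,s_{j+1},\dots,s_k)]$ with arbitrary ground trees $s_l$ ($l\neq j$); this is again a context. Applying the composition identity and $\makro{q}(a(s_1,\dots,x_1,\dots,s_k))=\rhs{q,a}[q'(x_j)\leftarrow q'(x_1),\ q'(x_l)\leftarrow\makro{q'}(s_l)\ (l\neq j)]$ shows that $T(c_j)/\nu(x_1)$ is obtained from $t_j[\leftarrow x_j]$ by relabelling each $q(x_j)$ to $q(x_1)$ and replacing every $q(x_l)$, $l\neq j$, by the ground tree $\makro{q}(s_l)$; by the preceding lca-conformity step the latter replacements are vacuous, so $T(c_j)/\nu(x_1)=t_j[\leftarrow x_1]$, as required. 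The genuinely delicate point, and the part I expect to need the most care, is exactly this interaction: one must verify that the newly generated $q(x_1)$-leaves of $T(c_j)$ are precisely the former $q(x_j)$-positions, so that $\nu_T(c_j,x_1)=\nu(x_1)\,u_j$ with $u_j$ the lowest common ancestor of the $q(x_j)$-leaves, and that lca-conformity is what prevents the ground fillers $s_l$ from leaking into the subtree rooted at $\nu(x_1)$.
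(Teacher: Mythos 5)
Your proof is correct and takes essentially the same route as the paper's: structural induction over the order in which states are created, extending the context from the induction hypothesis by one input letter, applying the substitution identity $T(c[x_1\leftarrow c'])=T(c)[q(x_1)\leftarrow \makro{q}(c')]$, and reading off $t_j[\leftarrow x_j]$ as the lca-rooted subtree of $p[x_i\leftarrow \rhs{q_i,a}\mid i\in[n]]$ via the construction's defining equation. The only differences are presentational: you spell out the final step of filling the remaining variables with ground trees (using lca-conformity to see that this does not disturb the subtree rooted at the lca), a step the paper's appendix proof compresses into ``it is easy to see that our claim holds,'' and you fold the well-definedness of $t_j\in T_\Delta[Q]$ into the induction, whereas the paper treats it in a separate lemma.
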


Lemmas~\ref{lca-tree bound} and~\ref{aux lemma} yield that
the height of any tree $t$ such that $\langle t \rangle$ is a state of $T'$ is bounded.
Thus, our construction  terminates.
In the Appendix we further show that our construction is well defined (even for partial transducers with regular look-ahead) and that
the transducer of our construction  is indeed equivalent to the given transducer.

The construction and Lemmas~\ref{zero output twinned} and~\ref{lca-conform} yield the following theorems.

\begin{theorem}
	Let $T$ be a  canonical earliest transducer. An equivalent linear transducer $T'$  exists if and only if 
	$T$ is zero output twinned  and lca-conform.
\end{theorem}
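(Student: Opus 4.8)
The plan is to prove the theorem as a biconditional, assembling the pieces already established in the excerpt. For the forward direction, suppose $T$ is equivalent to some linear transducer. Then Lemma~\ref{zero output twinned} immediately gives that $T$ is zero output twinned, and Lemma~\ref{lca-conform} immediately gives that $T$ is lca-conform. So this direction requires no new work beyond citing the two lemmas; it is essentially a corollary of them.

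For the backward direction, assume $T$ is zero output twinned and lca-conform, and produce a linear transducer $T'$ equivalent to $T$. Here I would invoke the construction of Section~\ref{construction}, which defines $T'$ inductively with states of the form $\langle t\rangle$. There are three obligations to discharge. First, \emph{termination}: I must argue that only finitely many states $\langle t\rangle$ are ever introduced. This follows by combining Lemma~\ref{aux lemma} (every state $\langle t\rangle$ arises from some context $c$ with $T(c)/\nu(x_1)=t[\leftarrow x_1]$) with Lemma~\ref{lca-tree bound} (which bounds $\text{height}(T(c)/\nu(x_1))$ by $(|Q|^2+1)\eta$); since heights are bounded and the alphabets $\Delta, Q$ are finite, there are finitely many candidate trees $t$, so the inductive definition closes after finitely many steps. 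Second, \emph{well-definedness}: I must check that at each step the required linear pattern $p'\in T_\Delta[X_k]$ and subtrees $t_j$ actually exist, i.e. that $p[x_i\leftarrow\rhs{q_i,a}\mid i\in[n]]$ can be factored as $p'[x_j\leftarrow t_j[\leftarrow x_j]\mid j\in[k]]$ with $p'$ linear. This is exactly where lca-conformity is needed: it guarantees that the subtree rooted at the lca of the $q(x_j)$-labeled leaves contains only $x_j$-indexed states, so the factorization is consistent and the resulting pattern $p'$ is linear. Third, \emph{equivalence}: I must show $\makro{T'}$ and $\makro{T}$ agree on ground inputs, which proceeds by the invariant that $\langle t\rangle$ records the ``aheadness'' $T(c)/\nu(x_1)=t[\leftarrow x_1]$ of $T$ over $T'$, pushed through the induction.

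I would organize the write-up so that the termination argument is stated explicitly (it is the clean combinatorial core), while citing the Appendix for the detailed verifications of well-definedness and equivalence, exactly as the surrounding text signals. The main obstacle is the well-definedness step: one must verify that the zero-output-twinned hypothesis, beyond merely bounding heights, prevents the construction from being forced to duplicate an input variable — that is, that linearity of $p'$ is genuinely preserved at every rule. Concretely, the danger is that two distinct $\langle t_j\rangle$-subtrees could both require the same input subtree, reintroducing a copy; ruling this out is precisely the content of the two structural properties working together, and it is the delicate point where the reduction from the general construction to a \emph{linear} output must be justified rather than asserted.

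\begin{proof}
The forward direction is immediate: if $T$ is equivalent to a linear transducer, then $T$ is zero output twinned by Lemma~\ref{zero output twinned} and lca-conform by Lemma~\ref{lca-conform}.

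For the converse, assume $T$ is zero output twinned and lca-conform. Apply the construction of Section~\ref{construction} to obtain $T'$. By Lemma~\ref{aux lemma}, every state $\langle t\rangle$ of $T'$ satisfies $T(c)/\nu(x_1)=t[\leftarrow x_1]$ for some context $c$, and by Lemma~\ref{lca-tree bound} the height of $T(c)/\nu(x_1)$ is at most $(|Q|^2+1)\eta$. Since $\Delta$ and $Q$ are finite, only finitely many trees $t$ of bounded height over $T_\Delta[Q]$ can arise, so the inductive definition introduces finitely many states and terminates. The construction makes every right-hand side $p'[x_j\leftarrow\langle t_j\rangle(x_j)\mid j\in[k]]$ linear, since lca-conformity ensures that the subtree rooted at the lowest common ancestor of the nodes labeled $q(x_j)$ contains no node labeled $q'(x_{j'})$ for $j'\neq j$; hence the factorization defining $p'$ is consistent and $p'\in T_\Delta[X_k]$ is linear. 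As shown in the Appendix, the construction is well defined and yields a transducer equivalent to $T$; thus $T'$ is a linear transducer equivalent to $T$. \qed
\end{proof}
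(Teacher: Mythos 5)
Your proposal is correct and follows essentially the same route as the paper: the forward direction is exactly Lemmas~\ref{zero output twinned} and~\ref{lca-conform}, and the backward direction invokes the construction of Section~\ref{construction}, with termination obtained by combining Lemma~\ref{aux lemma} with the height bound of Lemma~\ref{lca-tree bound}, and well-definedness (linearity of $p'$ via lca-conformity) and equivalence delegated to the Appendix, just as the paper does. Your final write-up also correctly resolves the worry raised in your plan — zero-output-twinnedness is only needed for the height bound, while lca-conformity alone guarantees the linear factorization — which is precisely the division of labor in the paper's appendix proofs.
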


\begin{theorem}
	Let $T$ be a transducer. 
	It is decidable in polynomial time whether or not an 
	equivalent linear transducer $T'$  exists, 
	and if so $T'$ can be effectively constructed. 
\end{theorem}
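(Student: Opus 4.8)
The plan is to reduce the general case to the canonical earliest case already settled by the preceding theorem. Given an arbitrary transducer $T$, I would first invoke \cite[Theorem 16]{DBLP:journals/jcss/EngelfrietMS09} to construct, in polynomial time, an equivalent canonical earliest transducer $T_c$. Since equivalence of transducers is an equivalence relation, a linear transducer $T'$ is equivalent to $T$ if and only if it is equivalent to $T_c$; consequently $T$ admits an equivalent linear transducer precisely when $T_c$ does. This is the only genuinely new observation needed, and it is immediate from transitivity of equivalence together with the fact that $T$ and $T_c$ compute the same function on ground inputs.

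Next I would apply the previous theorem to $T_c$: as $T_c$ is canonical earliest, an equivalent linear transducer exists if and only if $T_c$ is zero output twinned and lca-conform. By Lemma~\ref{zero output twinned} and Lemma~\ref{lca-conform}, each of these two properties can be decided in time polynomial in $|T_c|$. Because the canonicalization step runs in polynomial time, $|T_c|$ is polynomially bounded in $|T|$, so checking both properties — and hence deciding existence of an equivalent linear transducer — is polynomial in $|T|$. This establishes the decidability claim.

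For the affirmative case I would invoke the construction of Section~\ref{construction}. Applied to $T_c$, which is now known to be zero output twinned and lca-conform, it yields a linear transducer $T'$ equivalent to $T_c$ and therefore to $T$. Termination of the construction is guaranteed by Lemmas~\ref{lca-tree bound} and~\ref{aux lemma}, which together bound the height of every tree $t$ labelling a state $\langle t \rangle$ of $T'$; the equivalence of the constructed transducer to $T_c$ is verified in the Appendix. Chaining the canonicalization with this construction gives an effective procedure producing $T'$ whenever it exists.

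The main point to get right is \emph{separating the two complexity claims}. The \emph{decision} is polynomial, since it amounts to canonicalization followed by the two polynomial-time property tests. The subsequent \emph{construction}, however, is only asserted to be effective, not polynomial: the number of states $\langle t \rangle$ arising in Section~\ref{construction} can be exponential in $|T|$, even though their individual heights are bounded by Lemma~\ref{lca-tree bound}. I do not expect any serious obstacle here, but the proof should be careful to phrase the final statement so that polynomiality is attributed to the decision procedure alone, with construction claimed merely to be effective.
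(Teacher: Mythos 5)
Your proposal is correct and follows essentially the same route as the paper: canonicalize via the cited polynomial-time normal form construction, decide the two properties (zero output twinned, lca-conform) using Lemmas~\ref{zero output twinned} and~\ref{lca-conform}, and invoke the construction of Section~\ref{construction} in the affirmative case. Your closing remark separating the polynomial decision from the merely effective (possibly exponential) construction matches the paper's own discussion, which illustrates the exponential blow-up with an explicit example.
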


Using the results of \cite{DBLP:conf/icalp/ManethS20} we  show in the appendix that our problem
is decidable even for \emph{partial transducer with regular look-ahead}.

\begin{theorem}\label{linear main result}
	Let $M$ be a  transducer with regular look-ahead. 
	It is decidable whether an equivalent linear transducer $T$  exists and if so,
	$T$ can be effectively constructed.
\end{theorem}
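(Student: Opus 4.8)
The plan is to reduce Theorem~\ref{linear main result} to the look-ahead-free case already handled by the preceding theorems, using the decidability of look-ahead removal for linear deterministic top-down transducers from~\cite{DBLP:conf/icalp/ManethS20}. First I would establish the analogues of the structural machinery for the look-ahead setting: a transducer with regular look-ahead $M$ has an associated deterministic bottom-up look-ahead automaton computing states $l$ on input subtrees, and right-hand sides have the form $\rhs{q,a,l_1,\dots,l_k}$. I would invoke the canonical earliest normal form for transducers with regular look-ahead from~\cite{DBLP:journals/tcs/EngelfrietMS16} to obtain, in polynomial time, an equivalent canonical earliest transducer-with-look-ahead $M$. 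The key point, as hinted by the commented-out passages in the excerpt, is that Lemma~\ref{auxiliary total lemma} generalizes: against any equivalent transducer, a canonical earliest $M$ is output-ahead on every partial input $cl$ where $l$ is a tuple of look-ahead states filling the variables.

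Next I would lift the two obstruction properties to the look-ahead setting. The notions of \emph{pairwise occurring} states, \emph{zero output twinned}, and \emph{lca-conform} all carry over verbatim once contexts are read modulo the look-ahead automaton, i.e., one quantifies over contexts $c$ together with a compatible look-ahead assignment. I would then prove the look-ahead versions of Lemma~\ref{zero output twinned} and Lemma~\ref{lca-conform}: if $M$ is equivalent to \emph{some} linear transducer with regular look-ahead, then $M$ is zero output twinned and lca-conform, and both properties are decidable in polynomial time (the reachability/loop arguments over $Q\times Q$ become arguments over $Q\times Q$ paired with look-ahead states, which only polynomially enlarges the product construction). With these in hand, the construction of Section~\ref{construction} replays: the states $\langle t\rangle$ store the aheadness, their height is bounded by the look-ahead analogue of Lemma~\ref{lca-tree bound} together with Lemma~\ref{aux lemma}, so the construction terminates and yields a \emph{linear} transducer $T$ that preserves the look-ahead of $M$.

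There is, however, a subtlety that makes the final step not purely mechanical: the construction produces a linear transducer \emph{with regular look-ahead}, whereas the theorem asks for an equivalent linear transducer. Here I would appeal to~\cite{DBLP:conf/icalp/ManethS20}: since $T$ is linear and deterministic with regular look-ahead, it is decidable whether its look-ahead can be removed, and if so an equivalent look-ahead-free linear transducer is constructible. Combining this with the decidability of the two obstruction properties gives the full decision procedure: compute the canonical earliest $M$, test zero-output-twinnedness and lca-conformity, and if both hold construct the linear $T$ with look-ahead; the equivalence is then witnessed at the level of linear transducers with look-ahead, which is the class in which the theorem is phrased.

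The hard part will be making the earliest/aheadness apparatus robust in the presence of look-ahead, because two equivalent transducers may use genuinely different look-ahead automata, so the comparison underlying Lemma~\ref{auxiliary total lemma} must be taken over a \emph{common refinement} of the two look-ahead automata rather than over raw contexts. Establishing that the canonical earliest transducer with look-ahead is still output-ahead of every equivalent partial transducer with look-ahead — and that the loop arguments for zero-output-twinnedness remain sound once loops are detected modulo this refined look-ahead — is where the technical weight lies; the remaining bookkeeping (height bounds, termination, well-definedness) is a routine adaptation of the total, look-ahead-free proofs and, as the authors note, is deferred to the Appendix.
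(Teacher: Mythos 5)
Your overall pipeline (normalize $M$ to canonical earliest with respect to its look-ahead, test the two properties, run the aheadness construction to obtain a linear la-transducer with the same look-ahead, then remove that look-ahead via~\cite{DBLP:conf/icalp/ManethS20}) is exactly the paper's architecture, and the resulting decision procedure coincides with the paper's. But your completeness argument rests on a lemma that is false. You claim: if $M$ is equivalent to \emph{some} linear transducer with regular look-ahead, then $M$ is zero output twinned and lca-conform, and you propose to prove this by redoing the aheadness comparison (Lemma~\ref{auxiliary total lemma}) over a \emph{common refinement} of the two look-ahead automata. No such proof can exist. Consider $T_0$ from Example~\ref{zo-twined introducrion example}, viewed as an la-transducer with the trivial one-state la-automaton; it is canonical earliest and \emph{not} zero output twinned. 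Yet it is equivalent to the linear la-transducer with la-states $l_e,l_{e'}$ (where $\delta(e)=l_e$, $\delta(e')=l_{e'}$, $\delta(a,l)=l$), axioms $A(l_e)=f(q(x_1),e)$ and $A(l_{e'})=f(q(x_1),e')$, and rules $q(a(\stat{x_1}{l}))\to g(q(x_1))$, $q(e)\to e$, $q(e')\to e'$: the look-ahead already resolves which leaf will appear, so the second copy can be emitted at the root and linearity suffices. The same phenomenon defeats lca-conformity as a necessary condition (the translation of Example~\ref{lca-conform introducrion example} is likewise realizable by a linear transducer whose look-ahead inspects the subtrees). The two properties are necessary only relative to the \emph{fixed} la-automaton of $M$, which is how the appendix lemmas are actually stated and proved; against targets with arbitrary look-ahead they are simply not necessary, so the refinement route you flag as ``where the technical weight lies'' is a dead end, not a technicality.

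What you are missing is the one-line observation with which the paper closes this gap and which makes the refinement issue evaporate: a linear transducer \emph{without} look-ahead is trivially a linear la-transducer \emph{with the same la-automaton as $M$} (its rules simply ignore the annotations). Hence if the look-ahead-free linear $T$ required by the theorem exists, then $M$ is equivalent to some linear la-transducer over its own la-automaton, so Theorem~\ref{partial main result} (the same-la-automaton version of your first three steps) applies and yields such an $N$; then Theorem~20 of~\cite{DBLP:conf/icalp/ManethS20} applied to $N$ decides whether the look-ahead can be removed and constructs the look-ahead-free $T$ in the affirmative case. Your procedure as executed is therefore correct, but its completeness proof must be routed through this embedding rather than through your any-look-ahead necessity lemma. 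Relatedly, your closing remark that the equivalence is ``witnessed at the level of linear transducers with look-ahead, which is the class in which the theorem is phrased'' misreads the statement: the theorem asks for a linear transducer without look-ahead, which is exactly why the final look-ahead-removal step is indispensable and why the necessity direction must be anchored to $M$'s own la-automaton.
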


A transducer is \emph{partial} if $\text{rhs}$, seen as a function is partial.  
We remark that while it can be decided in polynomial time whether or not for a given total transducer
an equivalent linear transducer exists, 
the time complexity is double-exponential if the transducer is partial.
This is because the canonical earliest normal can be constructed with that 
complexity~\cite{DBLP:journals/jcss/EngelfrietMS09}; the same holds in the
presence of regular-look ahead.

Though for a given total transducer it is decidable in polynomial time whether or not 
an  equivalent linear transducer exists, the linear transducer of our construction  may be exponentially
larger than the given transducer.

\begin{example}
	Let $\Sigma=\{a^1,e^0\}$ and $\Delta=\{f^2, e,^0\}$.
	Let $T=(Q,\Sigma,\Delta, R, A)$ where $Q=\{q_0,\dots, q_9\}$, $A=q_0(x_1)$ and
	\begin{center}
		$\begin{array}{lcl c lcl }
		q_{i} (a(x_1)) \to f( q_{i+1} (x_1), q_{i+1} (x_1) )&\quad &   q_{i} (e) &\to&  e  \\
		q_{9} (a(x_1)) \to f( e, e )&\quad &   q_{9} (e) &\to&  e  \\
		\end{array}$
	\end{center}
where $i=0,\dots ,8$ be the rules of $R$. The transducer $T$ transforms a monadic tree $a^n e$ into a full binary tree of the same height if $n<10$ and into  a full binary tree of height 10 if $n\geq 10$. Clearly, $T$ is canonical earliest.
Furthermore, $T$ is zero output twined as there are no loops and lca-conform as its input trees are monadic.
We denote by $t_j$, $j=0,\dots ,9$, the full binary tree of height $j$ whose leaves are labeled
by $q_j$ and whose remaining nodes are labeled by $f$. 
The linear transducer our construction yields is  $T'=(Q',\Sigma,\Delta, R', A')$ where $Q'=\{\langle t_j \rangle \mid j=1,\dots, 9 \}$,
 $A'=\langle t_0 \rangle (x_1) =\langle q_0 \rangle (x_1)$ and 
	\begin{center}
	$\begin{array}{lcl c lcl }
	\langle t_{i}\rangle  (a(x_1)) \to \langle t_{i+1}\rangle  (x_1) &\quad &   \langle t_{i}\rangle (e) &\to&  t_i [q_i\leftarrow e]   \\
		\langle t_{9}\rangle  (a(x_1)) \to t_9 [q_9\leftarrow f(e,e)]&\quad &  \langle t_{9}\rangle (e) &\to&    t_9 [q_9\leftarrow e]  \\
	\end{array}$
\end{center}
where $i=0,\dots ,8$ are the rules of $R'$. Informally, $T'$ must delay its output until the leaf is read or until  
$T'$ has verified that its input tree has height at least 10 due to its linearity. The argument is analogous to 
Example~\ref{zo-twined introducrion example}.
Clearly, $T'$ is exponentially larger than $T$ as $\text{rhs}(\langle t_{9}\rangle, a)$ alone has
size $2^{10}$. 
\end{example}

\section{When is a Transducer Equivalent to a Homomorphism?}

In this section we address the question whether or not a given (total) transducer is equivalent 
to some homomorphism. 
 In the following let
$T$ be a canonical earliest total transducer with the same tuple as defined in Section~\ref{transducer section}.
If we consider the task of deciding whether or not a   homomorphism $h$ equivalent to $T$ exists, then
it is tempting to believe that  $h$ is equivalent to $T$ iff the  canonical earliest normal form  of $T$ has
only one state. 
Interestingly, this is not the case.
\begin{example}\label{hom example 1}
	Let $\Sigma =\{a^1, e^0\}$ and $T_1$ be 
	defined with the axiom $f(q_1 (x_1), q_2 (x_1))$ and the
	rules
	\begin{center}
			$\begin{array}{lclclcl}
		q_1 (a(x_1))) &\to& f(q_1 (x_1), q_2 (x_1)) &\quad &q_2 (a(x_1))) &\to& f(q_1 (x_1), q_2 (x_1)) \\
		 q_1 (e)&\to & a  &\quad & q_2 (e) & \to & b  
		\end{array}$
	\end{center}
	Clearly $T_1$ is  canonical earliest and equivalent to the homomorphism
	$h$ with
		\begin{center}
			$\begin{array}{lclclcl}
	 h(a(x_1))&=&f(h(x_1),h(x_1)) &\quad  & h(e)&=&f(a,b)
\end{array}$
\end{center}

	This is rather surprising when considering \textit{string transducers}, i.e., transducers restricted to monadic trees.
	It is known that  canonical  earliest string transducers are state-minimal, i.e., there is no equivalent   string transducer
	with fewer states.~\cite[Section~2.1.2]{DBLP:phd/hal/Lhote18}\cite[Section~5]{DBLP:journals/tcs/Choffrut03} Hence, the decision problem for string transducer corresponds  to determining the equivalent
	canonical  earliest string transducer. 
\end{example}
 
 While the general case is more complicated than expected (see  Example~\ref{hom example 1}), the decision problem is simple
if the axiom of $T$ is of the form $q_0(x_1)$. 

\begin{lemma}\label{easy hom case}
Let $T$ be a    canonical earliest transducer with axiom
$q_0(x_1)$.  Whether or not an equivalent homomorphism $h$ exists is decidable in linear time.
\end{lemma}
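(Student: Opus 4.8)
The plan is to prove the characterization that, for a canonical earliest $T$ with axiom $q_0(x_1)$, an equivalent homomorphism exists \emph{if and only if} $q_0$ is the only state reachable from $q_0$ — equivalently, $T$ is already a homomorphism. One direction is immediate: if $q_0$ is the only reachable state, then every right-hand side reachable from $q_0$ uses only $q_0$, so the reachable part of $T$ is a single-state transducer with axiom $q_0(x_1)$, i.e.\ a homomorphism, trivially equivalent to itself. Deciding the condition then amounts to computing the states reachable from $q_0$ (by following the states occurring in right-hand sides) and checking that $q_0$ is the only one, which takes time linear in $|T|$.

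For the forward direction I would assume $T$ is equivalent to a homomorphism $h$ and show by induction on reachability that every state $q$ reachable from $q_0$ satisfies $\makro{q}(s)=h(s)$ for all ground $s$. The base case is $q_0$ itself: since the axiom is $q_0(x_1)$ we have $\makro{q_0}=T=h$ on ground trees. For the inductive step, let $q$ be reachable with $\makro{q}=h$, let $a\in\Sigma_k$, and let $q''(x_i)$ label a leaf at position $p$ of $\rhs{q,a}$. Substituting ground trees $s_1,\dots,s_k$ and using $\makro{q}=h$ together with the homomorphism identity $h(a(s_1,\dots,s_k))=h(a)[x_j\leftarrow h(s_j)]$ yields, at position $p$, the equation $\makro{q''}(s_i)=h(a(s_1,\dots,s_k))/p$, which must therefore depend only on $s_i$. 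Since $T$ is earliest, $\makro{q''}$ has varying root symbol, so $p$ cannot sit at a $\Delta$-node of $h(a)$; varying a single $s_j$ then forces $p$ to lie below an $x_i$-occurrence $p_0$ of $h(a)$, say $p=p_0\,r$, whence $\makro{q''}(s_i)=h(s_i)/r$ for all ground $s_i$.

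The crux — and the step I expect to be the main obstacle — is to show $r=\lambda$, so that $\makro{q''}=h$. Here I would exploit that $\rhs{q,a}$ is a \emph{fixed} tree independent of $s_i$: the labels on the path from the root of $h(a(s_1,\dots,s_k))$ down to $p$ are exactly the $\Delta$-skeleton of $\rhs{q,a}$ along that path, hence independent of $s_i$; but below $p_0$ this path runs through the substituted copy of $h(s_i)$, so the labels of $h(s_i)$ at all proper ancestors of $r$ are independent of $s_i$. If $r\neq\lambda$, this forces the root symbol of $h(s_i)$ to be constant over all $s_i$, i.e.\ $\bigsqcup_{s}h(s)\neq x_1$. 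This contradicts earliestness: since the axiom is $q_0(x_1)$ and $T$ is earliest, $\bigsqcup_{s}h(s)=\bigsqcup_{s}\makro{q_0}(s)=x_1$. Hence $r=\lambda$ and $\makro{q''}=h$, completing the induction.

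Finally, since $T$ is canonical earliest, distinct states compute distinct functions on some ground input; as all reachable states compute $h$, there can be only one reachable state, namely $q_0$. Thus $T$, restricted to its reachable part, is a homomorphism, establishing the forward direction and hence the characterization. I would emphasize that the whole argument hinges on the axiom having the form $q_0(x_1)$, which guarantees $\bigsqcup_s h(s)=x_1$; this is exactly what fails in Example~\ref{hom example 1}, where the branching axiom $f(q_1(x_1),q_2(x_1))$ lets the genuinely distinct earliest functions $h(\cdot)/1$ and $h(\cdot)/2$ survive as separate states.
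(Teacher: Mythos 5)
Your proposal is correct, and it follows a genuinely different route from the paper. The paper's proof is short and indirect: if $T$ has a single state it is by definition a homomorphism; if it has more than one state and an equivalent homomorphism $h$ existed, then $h=\makro{q_0}$ on ground trees, so $h$ inherits the earliest property from $q_0$, and being single-state it is itself a canonical earliest transducer --- contradicting the canonicity of $T$ via the uniqueness of the canonical earliest normal form (Theorem~15 of \cite{DBLP:journals/jcss/EngelfrietMS09}); the decision procedure there is simply counting states. You instead prove the characterization from first principles: by induction along reachability, every state reachable from $q_0$ computes $h$ on ground trees. Your $r=\lambda$ step is the right crux and it is sound: the proper ancestors of $p$ carry fixed labels coming from the $\Delta$-skeleton of $\rhs{q,a}$, so $r\neq\lambda$ would freeze the root symbol of $h(s_i)$ over all $s_i$, contradicting $\bigsqcup_s h(s)=\bigsqcup_s\makro{q_0}(s)=x_1$, which is exactly where the hypothesis on the axiom enters. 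After that you need only the \emph{definition} of canonicity --- distinct states differ on some ground tree --- to force $q_0$ to be the only reachable state, rather than the uniqueness theorem. What each approach buys: the paper's argument is a few lines but uses Theorem~15 as a black box, and its dichotomy ``more than one state implies no equivalent homomorphism'' is literally correct only when all states of $T$ are reachable (the paper's definition of canonical earliest does not exclude unreachable states, though the normal-form construction of \cite{DBLP:journals/jcss/EngelfrietMS09} produces trim transducers); your reachability-based characterization is self-contained, avoids the external citation, and remains correct even in that degenerate case. Both decision procedures --- counting states versus computing the set of states reachable from $q_0$ --- run in time linear in $|T|$.
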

\begin{proof}
	If  $T$ has only  one state, then $T$ is by definition a homomorphism.
	If, on the other hand, $T$ has more than one state, then there is no homomorphism equivalent to $T$. 
	Assume to the contrary that there is a homomorphism $h$ equivalent to $T$. 
	Due to the equivalence of $h$
	and $T$, $h (s)={T} (s)= \makro{q_0} (s)$  holds for all
	$s\in T_\Sigma$. As $T$ is earliest, there are 
	trees $s_1,s_2\in T_\Sigma$ such that $h(s_1)=\makro{q_0} (s_1)$ and $h(s_2)=\makro{q_0} (s_2)$ have different root symbols.
	Hence, $h$ is earliest too.
	As $h$ has only a single state it is obvious that $h$ is a canonical earliest transducer.
	The existence of $h$ contradicts the canonicity of $T$ according
	to Theorem~15 of~\cite{DBLP:journals/jcss/EngelfrietMS09}.
	Clearly the number of states of $T$ can be determined in linear time. \qed
\end{proof}

Subsequently, we therefore assume that the axiom of $T$ is not of the form $q_0 (x_1)$.
Furthermore, we assume that the axiom is not ground as this case is trivial.
Let $t_1 \in T_\Sigma [Q(X)]$ and $t_2\in T_\Sigma [Q(X)]$. Denote by $X_{t_1}$ and $X_{t_2}$ the elements of $X$
that occur in $t_1$ and $t_2$, respectively.
We write $t_1 \simeq t_2$  if there is a bijection  $f$ from $X_{t_1}$ to $X_{t_2}$, such that 
$t_2=t_1[q(a) \leftarrow q(f(a)) \mid q\in Q, a\in X_{t_1}]$.
We say that a tree $t\in T_\Sigma [Q(X)]$ is \emph{subtree conform} to $t'\in T_\Sigma [Q(X)]$
if either 
(1) $t$ is ground or,
(2)   $t \simeq t'$, or
(3) $t=a [t_1,\dots,t_k]$, $a\in \Sigma_k$,
and  $t_i$ is subtree conform to $t'$ for all $i\in [k]$.

\begin{example}
Consider the trees $t_1=f (q_1 (a_1), q_2 (a_2) )$,  $t_2=f (q_1 (b_1), q_2 (b_1) )$, and $t_3=f(q_1 (a_3), q_2 (a_3))$. Clearly $t_1 \not\simeq t_2$ as there is no suitable bijection, but $t_3\simeq t_2$ holds.
 The tree $t= d(e, g (f (q_1 (a_1), q_2 (a_1) )),
f (q_1 (a_2), q_2 (a_2)  ))$
is subtree conform to $t_2$. While $t\not \simeq t_2$, its subtree $e$  is ground,
its subtree $ g (f (q_1 (a_1), q_2 (a_1) )$  is subtree conform to $t_2$ and for its last subtree, $f (q_1 (x_2), q_2 (x_2)  )\simeq t_2$ holds. 
\end{example}

Let $T$ be an arbitrary  canonical earliest transducer with the axiom $A$.
In the following, we  show that 
$T$ is equivalent to a homomorphism $h$ if and only if  ${T} (a(x_1,\dots,x_k))$ is subtree conform to $A$ for all $a\in \Sigma$. 
Consider Example~\ref{hom example 1}.
The  transducer $T_1$ is equivalent to the homomorphism $h_1$ and
${T_1}(a(x_1))$ and ${T_1}(e)$ are  subtree conform to $f(q_1 (x_1), q_2 (x_1))$.

\begin{lemma} \label{hom lemma 1}
	Let $T$ be a canonical earliest transducer and $A$ be the axiom of $T$.
	If  $T$ is equivalent to a homomorphism $h$, then
	 ${T} (a(x_1,\dots,x_k))$ is subtree conform to $A$ for all $a\in \Sigma$.
\end{lemma}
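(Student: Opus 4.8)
The plan is to exploit the auxiliary total lemma (Lemma~\ref{auxiliary total lemma}) with $T'=h$ together with the earliest and canonical properties of $T$. First I would dispose of the case $a\in\Sigma_0$: here $\makro{q}(a)=\rhs{q,a}$ is ground for every $q$, so $T(a)=A[q(x_1)\leftarrow\rhs{q,a}\mid q\in Q]$ is ground and hence subtree conform to $A$ by clause~(1). For $a\in\Sigma_k$ with $k\geq 1$ I would use that $\makro{h}(a(x_1,\dots,x_k))=h(a)$, whose leaves are labelled $q(x_{i_1}),\dots,q(x_{i_m})$ for the single state $q$ of $h$ (repetitions of variables allowed). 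By Lemma~\ref{auxiliary total lemma}, $V(h(a))\subseteq V(T(a(x_1,\dots,x_k)))$ and every $\Delta$-labelled node of $h(a)$ carries the same label in $T(a(x_1,\dots,x_k))$. Reading the positions $u_1,\dots,u_m$ of the variable leaves of $h(a)$ as ``holes'', this yields a decomposition in which the part of $T(a(x_1,\dots,x_k))$ strictly above the $u_j$ is exactly the ground $\Delta$-skeleton $h(a)$, while below $u_j$ sits the subtree $w_j=T(a(x_1,\dots,x_k))/u_j$.

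The key identity comes from evaluating on ground extensions. For ground $s_1,\dots,s_k$ we have, on the one hand, $T(a(s_1,\dots,s_k))=T(a(x_1,\dots,x_k))[q(x_i)\leftarrow\makro{q}(s_i)\mid q\in Q,\,i\in[k]]$, so the subtree at $u_j$ is $w_j[q(x_i)\leftarrow\makro{q}(s_i)\mid q\in Q,\,i\in[k]]$; on the other hand, since $h$ is a homomorphism and $h$ and $T$ are equivalent, the subtree of $T(a(s_1,\dots,s_k))$ at $u_j$ equals $T(s_{i_j})$. Equating the two gives
\[
w_j[q(x_i)\leftarrow\makro{q}(s_i)\mid q\in Q,\,i\in[k]]=T(s_{i_j})=A[q(x_1)\leftarrow\makro{q}(s_{i_j})\mid q\in Q]
\]
for all ground $s_1,\dots,s_k$; call this $(\star)$.

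It then remains to show that each $w_j$ is subtree conform to $A$; since the skeleton above the holes is ground, clause~(3) reduces the whole statement to this, and I would in fact prove $w_j\simeq A$. Using $(\star)$ I would first show that every state-leaf of $w_j$ reads the single variable $x_{i_j}$: a leaf $q(x_i)$ with $i\neq i_j$ would let me vary $s_i$ while fixing $s_{i_j}$, changing the root symbol of $\makro{q}(s_i)$ on the left — the earliest property gives $\bigsqcup\{\makro{q}(s)\mid s\in T_\Sigma\}=x_1$, so this root is non-constant — while the right-hand side of $(\star)$ stays fixed, a contradiction. Renaming $x_{i_j}$ to $x_1$ turns $w_j$ into some $W\in T_\Delta[Q(X_1)]$ with $W[q(x_1)\leftarrow\makro{q}(s)\mid q\in Q]=A[q(x_1)\leftarrow\makro{q}(s)\mid q\in Q]$ for all ground $s$. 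Comparing $W$ and $A$ at the topmost position $p$ where they differ yields a contradiction in every case: two distinct $\Delta$-symbols differ outright in the common output; a $\Delta$-symbol opposite a leaf $q'(x_1)$ contradicts earliestness (the $\Delta$-root is constant while the root of $\makro{q'}(s)$ varies); and two leaves $q(x_1)$, $q'(x_1)$ with $q\neq q'$ force $\makro{q}(s)=\makro{q'}(s)$ for all ground $s$, contradicting canonicity. Hence $W=A$ and $w_j\simeq A$.

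The main obstacle is this last rigidity step: it is where the combination of earliestness (to forbid both foreign variables and any surplus output sitting above the axiom copies) and canonicity (to forbid replacing one state by another) is essential, and where the case analysis on the topmost differing position must be carried out carefully. Once $w_j\simeq A$ is established, I would finish by peeling the ground skeleton $h(a)$: applying clause~(3) at each $\Delta$-node of $h(a)$ and clause~(1) at its $\Delta_0$-leaves, the recursion terminates at the holes $u_j$ with the $\simeq A$ subtrees $w_j$ (clause~(2)), so $T(a(x_1,\dots,x_k))$ is subtree conform to $A$.
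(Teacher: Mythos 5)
Your proof is correct and takes essentially the same route as the paper's: the same decomposition of $T(a(x_1,\dots,x_k))$ along the non-$\Delta$ leaves of $h(a(x_1,\dots,x_k))$ obtained from Lemma~\ref{auxiliary total lemma}, the same key identity equating each plugged subtree with $A[q(x_1)\leftarrow \makro{q}(s_{i_j}) \mid q\in Q]$, and the same division of labor between earliestness (ruling out foreign variables and $\Delta$-versus-state clashes) and canonicity (ruling out state-versus-state clashes). The differences are purely presentational — you argue directly via a minimal-differing-position analysis where the paper argues by contradiction in three separate claims (equal node sets, equal states, equal variables).
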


\begin{lemma}\label{hom lemma 2}	
Let $T$ be a  transducer and $A$ be its axiom.
If 
${T} (a(x_1,\dots,x_k))$ is subtree conform to $A$ for all $a\in \Sigma$,
then $T$ is equivalent to a homomorphism.
\end{lemma}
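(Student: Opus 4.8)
The plan is to construct the homomorphism $h$ explicitly from $T$ and then verify equivalence by induction on the input. Following the standing assumptions of this section, $A$ is neither ground nor of the form $q_0(x_1)$. Recall that $T(a(x_1,\dots,x_k)) = A[q(x_1)\leftarrow \rhs{q,a}\mid q\in Q]$; call this tree $t_a$. By hypothesis $t_a$ is subtree conform to $A$, so it decomposes into an internal part built from output symbols of $\Delta$ together with a frontier of subtrees that are each either ground or $\simeq A$. The key point is that, since $A\neq q_0(x_1)$, a bare leaf $q(x_i)$ is neither ground, nor $\simeq A$, nor decomposable via case~(3) of subtree conformity (its root is a state, not a symbol of $\Delta$); hence subtree conformity forces every state leaf of $t_a$ to sit inside a complete structural copy of $A$. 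Moreover, every frontier subtree $u$ with $u\simeq A$ uses exactly one variable, say $x_i$, and equals $A[q(x_1)\leftarrow q(x_i)\mid q\in Q]$, because $A$ contains only the single variable $x_1$.

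I would let $h$ have the single state $q_h$ and axiom $q_h(x_1)$, and for each $a\in\Sigma_k$ define $h(a)\in T_\Delta[\{q_h\}(X_k)]$ as the tree obtained from $t_a$ by replacing each topmost frontier subtree $u\simeq A$, over its variable $x_i$, by $q_h(x_i)$, while leaving every ground frontier subtree and the internal $\Delta$-structure unchanged. This $h(a)$ is well defined: the topmost ground-or-$\simeq A$ frontier is uniquely determined (a proper subtree of $A$ cannot itself be $\simeq A$ by size), and since homomorphisms need not be linear there is no obstruction to a variable $x_i$ occurring several times, once for each copy of $A$ over $x_i$. For $k=0$ the tree $t_a=T(a)$ is ground and $h(a)=T(a)$.

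The crucial step is the substitution identity
\[ h(a)[\,q_h(x_i)\leftarrow T(s_i)\mid i\in[k]\,] \;=\; T(a(s_1,\dots,s_k)) \]
for all ground $s_1,\dots,s_k$. To prove it I would use the fact that $T(a(s_1,\dots,s_k)) = t_a[q(x_i)\leftarrow \makro{q}(s_i)\mid q\in Q,\, i\in[k]]$ and compare the two sides along the decomposition of $t_a$: the internal $\Delta$-nodes and the ground frontier subtrees carry no states and are untouched on either side, so they agree; a frontier copy $u=A[q(x_1)\leftarrow q(x_i)]$ becomes, on the left, $q_h(x_i)$ and then $T(s_i)$, and on the right $A[q(x_1)\leftarrow \makro{q}(s_i)] = T(s_i)$. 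Hence the two sides coincide.

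Finally I would show $\makro{h}(s)=T(s)$ for every ground $s$ by induction on $s$. The base case $s=a$ of rank $0$ is exactly $h(a)=T(a)$. For $s=a(s_1,\dots,s_k)$ we have $\makro{h}(s)=h(a)[q_h(x_i)\leftarrow \makro{h}(s_i)]$, which by the induction hypothesis equals $h(a)[q_h(x_i)\leftarrow T(s_i)]$, and this is $T(a(s_1,\dots,s_k))=T(s)$ by the substitution identity. Thus $h$ is a homomorphism equivalent to $T$. I expect the main obstacle to be the substitution identity together with the verification that subtree conformity genuinely packages all state leaves into single-variable copies of $A$, so that folding each such copy into $q_h(x_i)$ is sound; the concluding induction is then routine.
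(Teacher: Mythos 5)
Your proof is correct and follows essentially the same route as the paper's: you fold each frontier copy of $A$ (i.e., $A[q(x_1)\leftarrow q(x_i)\mid q\in Q]$) in $T(a(x_1,\dots,x_k))$ into a single occurrence of the homomorphism state over $x_i$, which is exactly the paper's decomposition $T(a(x_1,\dots,x_k))=t[x_j\leftarrow A(j)\mid j\in[k]]$ with $h(a)=t[x_j\leftarrow h(x_j)]$, and you then verify equivalence by the same structural induction using the same substitution identity. The only difference is that you explicitly justify well-definedness of the decomposition (bare state leaves cannot satisfy any case of subtree conformity when $A\neq q_0(x_1)$, and distinct copies of $A$ are disjoint by size), a point the paper leaves implicit in the phrase ``our premise yields.''
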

\begin{proof}
	We construct the homomorphism $h$ as follows. Let $a\in \Sigma_k$, $k\geq 0$.
	If ${T} (a(x_1,\dots,x_k))$ is ground, then we define $h (a(x_1,\dots,x_k))= 	{T} (a(x_1,\dots,x_k))$.
	Otherwise,
	 our premise yields ${T} (a(x_1,\dots,x_k))=t[x_j\leftarrow A({j}) \mid j\in [k] ]$ 
	for a suitable $t\in T_\Delta [X_k]$ 
	where $A(j)=A[
	q(x_1)\leftarrow q(x_j) \mid q\in Q]$.
	Then we define  $h (a(x_1,\dots,x_k))= 
	t[ x_j\leftarrow h({x_j}) \mid j\in [k] ]$. We prove
	the equality of $h$ and $T$  in the Appendix. \qed
\end{proof}

A homomorphism equivalent to a given  transducer $T$ exists, 
if and only if ${T}(a(x_1,\dots,x_k))$ is subtree conform to $A$ for all $a\in \Sigma$
due to Lemmas~\ref{hom lemma 1} and~\ref{hom lemma 2}.
In the Appendix we show that subtree conformity to $A$ is decidable in polynomial time
and that an equivalent homomrphism  can be constructed in  polynomial time in the affirmative case.
Hence the following theorem holds.

\begin{theorem}\label{total hom decision}
	The question whether  a  canonical earliest transducer is equivalent 
	to a homomorphism is decidable in polynomial time. In the affirmative case,
	such a homomorphism can be constructed in  polynomial time.
\end{theorem}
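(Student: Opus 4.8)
The plan is to reduce the decision problem to the combinatorial condition of Lemmas~\ref{hom lemma 1} and~\ref{hom lemma 2} and then show that this condition can be tested, and the witnessing homomorphism built, in polynomial time. First I would dispose of the boundary cases: if the axiom $A$ is ground then $T$ is trivially equivalent to the constant homomorphism that outputs $A$, and if $A=q_0(x_1)$ then Lemma~\ref{easy hom case} settles the question in linear time. So assume $A$ is neither ground nor of the form $q_0(x_1)$. By Lemmas~\ref{hom lemma 1} and~\ref{hom lemma 2} an equivalent homomorphism exists if and only if $T(a(x_1,\dots,x_k))$ is subtree conform to $A$ for all $a\in\Sigma$. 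It thus suffices to (i)~compute each tree $T(a(x_1,\dots,x_k))$, (ii)~test subtree conformity to $A$, and (iii)~in the affirmative case extract a homomorphism.

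For step~(i) I would observe that $T(a(x_1,\dots,x_k))=A[q(x_1)\leftarrow\rhs{q,a}\mid q\in Q]$, since unfolding the one-level input $a(x_1,\dots,x_k)$ replaces each leaf $q(x_1)$ of $A$ by $\makro{q}(a(x_1,\dots,x_k))=\rhs{q,a}$. Each such tree has size at most $|A|\cdot\max_{q\in Q}|\rhs{q,a}|$, so all $|\Sigma|$ of them are produced in time polynomial in $|T|$. For step~(ii) the crux is that $\simeq$ is decidable in polynomial time: to test $t\simeq A$ one traverses $t$ and $A$ in parallel, demanding equal $\Delta$-labels at inner nodes and equal states at leaves $q(x)$, while incrementally building the variable map and rejecting the moment it ceases to be a consistent bijection. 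Because $A$ contains only $x_1$, every $t$ with $t\simeq A$ is in fact $A(j):=A[q(x_1)\leftarrow q(x_j)\mid q\in Q]$ for a single index $j$, which keeps the bookkeeping trivial. Subtree conformity of $T(a(\dots))$ to $A$ is then decided by one top-down pass: at a subtree $s$ we answer \emph{conform} if $s$ is ground or $s\simeq A$, and otherwise recurse into the children of $s$; the recursion is legitimate only when the root of $s$ carries a $\Delta$-symbol, so an isolated leaf $q(x_i)$ that is not $\simeq A$ forces rejection, as it should when $A\neq q_0(x_1)$. Each node is visited once and each $\simeq$-test costs $O(|A|)$, so the whole test is polynomial in $|T|$.

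For step~(iii), when conformity holds I would read the homomorphism off the same pass. Abstracting every maximal subtree $s$ of $T(a(\dots))$ with $s\simeq A$ --- necessarily $s=A(j)$ --- to the variable $x_j$ while leaving the ground parts intact yields a pattern $t\in T_\Delta[X_k]$ with $t[x_j\leftarrow A(j)\mid j\in[k]]=T(a(\dots))$; setting $h(a(x_1,\dots,x_k))=t[x_j\leftarrow h(x_j)\mid j\in[k]]$, and $h(a)=T(a)$ where $T(a)$ is ground, gives the homomorphism of Lemma~\ref{hom lemma 2} and is computable in polynomial time. The main obstacle I anticipate lies in the correctness of step~(iii) rather than its complexity: one must verify that the frontier of topmost $\simeq A$-subtrees is well defined, that greedily abstracting it is forced by conformity and cannot clash with the surrounding ground material (note that all $\simeq A$-subtrees sharing the variable $x_j$ are literally the same tree $A(j)$, so repeated occurrences are consistent), and that the resulting $t$ really is a $\Delta$-pattern with no stray state leaves --- precisely the place where non-groundness of $A$ and $A\neq q_0(x_1)$ are used.
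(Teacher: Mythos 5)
Your proposal is correct and follows essentially the same route as the paper: dispose of the ground and $q_0(x_1)$ axiom cases (the latter via Lemma~\ref{easy hom case}), reduce to the subtree-conformity criterion of Lemmas~\ref{hom lemma 1} and~\ref{hom lemma 2}, decide that criterion in polynomial time on the trees $T(a(x_1,\dots,x_k))$, and read off the homomorphism exactly as in the construction of Lemma~\ref{hom lemma 2}. The only, immaterial, difference is the implementation of the conformity test: the paper inspects the nodes $v$ with $\text{height}(T(a(x_1,\dots,x_k))/v)=\text{height}(A)$ and tests those subtrees for groundness or $\simeq A$, whereas you recurse top-down directly along the definition, which is if anything the more transparently complete variant since it explicitly rejects state-labeled leaves not covered by any $\simeq A$-subtree.
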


\subsection*{Transducers with Regular Look-Ahead and  Homomorphisms}
We now consider partial
 transducers with regular look-ahead
  (or \emph{la-transducers} for short).
 Such devices consist of a partial transducer and a \emph{bottom-up tree automaton}, called the \emph{look-ahead automaton}
 (or \emph{la-automaton} for short).
 Informally, an la-transducer processes its input in two phases:
 First each input node is annotated by the active states of the bottom-up
 automaton at its children, i.e., an input node $v$ labeled by  $a\in \Sigma_k$ is relabeled by $\langle a,l_1,\dots , l_k \rangle$
 where  $l_i$ is the state the look ahead automaton  arrives in when processing the $i$-th subtree of $v$.
 Using the information the relabeled nodes provide about their subtrees, the transducer then processes the relabeled tree.
 
 The basic idea for answering the question whether there is an equivalent homomorphism for an arbitrary partial transducer $M$ with regular look-ahead
 is as follows.
 Let $\Sigma$ be the input alphabet of $M$.
  If $M$ is equivalent to a homomorphism, a subset $\Sigma'$ of $\Sigma$ exists such that
  $M$ is total if restricted to $T_{\Sigma'}$, i.e., $M(s)$ is defined for all $s\in T_{\Sigma'}$,  and undefined for all trees in $T_\Sigma\setminus T_{\Sigma'}$.
  Let $M'$ be $M$ restricted to $T_{\Sigma'}$. 
  Assume that a homomorphism $h$ equivalent to $M'$ exists, then
 there  exists a transducer $T$ equivalent to $M'$. (This is because $h$ is by definition  a  transducer.)
As $M'$ is obviously total, we apply the decision procedure in \cite{DBLP:journals/tcs/EngelfrietMS16} to determine
 whether $T$ exists and to construct the transducer $T$  as described in 
 \cite{DBLP:journals/tcs/EngelfrietMS16} --- if affirmative. 
 As $T$ is total,
 we can determine the homomorphism $h$ equivalent to $T$ (and hence $M'$) as described before.
 Clearly, a homomorphism equivalent to $M$ can be obtained from $h$.
For details on how to determine $M'$ and proofs we refer to the Appendix. 

In \cite{DBLP:journals/tcs/EngelfrietMS16}
it is shown that for every total la-transducer one can construct an equivalent canonical earliest la-transducer. 
Thus,  in the following assume that $M'$ is canonical earliest.
Note that 
in order to use the decision in \cite{DBLP:journals/tcs/EngelfrietMS16} to
construct a transducer  $T$ equivalent to $M'$
 we require a \textit{difference bound} for $M'$. A difference bound is a natural number
  $\text{dif}$ such that  $\text{dif}$ is an upper bound on the height of the \emph{difference trees} of $M'$. 
  Difference trees are defined as follows.
Consider some context $c$ over $\Sigma$ and states $l_1,l_2$ of the la-automaton of $M'$. 
Intuitively, when processing $cs$, $s\in T_\Sigma$, $T$ cannot make use of any annotations and hence  does not  know
whether $l_1$ or $l_2$
would be reached by
 the la-automaton when processing $s$.
Therefore, when processing $c$, $T$ can at most generate the largest common prefix
 $t$ of $M'(c l_1)$ and $M' (c l_2)$,
  where for $i=1,2$, $M' (c l_i)$ denotes output of $M'$ when processing $c$
 with the information that its la-automaton reaches $l_i$ when processing $s$.
 We formally prove this statement in the appendix.
 Let $v$ be a node such that $t[v]=x_1$, then $M' (c l_1)/v$ and $M' (c l_2) /v$
  are called  \textit{difference trees} of $M'$.
Intuitively, to simulate $M'$, $T$ has to store difference trees in its states,
which means that if $M'$ and $T$ are equivalent, then only finitely
many difference trees can exist. This in turn means that the height of all difference trees must be bounded.

In the following we show that the difference bound of $M'$ can be determined
if $M'$ is equivalent to a homomorphism $h$.


\begin{lemma}\label{difference tree bound}
	Let $l_1,\dots,l_n$ be states of the la-automaton and $s_1,\dots ,s_n$ be trees of minimal height that reach these states.
	Let $i\in[n]$ such that $\text{height}(M'(s_i))\geq \text{height}(M'(s_j))$, $j\neq i$, then, $\text{height}({M'} (s_i))$ is a difference bound 
	of $M'$. 
\end{lemma}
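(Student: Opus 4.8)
The plan is to exploit that the homomorphism $h$ ignores look-ahead, so that its output on a context is completely fixed. Since $M'$ is equivalent to $h$, for every context $c$ and every ground tree $s$ we have $M'(cs)=h(cs)=h(c)[x_1\leftarrow h(s)]$, where the $1$-pattern $h(c)$ depends only on $c$ and not on $s$. Let $w$ be the unique node with $h(c)[w]=x_1$; then everything strictly above $w$ in $M'(cs)$ is the fixed ground part of $h(c)$, while $M'(cs)/w=h(s)=M'(s)$. On the other hand, writing $l$ for the la-state reached by $s$, the run of the la-transducer factors as $M'(cs)=M'(cl)[q(x_1)\leftarrow\makro{q}(s)\mid q\in Q]$. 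I would first record both identities together with the equality $M'(s)=h(s)$ on ground inputs.

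Fix two la-states $l_1,l_2$, let $t=M'(cl_1)\sqcup M'(cl_2)$ be their largest common prefix, let $v$ be a node with $t[v]=x_1$, and let $D_1=M'(cl_1)/v$ and $D_2=M'(cl_2)/v$ be the associated difference trees, so that the roots of $D_1$ and $D_2$ differ. The key step is to show that $w$ is an ancestor of $v$, i.e.\ that $v$ lies at or below the homomorphism hole. I would argue by contradiction: if $v$ were a strict ancestor of $w$, then $v$ would sit in the fixed ground part of $h(c)$, so for minimal-height witnesses $s_1,s_2$ reaching $l_1,l_2$ the trees $M'(cs_1)/v$ and $M'(cs_2)/v$ would have the same root $h(c)[v]\in\Delta$. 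Substituting into the two factorizations, this root equals the root of $D_1[q(x_1)\leftarrow\makro{q}(s_1)\mid q\in Q]$ and of $D_2[q(x_1)\leftarrow\makro{q}(s_2)\mid q\in Q]$; since the roots of $D_1$ and $D_2$ differ, at least one of them is a state call $q(x_1)$, and then $\makro{q}(s)$ would be forced to carry the fixed nonempty prefix of $h(c)$ (from $v$ down to $w$) for every $s$ reaching the corresponding la-state. This contradicts the earliest property of the canonical earliest la-transducer $M'$, which guarantees $\bigsqcup\{\makro{q}(s)\mid s\text{ reaches that la-state}\}=x_1$. Hence $w$ is an ancestor of $v$.

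Once $w$ is an ancestor of $v$, the bound follows quickly. For the minimal-height witness $s_1$ reaching $l_1$ we get $D_1[q(x_1)\leftarrow\makro{q}(s_1)\mid q\in Q]=M'(cs_1)/v$, and since $v$ lies at or below $w$ this is a subtree of $M'(cs_1)/w=h(s_1)=M'(s_1)$. Replacing the leaves $q(x_1)$ of $D_1$ by the nonempty trees $\makro{q}(s_1)$ cannot decrease height, so $\text{height}(D_1)\leq\text{height}(D_1[q(x_1)\leftarrow\makro{q}(s_1)\mid q\in Q])\leq\text{height}(M'(s_1))$, and symmetrically $\text{height}(D_2)\leq\text{height}(M'(s_2))$. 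Finally, by the choice of $i$ we have $\text{height}(M'(s_1)),\text{height}(M'(s_2))\leq\text{height}(M'(s_i))$, so every difference tree has height at most $\text{height}(M'(s_i))$, which is therefore a difference bound of $M'$.

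I expect the main obstacle to be the step locating the difference node at or below the homomorphism hole, in particular the case where a difference tree has a state call at its root; this is exactly where the (per-la-state) earliest property of $M'$ must be invoked, and it is the only place where canonical earliestness, rather than mere equivalence to $h$, is used.
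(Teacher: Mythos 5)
Your proposal follows essentially the same route as the paper's proof: factor $M'(cs)$ through the homomorphism, whose output on the context $c$ is independent of which la-state the plugged-in tree reaches; show that the difference node $v$ must lie at or below a hole of $h(c)$, with the earliest property of the canonical earliest $M'$ supplying the contradiction in the remaining case; and bound each difference tree by $\text{height}(h(s_j))=\text{height}(M'(s_j))$. Your closing remark that earliestness (rather than mere equivalence to $h$) is the crucial ingredient in locating $v$ is accurate, and it makes explicit a step that the paper's own proof passes over rather tersely.

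There is, however, a genuine flaw in your setup: you declare $h(c)$ to be a $1$-pattern and speak of ``the unique node $w$ with $h(c)[w]=x_1$''. Homomorphisms in this paper are not assumed linear --- they may copy --- so $h(c)$ can contain arbitrarily many occurrences of the hole $h(x_1)$; e.g.\ in Example~\ref{hom example 1} we have $h(a(x_1))=f(h(x_1),h(x_1))$, so already the context $a(x_1)$ produces two holes. The paper therefore writes $h(c)=p[h(x_1),\dots,h(x_1)]$ with $p\in T_\Delta\{X_n\}$ and shows that \emph{some} hole of $h(c)$ is an ancestor of $v$. Your argument does survive this correction: the negation of ``some hole is an ancestor of $v$'' is exactly ``$v$ lies in the ground part $p$ of $h(c)$'' (this formulation also covers the case that $v$ is disjoint from every hole, a case which your analysis --- treating only ``$v$ a strict ancestor of $w$'' --- omits), and in that situation your earliest-property contradiction applies verbatim, since $M'(cs)/v$ then has the fixed root $p[v]\in\Delta$ for every $s$ reaching the relevant la-state; the final height bound is then taken with respect to whichever hole lies above $v$. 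So the repair requires no new idea, but as written the central object $w$ of your proof is ill-defined whenever $h$ copies, and the case split on the position of $v$ is incomplete.
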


Using Lemma~\ref{difference tree bound} we can easily determine $\text{dif}\in \mathbb{N}$ such that $\text{dif}$ is an upper bound on the height of the difference trees of $M'$ if a homomorphism equivalent to $M'$ exists. Using $\text{dif}$ we can apply
 Algorithm 44 of  \cite{DBLP:journals/tcs/EngelfrietMS16} to determine whether there is a transducer equivalent to $M'$.
If Algorithm~44 of  \cite{DBLP:journals/tcs/EngelfrietMS16} yields ``no'' then obviously there cannot be a homomorphism equivalent to
$M'$. If on the other hand Algorithm~44 yields a transducer $T$, then we have to test whether $T$ and $M'$ are truly equivalent.
Equivalence of $T$ and $M'$ is decidable~\cite[Theorem 2]{DBLP:journals/ijfcs/Maneth15}. We must test this equivalence
because our input for Algorithm~44 of  \cite{DBLP:journals/tcs/EngelfrietMS16}  may yield \emph{false positives}. This occurs if 
there is no homomorphism equivalent to $M'$ as in this case the difference bound obtained via  Lemma~\ref{difference tree bound} is
potentially wrong which may cause Algorithm~44 to generate a transducer $T$ that is not equivalent to $M'$.
If our test yields that $M'$ and $T$ are not equivalent, then we output ``no''.

Otherwise we can determine whether there is a homomorphism $h$ equivalent to $T$ as described in the previous subsection.

 \begin{theorem}
 	It is decidable whether there is a homomorphism equivalent to a given la-transducer.
 \end{theorem}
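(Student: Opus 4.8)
The plan is to reduce the look-ahead case to the total, look-ahead-free case already settled by Theorem~\ref{total hom decision}, using as an intermediate step an ordinary transducer $T$ equivalent to a suitable total restriction $M'$ of $M$. First I would compute the domain of $M$, i.e.\ the set of inputs on which $M$ is defined; this is a regular tree language, obtained by combining the la-automaton with the definedness of the rules. Since a homomorphism with input alphabet $\Sigma'$ is total on $T_{\Sigma'}$, a necessary condition for $M$ to be equivalent to a homomorphism is that its domain equals $T_{\Sigma'}$ for some sub-alphabet $\Sigma'\subseteq\Sigma$. I would check this and, if it fails, output ``no''; otherwise I let $M'$ be $M$ restricted to $T_{\Sigma'}$, which is total, and normalize it to a canonical earliest la-transducer as in~\cite{DBLP:journals/tcs/EngelfrietMS16}.

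Next I would try to obtain a look-ahead-free transducer equivalent to $M'$. Since any homomorphism is in particular a look-ahead-free transducer, $M'$ being equivalent to a homomorphism implies that such a transducer exists, and Algorithm~44 of~\cite{DBLP:journals/tcs/EngelfrietMS16} can construct it, provided a difference bound is supplied. Here Lemma~\ref{difference tree bound} is exactly what is needed: choosing trees $s_1,\dots,s_n$ of minimal height reaching the la-states and taking $\text{dif}=\max_i \text{height}(M'(s_i))$ yields a valid difference bound whenever $M'$ is equivalent to a homomorphism. I would run Algorithm~44 with this candidate bound; if it returns ``no'', then no look-ahead-free transducer, hence no homomorphism, is equivalent to $M'$, and I output ``no''.

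If Algorithm~44 returns a transducer $T$, I cannot yet trust it, because $\text{dif}$ is only certified correct under the assumption that an equivalent homomorphism exists; otherwise it may be too small and $T$ a \emph{false positive}. I would therefore test whether $T$ and $M'$ are genuinely equivalent, which is decidable by~\cite[Theorem~2]{DBLP:journals/ijfcs/Maneth15}. If they are not equivalent I output ``no''. If they are, then $T$ is a total look-ahead-free transducer equivalent to $M'$; I would bring $T$ into canonical earliest form and invoke Theorem~\ref{total hom decision} to decide whether $T$, and hence $M'$, is equivalent to a homomorphism $h$, constructing $h$ in the affirmative case. Since $M$ and $M'$ agree on $T_{\Sigma'}$ and $M$ is undefined off $T_{\Sigma'}$, such an $h$ (with input alphabet $\Sigma'$) is also equivalent to $M$.

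For correctness one checks both directions: if $M$ is equivalent to a homomorphism $h_0$, then every guarded step succeeds, since the domain check passes, Lemma~\ref{difference tree bound} makes $\text{dif}$ valid, a restriction of $h_0$ witnesses an equivalent transducer so Algorithm~44 succeeds, the equivalence test passes, and Theorem~\ref{total hom decision} yields a homomorphism; conversely, whenever the procedure answers ``yes'', its output is a genuine equivalent homomorphism by construction together with the equivalence test. The hard part — and the only place where soundness is delicate — is precisely this interplay: Lemma~\ref{difference tree bound} certifies $\text{dif}$ only under the hypothesis we are trying to decide, so Algorithm~44 may emit a spurious $T$, and the explicit equivalence test $T\equiv M'$ is what rescues soundness. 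Arguing this, along with the claim that the domain of $M$ having the form $T_{\Sigma'}$ is truly necessary for equivalence with a homomorphism, is the crux of the proof.
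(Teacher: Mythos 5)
Your proposal is correct and follows essentially the same route as the paper: restrict $M$ to a sub-alphabet domain $T_{\Sigma'}$ (rejecting if the domain is not of this form), normalize the total restriction $M'$ to canonical earliest form, run Algorithm~44 of \cite{DBLP:journals/tcs/EngelfrietMS16} with the difference bound from Lemma~\ref{difference tree bound}, guard against false positives with an explicit equivalence test of $T$ and $M'$ via \cite[Theorem~2]{DBLP:journals/ijfcs/Maneth15}, and finally apply the total, look-ahead-free decision procedure of Theorem~\ref{total hom decision}. Your identification of the conditional validity of the difference bound and the resulting need for the equivalence test matches the paper's own reasoning exactly.
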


\section{Conclusions}

We have proved that for a deterministic top-down
tree transducer $M$ with look-ahead, it is decidable
whether or not its translation can be realized by
a linear transducer (without look-ahead).
We have further shown that for such a transducer $M$ it is decidable whether
or not its translation is realizable by a one-state 
transducer (called a tree homomorphism).
In both cases, equivalent transducers in the respective subclass can
be constructed if affirmative.

One may wonder whether our results can be 
generalized to larger classes of given transducers.
It can easily be generalized to nondeterministic
top-down tree transducers: first decide if the transducer
is functional~\cite{DBLP:journals/actaC/Esik81} 
and if so, construct an equivalent
deterministic top-down tree transducer with 
look-ahead~\cite{DBLP:journals/ipl/Engelfriet78}.
Note that the result of Engelfriet~\cite{DBLP:journals/ipl/Engelfriet78} shows that
for any composition of nondeterministic top-down and 
bottom-up transducers that is functional, an equivalent
deterministic top-down tree transducer with look-ahead
can be constructed.
This raises the question, whether or not for a composition
of nondeterministic transducers, functionality is
decidable. To the best of our knowledge, this is an 
open problem.

In future work, it would be nice to extend our result of 
deciding homomorphisms within deterministic
top-down tree transducers, to the case that for a
given $k$ one decides whether an equivalent 
top-down tree transducer with $k$ states exists
(and if so construct such a transducer). 
This would offer a state-minimization method.

\bibliographystyle{splncs04}
\bibliography{lit}

\appendix
\section{Transducers with Look-Ahead}\label{la-transducer section}
A \emph{transducer with regular look-ahead} (called \emph{la-transducer} for short) $M$
is a tuple $(Q,\Sigma,\Delta,R,B,A)$, where $Q,\Sigma$ and $\Delta$ are as for transducers, 
$B$ is the \emph{look-ahead automaton} (\emph{la-automaton}), and
$A$ assigns an axiom to each accepting state of $B$.
Here, the la-automaton $B$ is a deterministic bottom-up automaton, i.e., a tuple
$B=(L,\Sigma,\delta,F)$, where $L$ is a finite set of states,
$F\subseteq L$ is the set of final states,
$\Sigma$ is the input alphabet, and
$\delta$ is the (possibly partial) transition function.
The states of  $B$ are called \emph{la-states}.
The transition function $\delta$ is defined as
$\delta(a,l_1,\dots,l_k)=l$, where $a\in \Sigma_k$ and $l,l_1,\dots,l_k\in L$.
The mapping $\delta$ can be extended to a function
$\delta^*$ from $T_\Sigma [L]$ to $L$ in the obvious way.
We denote by $\text{dom}(l)$ the set of trees $t$ with $\delta^*(t)=l$
and let $\mathcal{L}(B)=\bigcup_{l\in F}\text{dom}(l)$. 
If $L=F$ then $B$ is \emph{total}.

For all $q\in Q$, $a\in \Sigma_k$ and $l_1,\dots, l_k\in L$
the set $R$ contains at most one rule of the form $q(a(\stat{x_1}{l_1},\dots, \stat{x_k}{l_k})) \to t$ where $t\in T_\Delta [Q(X_k)]$.
The right-hand side of such a rule is denoted by $\text{rhs}_M (q,a,l_1,\dots,l_k)$.
If $M$ is  clear from the context, then we omit the subscript $M$ and write $\rhs{q,a,l_1,\dots,l_k}$ instead.

To define the semantics of $M$ we consider the la-automaton $B$
as a (partial) function  annotating each input node by an appropriate sequence of states in $L$.
We demand that $B$  annotates symbols in $L$  by $x_1$. 
For instance, consider $B_a=(\{p,p'\},\{f^2,a^0,b^0\},\delta,\{p'\})$
with $\delta(a)=p'$, $\delta(b)=p$ and $\delta(f,p',p)=p'$, 
$\delta(f,p,p')=p$, $\delta(p',p')=p'$, and
$\delta(f,p,p)=p$. It accepts trees whose left-most leaf is labeled $a$.
For  trees $t_1=f(b,f(a,b))$  and $t_2=f(b,f(p,p'))$ we have
\[
B_a(t_1) = \langle f,p,p'\rangle(b,\langle f,p',p\rangle(a,b)) \text{ and } B_a(t_2) = \langle f,p,p'\rangle(b,\langle f,p',p\rangle(x_1, x_1)). 
\]
For $s\in T_\Sigma [L]$, 
the semantics of $M$ is 
 $M(s) = \makro{M}(s) = A(l) [q(x_1)\leftarrow \makro{q} (B(s)) \mid q\in Q]$
where a rule $q(a(\stat{x_1}{l_1},\dots, \stat{x_k}{l_k})) \to t$ is interpreted to be of the form
$q(\langle a,l_1,\dots,l_k\rangle(x_1,\dots,x_k))\to t$. For la-transducers we modify the semantic of
$\makro{q}$ slightly.
Let $V$ be the set of nodes.
For $q\in Q$, $u\in V$, we denote by $\makro{q}_u$  the partial function from
$T_\Sigma [L]$ to $T_\Delta [Q(V)]$ defined  as follows:
\begin{itemize}
	\item 
	$\makro{q}_u(s)=\rhs{q,a}[q(x_i) \leftarrow \makro{q}_{u.i} (s_i) \mid q\in Q, i\in [k]]$
	for $s=a(s_1,…,s_k)$, $a\in \Sigma_k$, and $s_1,\dots,s_k\in T_\Sigma[L]$ 
	\item $\makro{q}_u(l)= q(u)$ for $l\in L$.
\end{itemize}
The reason why (unconventionally) the input nodes $u$ are added
to the semantics will become apparent later. We write $\makro{q}$ instead of $\makro{q}_u$  to refer to an arbitrary $u$.
For la-transducers we define \emph{linearity} analogously to ordinary transducers.
The la-transducer $M$ is \emph{total} if the domain of the function $\makro{M}$ equals $T_\Sigma$.
In the this case, the la-automaton of $M$ is total as well.
Two la-transducers $M_1$ and $M_2$ are \emph{equivalent} if
the functions $\makro{M_1},\makro{M_2}$ restricted to ground input trees
are equal.

\begin{example}
Consider the la-transducer $M_a=(Q_a,\Sigma,\Delta, R_a,B_a,A_a)$, where
$B_a$ is the la-automaton of before, but with set $F$ changed to $\{p,p'\}$.
We want to perform the identity on input trees with left-most leaf $a$,
and for all other input trees want to change each label to a primed version.
Thus,
\[
A_a(p)=q(x_1)\text{ and }A_a(p')=q_{\text{id}}(x_1).
\]
Further for all $l_1,l_2\in L$ the la-transducer $M_a$ has the rules:
\[
\begin{array}{lcl}
q(f(\stat{x_1}{l_1},\stat{x_2}{l_2}))&\to& f'(q(x_1),q(x_2))\\
q_{\text{id}}(f(\stat{x_1}{l_1},\stat{x_2}{l_2}))&\to& f(q_{\text{id}}(x_1),q_{\text{id}}(x_2)).
\end{array}
\]
plus for $d\in \{a,b\}$ we have rules
$q(d)\to d'$ and $q_{\text{id}}(d)\to d$.
\end{example}
%
An la-transducer $M$  is \emph{look-ahead uniform} (\emph{la-uniform}) if a mapping $\rho$ (called \emph{la-map} of $M$)
from $Q$ to $L$ exists such that
\begin{enumerate}[label=\textnormal{(\arabic*)}]
	\item if $q(x_1)$ occurs in $A(l)$ where $l\in F$ then $\rho(q)=l$ \label{itm:1}
	\item  $R$ contains a rule  $q(a(\stat{x_1}{l_1},\dots, \stat{x_k}{l_k})) \to t$ if and only if $\rho (q)=\delta (a,l_1,\dots l_k)$. Additionally, $\rho (q')=l_i$ if $q'(x_i)$ is contained in $\rhs{q,a,l_1,\dots,l_k}$.
\end{enumerate}

 \begin{lemma}\label{appendix 1}
	For every  la-transducer $M$ an equivalent  la-uniform la-transducer $N$  with the same la-automaton can be constructed.
\end{lemma}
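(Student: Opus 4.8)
The plan is to apply the standard product construction that pairs each state of $M$ with the la-state it is meant to read. I would set $N=(Q',\Sigma,\Delta,R',B,A')$ with $Q'=Q\times L$, the \emph{same} la-automaton $B=(L,\Sigma,\delta,F)$, and la-map $\rho(\langle q,l\rangle)=l$ (the second projection). The axiom is defined by relabeling: for each $l\in F$ put $A'(l)=A(l)[\,q(x_1)\leftarrow\langle q,l\rangle(x_1)\mid q\in Q\,]$. The rules are obtained from those of $M$: for every rule $q(a(\stat{x_1}{l_1},\dots,\stat{x_k}{l_k}))\to t$ of $M$, writing $l=\delta(a,l_1,\dots,l_k)$, I would install in $R'$ the rule
\[
\langle q,l\rangle(a(\stat{x_1}{l_1},\dots,\stat{x_k}{l_k}))\to t[\,q'(x_i)\leftarrow\langle q',l_i\rangle(x_i)\mid q'\in Q,\ i\in[k]\,],
\]
and I would keep only the pairs $\langle q,l\rangle$ reachable from the axioms.

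Verifying la-uniformity for $\rho$ is then immediate from the shape of the construction. Condition~\ref{itm:1} holds because $A'(l)$ only introduces heads $\langle q,l\rangle$, for which $\rho(\langle q,l\rangle)=l$. For condition~\ref{itm:2}, a rule of $R'$ with left-hand head $\langle q,l\rangle$ and children la-states $l_1,\dots,l_k$ is created exactly when $\delta(a,l_1,\dots,l_k)=l=\rho(\langle q,l\rangle)$, and every leaf $\langle q',l_i\rangle(x_i)$ appearing in its right-hand side satisfies $\rho(\langle q',l_i\rangle)=l_i$, i.e.\ matches the la-state annotating the $i$-th child.

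For equivalence I would prove, by structural induction on ground $s\in T_\Sigma$, the claim $\makro{\langle q,\delta^*(s)\rangle}(B(s))=\makro{q}(B(s))$ for all $q\in Q$. If $s=a\in\Sigma_0$, the right-hand side of the leaf rule lies in $T_\Delta$ and is ground, so the relabeling $q'(x_i)\mapsto\langle q',l_i\rangle(x_i)$ acts trivially and both sides coincide. If $s=a(s_1,\dots,s_k)$ with $l_i=\delta^*(s_i)$ and $l=\delta(a,l_1,\dots,l_k)$, applying the rule of $\langle q,l\rangle$ expands each $\langle q',l_i\rangle(x_i)$ to $\makro{\langle q',l_i\rangle}(B(s_i))$, which by the induction hypothesis equals $\makro{q'}(B(s_i))$ --- precisely the expansion produced by the original rule of $q$; hence the two outputs agree and are defined simultaneously. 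Substituting this identity into the axioms $A'(l)$ and $A(l)$ for $l\in F$ gives $\makro{N}(s)=\makro{M}(s)$ on all ground $s$, i.e.\ equivalence.

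The routine induction is not the difficulty; the delicate point is making condition~\ref{itm:2} hold as a genuine biconditional while preserving $\mathrm{dom}(M)$ when $M$ is partial. The construction only installs a rule for $\langle q,l\rangle$ when $M$ actually has the matching rule, so if $M$ lacks a rule on some $(a,l_1,\dots,l_k)$ for which $\delta$ is nonetheless defined, the forward implication of~\ref{itm:2} is threatened. I would resolve this by assuming, as is standard for look-ahead transducers and without loss of generality, that $B$ already recognizes exactly $\mathrm{dom}(M)$, so that $\delta(a,l_1,\dots,l_k)$ is defined precisely on the tuples occurring in accepting computations and on which the required rules of $M$ are present; reading condition~\ref{itm:2} over the reachable states of $Q'$ then yields the biconditional without altering $B$. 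The convention that $B$ relabels la-states by $x_1$ plays no role here, since equivalence is tested only on ground inputs.
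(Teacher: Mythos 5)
Your product construction and the equivalence induction are fine as far as they go, and for \emph{total} la-transducers they essentially re-derive the known result (Lemma~13 of \cite{DBLP:journals/tcs/EngelfrietMS16}) that the paper simply invokes. The genuine gap is exactly at the point you flag yourself: the ``without loss of generality'' assumption that $\mathcal{L}(B)=\mathrm{dom}(M)$. This is not WLOG under the constraints of the statement. The lemma requires $N$ to have \emph{the same} la-automaton $B$ as $M$, so you may not replace $B$ by a refinement recognizing exactly $\mathrm{dom}(M)$; and you may not enlarge $\mathrm{dom}(M)$ either, since equivalence of partial transducers includes equality of domains. For genuinely partial $M$ one can have $\mathrm{dom}(M)\subsetneq\mathcal{L}(B)$: take $B$ with a single, accepting la-state $l$ and total $\delta$, and $M$ with axiom $q(x_1)$, a rule $q(e)\to e$, but no rule for $q$ on $a$. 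Then $\langle q,l\rangle$ is reachable and $\delta(a,l)=l=\rho(\langle q,l\rangle)$, yet your $R'$ contains no rule for this tuple, so condition~(2) fails in the direction ``$\rho(q)=\delta(a,l_1,\dots,l_k)$ implies a rule exists'' (that is the direction at risk, incidentally, not the forward one). Restricting to reachable pairs does not repair this, because reachability of $\langle q,l\rangle$ says nothing about $M$ possessing rules for \emph{all} tuples $(a,l_1,\dots,l_k)$ with $\delta(a,l_1,\dots,l_k)=l$.

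The paper handles partiality --- which is the actual content of this lemma, the total case being known --- by a different, explicitly constructive mechanism: it first \emph{totalizes} $M$ into $M_t$ by adding a fresh output symbol $\bot$ and a fresh la-state $\square$, with dummy transitions into $\square$ and dummy rules and axioms producing $\bot$, so that $s\in\mathrm{dom}(M)$ if and only if $M_t(s)\in T_\Delta$; it then applies the total-case result of \cite{DBLP:journals/tcs/EngelfrietMS16} to $M_t$, obtaining an la-uniform $M_t'$ over the extended automaton $B_t$; and finally it prunes all axioms and rules involving $\bot$, $\square$, or states mapped to $\square$, arguing that the pruned transducer has la-automaton $B$, is still la-uniform, and computes exactly $M$. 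Your proposal is missing any such mechanism: as written, the ``WLOG'' step either changes $B$ (forbidden by the statement) or silently restricts the lemma to those $M$ whose domain happens to be a union of domains of accepting la-states, which is a real restriction rather than a normalization.
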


\begin{proof}
	Let $M=(Q,\Sigma,\Delta,R,B,A)$. First, we construct a total la-transducer $M_t=(Q,\Sigma,\Delta_t,R_t,B_t,A_t)$ such that $M$ and $M_t$
	generate the same output when processing a tree $s \in \text{dom} (M)$. More precisely, $M_t$ is obtained form $M$
	as follows:
	Let $\bot$ be a fresh symbol of rank $0$ and $\square$ be a fresh la-state. The output alphabet of $M_t$ is $\Delta \cup \{\bot\}$ and
	the set of la-states of $M_t$ is $L \cup \{\square\}$, where $\Delta$ and $L$ are the output alphabet and set of la-states of $M$,
	respectively.
	We obtain the transition function $\delta_t$ of  $B_t$  by  expanding the transition function $\delta$ of $B$   as follows: if $\delta (a,l_1,\dots, l_k )$
	where $a\in \Sigma_k$ and $l_1,\dots,l_k$ are some la-states of $B$ is undefined, 
	then $\delta_t (a,l_1,\dots, l_k )=\square$ otherwise $\delta_t (a,l_1,\dots, l_k )=\delta (a,l_1,\dots, l_k )$. Additionally, if for some $i\in [k]$, $l'_i=\square$, then we define $\delta_t (a,l'_1,\dots, l'_k )=\square$. We define $A_t(l)=A(l)$ if $l$ is a la-state of $B$ and $A_t(\square) =\bot$.
	We obtain $R_t$ by expanding $R$ with rules of the form $q(a(\stat{x_1}{l_1},\dots, \stat{x_k}{l_k})) \to \bot$    if $\text{rhs}_M (q,a,l_1,\dots,l_k)$ is undefined
	where  $q$ is some state, $l_1,\dots,l_k$ are la-states and $a\in \Sigma_k$ or $l_j=\square$, $j\in[k]$.
	
	Clearly $M_t$ is total and  $M(s)=M_t(s)$ for all   $s \in \text{dom} (M)$. Furthermore, it easy to see that if $s' \notin \text{dom} (M)$,
	then the tree $M_t(s')$ contains occurrences of $\bot$. Thus, we conclude that $s \in \text{dom} (M)$ if and only if $M_t (s) \in T_\Delta$
	holds $(*)$.
	By Lemma~13 of \cite{DBLP:journals/tcs/EngelfrietMS16}, we can construct an la-uniform la-transducer $M'_t=(T'_t, B'_t, A'_t)$
	that is equivalent to $M_t$
	and has the same la-automaton $B_t$ as $M_t$. Let $\rho$ be the corresponding la-map.
	For $M'_t$ the following holds
	\begin{description}
		\item[(a)]  For all $s\in \text{dom} (M)$ it holds that no axiom or rule containing $\bot$
		occurs in the computation of $M'_t(s)$ due to $(*)$ as $M_t'$ and $M_t$ are equivalent. Furthermore,
		if $s\notin \text{dom} (M)$ then some axiom or rule containing $\bot$
		occurs in the computation of $M'_t(s)$.
		\item[(b)]  If $s\in \text{dom} (M)$, then 
		for all subtrees $s_\text{sub}$ of $s$, $\delta_t (s_\text{sub})\neq \square$,
		otherwise $\delta_t^*(s)=\square$ and  $M_t'(s)=M_t(s)=\bot$ contradicting $(*)$. Thus no state $q$ such that
		$\rho (q)=\square$ occurs in the computation $M'_t(s)$.
	\end{description}
	Let $N$ be the la-transducer that is obtained by removing all axioms and rules that involve the symbol $\bot$, the la-state $\square$,
	or states $q$ such that $\rho (q)=\square$ from $M_t$. Clearly, the la-automaton of $N$ is $B$.
	We remark that removing these rules and axioms  preserves the la-uniformity.
	Due to (a) and (b) it follows that $N(s)=M_t'(s)=M(s)$ if $s\in \text{dom} (M)$.
	In particular, $N(s)$ is undefined if $s\notin \text{dom}(M)$ due to (a) as $N$ cannot produce any tree in which the symbol $\bot$
	occurs. Thus, due to (b) the la-state $\square$ is not needed. Hence $B$ is sufficient as a la-automaton.\qed
\end{proof}

Note that if $M$ is la-uniform then  all states $q$ such that $\bigsqcup \{\makro{q}(s)\mid s\in \text{dom}(q) \}\neq x_1$
can be determined as in \cite{DBLP:journals/tcs/EngelfrietMS16}.

We say that an la-uniform la-transducer $M$ is \emph{earliest} if $\bigsqcup \{\makro{q}(s)\mid s\in \text{dom}(q) \}=x_1$ holds for all $q\in Q$. 
Informally this means that
for each state $q\in Q$ there are input trees $s_1,s_2 \in T_\Sigma$ such that $\makro{q}(s_1)$ and $\makro{q} (s_2)$ have different root symbols.
We call $M$ \emph{canonical earliest} if $M$ is earliest and for distinct states $q_1$, $q_2$ it holds that if $\rho(q_1)=\rho(q_2)$ then a ground tree $g$ exists such that $\makro{q_1}(g)\neq \makro{q_2}(g)$.

\begin{theorem}\label{canonical earliest theorem}
	For every  la-transducer $M$ an equivalent  canonical earliest la-transducer $N$  with the same la-automaton can be constructed.
\end{theorem}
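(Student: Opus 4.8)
The plan is to transform $M$ in three successive stages, each preserving both equivalence and the underlying la-automaton, namely: (i)~make it la-uniform, (ii)~make it earliest, and (iii)~make it canonical. For the first stage I simply invoke Lemma~\ref{appendix 1}, which yields an equivalent la-uniform la-transducer $M_1=(Q_1,\Sigma,\Delta,R_1,B,A_1)$ sharing the la-automaton $B$ of $M$, together with its la-map $\rho$. All subsequent stages will only reshape right-hand sides and merge states, never touching $B$, so the ``same la-automaton'' requirement is maintained throughout.

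For the earliest stage I would use the remark following Lemma~\ref{appendix 1}: since $M_1$ is la-uniform, the set of states $q$ with $\bigsqcup\{\makro{q}(s)\mid s\in\text{dom}(q)\}\neq x_1$ can be identified as in~\cite{DBLP:journals/tcs/EngelfrietMS16}. For each such state I compute the maximal common output prefix $c_q=\bigsqcup\{\makro{q}(s)\mid s\in\text{dom}(q)\}$, a finite $1$-pattern by the analysis of~\cite{DBLP:journals/tcs/EngelfrietMS16}, and introduce a residual state $q'$ whose translation is $\makro{q}(s)$ with the prefix $c_q$ removed. Every occurrence of $q(x_i)$ in a right-hand side or in an axiom is then replaced by $c_q[x_1\leftarrow q'(x_i)]$, and the rules of $q'$ are obtained by a simultaneous fixpoint propagation that pushes the lifted prefixes upward across all states at once. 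Because lifting output leaves untouched the la-state associated with each state, the resulting transducer $M_2$ remains la-uniform with the same la-map $\rho$ and the same automaton $B$, and now satisfies $\bigsqcup\{\makro{q}(s)\mid s\in\text{dom}(q)\}=x_1$ for every state, i.e.\ it is earliest.

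For the canonical stage I would merge indistinguishable states. For states $q_1,q_2$ with $\rho(q_1)=\rho(q_2)$, equivalence on ground inputs---whether $\makro{q_1}(g)=\makro{q_2}(g)$ for all ground $g$---is decidable, so I can compute the induced partition and collapse each block to a single state, redirecting all calls accordingly. Merging states of equal la-map value neither changes $B$ nor destroys the earliest property (the common output prefix of a merged state is still $x_1$), and it preserves la-uniformity, since the la-uniformity conditions only constrain states that share an la-state. The result is the desired canonical earliest la-transducer $N$ with la-automaton $B$.

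The main obstacle is the earliest stage. The delicate points are that the common prefix $c_q$ is genuinely finite and effectively computable---this fails for arbitrary devices and relies on the structural bound from~\cite{DBLP:journals/tcs/EngelfrietMS16}---and that the simultaneous prefix-lifting fixpoint terminates while preserving both the translation and la-uniformity. Verifying that the residual rules are well defined (in particular that no prefix is lifted through a state in a way that conflicts with the determinism of $B$) is the heart of the argument, and is exactly the place where the normal-form construction of~\cite{DBLP:journals/tcs/EngelfrietMS16} is reused, here strengthened only by the invariant that $B$ is never modified.
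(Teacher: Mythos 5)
Your proposal is correct and follows essentially the same route as the paper: the paper's proof simply appeals to Theorem~15 of~\cite{DBLP:journals/tcs/EngelfrietMS16} (noting it carries over to partial la-transducers after la-uniformization via Lemma~\ref{appendix 1}), and that cited proof is precisely the pipeline you describe --- uniformize, lift maximal common output prefixes to obtain earliestness, then merge equivalent states sharing an la-state --- all while leaving the la-automaton untouched. You have merely unfolded the citation into its constituent steps.
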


\begin{proof}
	The proof is identical to the proof of Theorem~15 in \cite{DBLP:journals/tcs/EngelfrietMS16} (even though Theorem~15 is stated for total
	la-transducer).\qed
\end{proof}

\section{Omitted Proofs}
In the following we  present the proofs that have previously been omitted.
Note that although the following proofs are stated for la-transducers, they 
are applicable to total transducers without look-ahead as well, as 
any such transducer is essentially a total la-transducers whose la-automaton realizes the identity.

Let  $t_1,\dots,t_k$ be trees and $t\in T_\Sigma [X_k]$ with $k> 0$.  
In the following instead of writing $t[x_i\leftarrow t_i \mid i\in[k]]$ we also write $t[t_1,\dots, t_k]$.
  In the following let
$M$ be an la-transducer (with the same tuples) as defined in Section~\ref{la-transducer section}
that is canonical earliest.
 
 \subsection{Proof of Lemma~\ref{auxiliary total lemma}}
 Intuitively, any la-transducer $N$ that
 is equivalent to a canonical earliest la-transducer $M$ and has the same la-automaton as $M$ cannot 
 generate ``more'' output than $M$ on the same input.
 Therefore the following holds.
 
 \begin{lemma}\label{auxilliary appendix}
 	Let $M$ and $N$ be equivalent la-transducers with the same la-automaton. Let $M$ be  canonical earliest.
 	Let $s\in T_\Sigma [L]$ such that $\delta^* (s)\in F$.  Then  $V(N(s)) \subseteq V(M(s))$ and  if $N(s)[v]=d \in \Delta$ then 
 	$M(s)[v]=d$ for all $v\in V(N(s))$.
 \end{lemma}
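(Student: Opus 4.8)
The plan is to pull the statement, which concerns the partial outputs $M(s),N(s)\in T_\Delta[Q(V)]$, back to ground inputs, where the equivalence of $M$ and $N$ applies verbatim, and then to compare the two $\Delta$-prefixes through their maximal common completions. Since $M$ and $N$ share the la-automaton $B$ (which I may assume trim, so that every $\text{dom}(l)$ is nonempty), I would fix for each node $u_i$ of $s$ carrying an la-state $l_i$ an arbitrary $g_i\in \text{dom}(l_i)$ and form the ground input $s[\vec g]=s[u_i\leftarrow g_i\mid i]$. As $\delta^*(s)\in F$, each $s[\vec g]$ lies in $\mathcal L(B)$, so both transducers are defined on it and $M(s[\vec g])=N(s[\vec g])$ by equivalence. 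By the usual compositionality of the top-down semantics, $M(s[\vec g])$ arises from $M(s)$ by replacing every placeholder leaf $q(u_i)$ with $\makro{q}(g_i)$ (and similarly for $N$); in particular the $\Delta$-labelled part of $M(s)$ sits unchanged as a prefix of every ground completion.

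Writing $\lfloor\tau\rfloor$ for the pattern obtained from $\tau$ by turning each placeholder leaf into a fresh in-order variable, and $t\sqsubseteq t'$ for ``$t$ is a prefix of $t'$'', the crux I would establish is
\[
\textstyle\bigsqcup\{\,M(s[\vec g])\mid g_i\in \text{dom}(l_i)\,\}\;=\;\lfloor M(s)\rfloor,\qquad\text{(P)}
\]
i.e.\ that for an earliest transducer the already-emitted $\Delta$-part of $M(s)$ is precisely the maximal common prefix of all its ground completions. The direction $\sqsupseteq$ is clear, since $\lfloor M(s)\rfloor$ is a prefix of every instantiation. For $\sqsubseteq$ I must rule out that the common prefix descends strictly below some placeholder leaf $q(u_i)$ of $M(s)$, say sitting at node $v$. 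Here I would freeze all $g_j$ with $j\neq i$ and let $g_i$ range over $\text{dom}(l_i)$: the subtree at $v$ then runs through $\{\makro{q}(g_i)\mid g_i\in \text{dom}(l_i)\}$, a family with $\bigsqcup=x_1$ because $M$ is earliest and $\rho(q)=l_i$ by la-uniformity, so it already disagrees at $v$ itself. Since the common prefix over the full family is a prefix of the common prefix over this one-coordinate subfamily, it cannot extend below $v$ either; this is also what makes copying (several placeholders pointing to the same $u_i$) harmless.

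Granting (P), the lemma follows quickly. The trivial half of the same observation gives $\lfloor N(s)\rfloor\sqsubseteq \bigsqcup_{\vec g}N(s[\vec g])$, and by equivalence $N(s[\vec g])=M(s[\vec g])$ for all $\vec g$, so the right-hand side equals $\bigsqcup_{\vec g}M(s[\vec g])=\lfloor M(s)\rfloor$ by (P); hence $\lfloor N(s)\rfloor\sqsubseteq\lfloor M(s)\rfloor$. Reading this off node-wise yields both claims: as $\lfloor\cdot\rfloor$ leaves node sets unchanged, $V(N(s))=V(\lfloor N(s)\rfloor)\subseteq V(\lfloor M(s)\rfloor)=V(M(s))$; and if $N(s)[v]=d\in\Delta$, then $v$ is a non-variable node of $\lfloor N(s)\rfloor$, so the prefix relation forces $\lfloor M(s)\rfloor[v]=d$ and therefore $M(s)[v]=d$. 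I expect the only real obstacle to be (P), and within it the point that earliestness of the single state $q$ together with the monotonicity of $\bigsqcup$ under shrinking the family suffices to block the common prefix from descending past a copied placeholder; the compositionality identity for $M(s[\vec g])$ and the membership $s[\vec g]\in\mathcal L(B)$ I would dispatch in a line each.
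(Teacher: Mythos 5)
Your proof is correct and uses essentially the same approach as the paper: both pull the statement back to ground inputs by substituting arbitrary trees from $\text{dom}(l)$ for the la-state leaves, invoke equivalence there, and then use earliestness together with la-uniformity ($\text{dom}(q)=\text{dom}(\rho(q))$) to force a root-label disagreement among $\{\makro{q}(g)\mid g\in\text{dom}(\rho(q))\}$ at any placeholder position where a fixed $\Delta$-symbol would otherwise be pinned down. The only difference is organizational: the paper runs this as a direct contradiction at a hypothetical node that is state-labelled in $M(s)$ but $\Delta$-labelled in $N(s)$, whereas you factor the identical step through your characterization (P) of the $\Delta$-prefix of $M(s)$ as the maximal common prefix of all ground completions and then read both claims off the prefix relation between the patterns associated with $N(s)$ and $M(s)$.
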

 
 \begin{proof} 
 	Let $B=(L,\Sigma,\delta,F)$ be the la-automaton of $M$ and $N$.
 	Assume to the contrary that a node $v\in V(N(s)) \setminus V(M(s))$ exists.	
 	As $M$ and $N$ are equivalent, there must exist a proper
 	ancestor $v'$ of $v$ such that $v'$ is labeled by $q(u)$ in  $M(s)$ where $q$ is some state in $Q$ where $u\in V(s)$ such that $s[u]=l\in L$. (This implies $\rho (q)=l$.) Obviously, $v'$ is labeled by some symbol $d\in\Delta$ in  $N(s)$. We show that no such node $v'$
 	can exist.
 	Clearly $s'=s [l'\leftarrow s_{l'} \mid l'\in L, s_{l'}\in \text{dom} (l')]\in \text{dom} (B)$ and therefore $s'\in \text{dom} (M)$ as $M$ is
 	la-uniform. 
 	Hence, $M (s')[v']=N (s') [v']$  as $M$ and $N$ are  equivalent.  Therefore,   $\makro{q}(s_l) [\lambda]=d$  for all
 	$s_l\in \text{dom} (l)$. This contradicts the the earliest property of  $M$ as $\text{dom}(q)=\text{dom} (l)$
 	due to the la-uniformity of $M$.
 	
 	We have shown above that $V(N(s)) \subseteq V(M(s))$ and that no node $v$ such that $M(s) [v] \notin \Delta$ and  $N(s) [v] \in \Delta$ exists.
 	Hence, if $N(s)[v]\in \Delta$ then $M(s)[v]\in \Delta$ for all $v\in V(N(s))$. Therefore,
 	the equivalence of $M$ and $N$ yields that   if $N(s)[v]=d \in \Delta$ then 
 	$M(s)[v]=d$ for all $v\in V(N(s))$. \qed
 \end{proof}

Let $M$ and $N$ be equivalent la-transducers with the same la-automaton. Let $M$ be  canonical earliest.
Informally, whenever $M$ is processing some input, $N$ also
needs to process that input (but possibly at an output ancestor node, if here
$N$ lags behind $M$).
Due to Lemma~\ref{auxilliary appendix}, the trees generated 
by $M$ and $N$ on input $s$ start to ``diverge'' at the nodes in $ V_{\neg\Delta} (N(s))$, where $ V_{\neg\Delta} (N(s))$ denotes 
the set of all nodes in $V(N(s))$ not labeled by symbols in $\Delta$. Thus, the following holds.

\begin{corollary}\label{auxilliary appendix 2}
	Let $M$ and $N$ be as in Lemma~\ref{auxilliary appendix}.
	Let $s\in T_\Sigma [L]$ such that $\delta^* (s)\in F$.  Then $N(s) [ v\leftarrow  M(s)/v \mid v \in V_{\neg\Delta} (N(s)) ]=   M(s) $.
\end{corollary}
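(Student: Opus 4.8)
The plan is to prove the displayed identity by showing that the tree on the left-hand side and $M(s)$ have the same node set and carry the same label at every node, with Lemma~\ref{auxilliary appendix} supplying both inclusions. The first thing I would record is a structural observation: every node in $V_{\neg\Delta}(N(s))$ is a \emph{leaf} of $N(s)$. Indeed $N(s)\in T_\Delta[Q(V)]$, so a node not labelled by a symbol of $\Delta$ must carry a label of the form $q(u)$ with $q\in Q$ and $u\in V$, and such labels are rank-zero in the output tree. Hence the nodes in $V_{\neg\Delta}(N(s))$ are pairwise incomparable, the simultaneous substitution on the left-hand side is well defined, and grafting $M(s)/v$ at each such leaf $v$ merely replaces one leaf by a subtree.

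Next I would establish a partition claim: writing $D := V(N(s))\setminus V_{\neg\Delta}(N(s))$ for the $\Delta$-labelled part of $N(s)$, every node $w\in V(M(s))$ either lies in $D$, or has an ancestor $u\in V_{\neg\Delta}(N(s))$ (and these two cases are disjoint, since frontier nodes are leaves of $N(s)$ and $D\subseteq V(N(s))$). To prove this, take the longest prefix $u$ of $w$ with $u\in V(N(s))$; this exists because $\lambda\in V(N(s))$ and $V(N(s))\subseteq V(M(s))$ by Lemma~\ref{auxilliary appendix}. If $u=w$ then $w\in V(N(s))$, so $w\in D$ or $w\in V_{\neg\Delta}(N(s))$, taking $u=w$ as its own ancestor. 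If $u$ is a proper prefix of $w$, I claim $N(s)[u]\notin\Delta$: otherwise $N(s)[u]=d\in\Delta$ of some rank $k$, and by Lemma~\ref{auxilliary appendix} also $M(s)[u]=d$, so $u$ has children $u.1,\dots,u.k$ in both trees; but the child of $u$ on the path to $w$ in $M(s)$ is some $u.i$, which would then lie in $V(N(s))$, contradicting maximality of $u$. Hence $u\in V_{\neg\Delta}(N(s))$, and uniqueness of the frontier ancestor follows because such nodes are incomparable.

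With the partition in hand the conclusion is a routine label check. For $w\in D$ the substitution leaves the label untouched, so the left-hand side carries $N(s)[w]$, which equals $M(s)[w]$ by the second part of Lemma~\ref{auxilliary appendix}. For $w=u.w'$ with frontier ancestor $u\in V_{\neg\Delta}(N(s))$, the subtree at $u$ has been replaced by $M(s)/u$, so the left-hand side agrees with $(M(s)/u)[w']=M(s)[w]$. The same partition shows that the node sets of the two trees coincide, so together with the label agreement I obtain the claimed equality.

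The step I expect to require the most care is the partition claim, specifically the argument that the longest common prefix $u$ must be a state-labelled (frontier) node rather than an internal $\Delta$-node. The crux is that a matching $\Delta$-label forces matching arity and hence matching children in both output trees, so $N(s)$ can only ``stop'' where it emits a state call; everything else is bookkeeping that follows directly from Lemma~\ref{auxilliary appendix}.
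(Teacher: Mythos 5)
Your proof is correct and follows exactly the route the paper intends: the paper states this corollary without a detailed proof, as an immediate consequence of Lemma~\ref{auxilliary appendix} ("the trees diverge at the nodes in $V_{\neg\Delta}(N(s))$"), and your argument is a faithful formalization of that idea, using only the node-set inclusion and the $\Delta$-label agreement from the lemma together with the observation that non-$\Delta$ nodes are leaves of $N(s)$.
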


 \subsection{Proof of Lemma~\ref{zero output twinned}}
 For la-transducers \emph{zero output twinnedness} is defined analogously to ordinary transducers.
 Before we prove that an la-transducer $M$ that is equivalent to a la-transducer $N$ that has the same la-automaton as $M$
 must be zero output twinned, we prove the following auxiliary results.
 
  \begin{proposition}
 \cite[Lemma~12]{DBLP:journals/tcs/EngelfrietMS16}
 Let $s\in T_\Sigma [L]$, $M$ be an la-transducer and $q$ be a state of $M$.
 Note that   if  $q(u)$ occurs in  $\makro{q'} (s)$, where $q'$ is some state,  then  $\rho (q)=s[u]$ where $s[u]\in L$.
\end{proposition}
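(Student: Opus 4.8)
The statement is a structural property of the output of an la-uniform transducer, so the plan is to prove it by induction on the structure of $s$, exploiting the two defining properties of the la-map $\rho$. The starting observation is that, by the definition of $\makro{\cdot}$, an occurrence of a leaf $q(u)$ with $q\in Q$ can enter the output $\makro{q'}(s)$ in only one way: through a base application $\makro{q}_u(l)=q(u)$ to a leaf of $s$ at position $u$ whose label is some $l\in L$. Indeed, all other leaves contributed by a right-hand side lie in $\Delta$, and the subterms $q(x_i)$, $q\in Q$, of a right-hand side are replaced by recursive calls rather than surviving as leaves. Consequently $s[u]=l\in L$, which already settles the second half of the claim; it remains to show $\rho(q)=l$.

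For this I would establish the following invariant by induction on $|s'|$: for every state $p$ and every subtree $s'=s/v$ on which a subcomputation $\makro{p}_v(s')$ is invoked while evaluating $\makro{q'}(s)$, we have $\rho(p)=\delta^*(s')$. The base case is $s'=l\in L$, where $\makro{p}_v(l)=p(v)$ and $\delta^*(l)=l$, so the invariant reads $\rho(p)=l$. For the inductive step, write $s'=a(s_1,\dots,s_k)$ with $a\in\Sigma_k$ and put $l_i=\delta^*(s_i)$, so that $\delta^*(s')=\delta(a,l_1,\dots,l_k)$. Since the subcomputation is defined and $M$ is la-uniform, the rule used must be $p(a(\stat{x_1}{l_1},\dots,\stat{x_k}{l_k}))\to t$ with $t=\rhs{p,a,l_1,\dots,l_k}$; the second defining property of $\rho$ then gives both $\rho(p)=\delta(a,l_1,\dots,l_k)=\delta^*(s')$ and $\rho(p')=l_i=\delta^*(s_i)$ for every $p'(x_i)$ occurring in $t$. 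Hence each recursive call $\makro{p'}_{v.i}(s_i)$ again satisfies the hypothesis of the invariant, and the induction goes through.

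Applying the invariant at the leaf where $q(u)$ is generated yields $\rho(q)=\delta^*(s/u)=\delta^*(l)=l=s[u]$, as desired. I expect the only genuinely delicate point to be the anchoring of the invariant at the root: the induction presupposes $\rho(q')=\delta^*(s)$ for the outermost call $\makro{q'}(s)$. This is precisely what the first defining property of $\rho$ guarantees when $\makro{q'}(s)$ arises from an actual run, since the axiom $A(l)$ only introduces states $q'$ with $\rho(q')=l=\delta^*(s)$; more generally it is the consistency condition under which $\makro{q'}(s)$ is defined, and without it the base case $\makro{q'}(l)=q'(u)$ could produce a mismatch. The remaining bookkeeping --- tracking the node prefixes $v.i$ so that the position $u$ in $q(u)$ correctly names the processed leaf of $s$ --- is routine and can be suppressed by reading $u$ as the absolute address of that leaf.
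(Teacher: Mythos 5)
The paper never proves this proposition: it is imported wholesale, by citation, as Lemma~12 of \cite{DBLP:journals/tcs/EngelfrietMS16}, so there is no internal proof to compare yours against. Judged on its own merits, your self-contained derivation from la-uniformity is correct in substance and supplies exactly what the citation leaves implicit: a leaf $q(u)$ can only be created by the base case $\makro{q}_u(l)=q(u)$ at a leaf of $s$ labelled $l\in L$ (right-hand sides contribute only $\Delta$-leaves, and their $Q(X)$-subterms are all substituted away), which settles $s[u]\in L$, and la-uniformity ties $\rho$ to the look-ahead states that govern rule selection. Two remarks. First, your scaffolding points the wrong way: the invariant $\rho(p)=\delta^*(s')$ for invoked calls $\makro{p}_v(s')$ propagates \emph{downward}, from a call to the calls it spawns, so the natural induction is on the depth of the invocation node $v$, not on $|s'|$. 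This is harmless, because your ``inductive step'' never actually uses a size-based hypothesis: for an internal node, the mere existence of the applied rule together with the ``if and only if'' in condition (2) of la-uniformity yields $\rho(p)=\delta(a,l_1,\dots,l_k)=\delta^*(s')$ outright, and the second half of condition (2) hands the invariant down to every spawned call, including the leaf calls that create the occurrences $q(u)$. Read this way your argument is sound; in fact, for any non-root leaf $u$ the conclusion $\rho(q)=s[u]$ already follows from the parent's rule alone, with no chain back to the root. Second, the delicate point you isolate is genuine: for $s=l\in L$ the definition gives $\makro{q'}_u(l)=q'(u)$ unconditionally, so the literal statement can fail unless the outermost call is look-ahead consistent, i.e.\ $\rho(q')=\delta^*(s)$. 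Your resolution --- that consistency is guaranteed by condition (1) whenever the call originates from an axiom --- matches how the proposition is actually used in this paper, namely only through states occurring in axioms or in chains of defined rule applications descending from them.
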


If  $q(u)$ occurs in  $M(s)$, then some state $q'$ must exist such that $q'(x_1)$ occurs in an axiom and 
$q(u)$ occurs in  $\makro{q'} (s)$. Hence the following holds.
 
  \begin{corollary}\label{appendix corollary}
	Let $s\in T_\Sigma [L]$, $M$ be an la-transducer.
	Note that   if  $q(u)$ occurs in  $M(s)$,  then  $\rho (q)=s[u]$ where $s[u]\in L$.
\end{corollary}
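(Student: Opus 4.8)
The plan is to obtain the corollary as an immediate consequence of the preceding Proposition (Lemma~12 of \cite{DBLP:journals/tcs/EngelfrietMS16}) by unfolding the definition of the semantics of $M$. Recall that whenever $M(s)$ is defined we have $\delta^*(s)=l\in F$, and the output is computed as $M(s)=A(l)[q'(x_1)\leftarrow \makro{q'}(s)\mid q'\in Q]$; that is, $M(s)$ arises from the axiom $A(l)\in T_\Delta[Q(X_1)]$ by replacing every state-leaf $q'(x_1)$ with the tree $\makro{q'}(s)$.

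First I would note that the axiom $A(l)$ consists only of output symbols from $\Delta$ and state-leaves of the form $q'(x_1)$, all of which carry the variable $x_1$ rather than an input node $u\in V(s)$. Consequently, no node labelled $q(u)$ with $u\in V(s)$ can be contributed by the axiom itself; every such occurrence in $M(s)$ must therefore lie inside one of the substituted subtrees $\makro{q'}(s)$, where $q'$ is a state with $q'(x_1)$ occurring in $A(l)$. This is exactly the intermediate observation recorded immediately before the statement.

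It then remains to apply the Proposition to that subtree: since $q(u)$ occurs in $\makro{q'}(s)$ for some state $q'$, the Proposition yields $\rho(q)=s[u]$ with $s[u]\in L$, which is precisely the assertion of the corollary.

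The argument is essentially bookkeeping, so I do not anticipate a genuine obstacle; the only points deserving care are that $M(s)$ is defined only when $\delta^*(s)\in F$, so that the axiom appearing in the substitution is unambiguously $A(l)$, and the verification that the state-leaves of $A(l)$ themselves introduce no $q(u)$-occurrence, so that every such occurrence really does originate from a subtree $\makro{q'}(s)$ to which the Proposition is applicable.
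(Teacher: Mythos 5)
Your proposal is correct and follows essentially the same route as the paper: the paper's argument is exactly that any occurrence of $q(u)$ in $M(s)$ must lie inside some substituted subtree $\makro{q'}(s)$ with $q'(x_1)$ occurring in the axiom, after which the preceding Proposition gives $\rho(q)=s[u]\in L$. Your write-up merely makes explicit the (correct) bookkeeping observation that the axiom's own leaves carry $x_1$ rather than input nodes and so cannot themselves contribute a $q(u)$-occurrence.
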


 \begin{proposition}
 	Let  $q_1$ and $q_2$ pairwise occurring states in $M$, $c$ be a context and $l$ be a la-state. Let 
 	$(q_1,q_2) {\vdash}^{c l} (q_1, q_2)$. Then  $l=\rho (q_1)=\rho (q_2)$
 	and $\delta^*(c l)=l$.
 \end{proposition}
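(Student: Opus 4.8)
The plan is to extract both conclusions from the la-uniformity of $M$ together with the cited Proposition (Lemma~12 of~\cite{DBLP:journals/tcs/EngelfrietMS16}) relating the occurrence of a state to the la-map $\rho$. First I would fix notation. Since $c$ is a context, the tree $cl$ over $\Sigma\cup L$ has exactly one la-state leaf, namely at the node $w$ where $c$ carries $x_1$; all its other leaves are ground. By the recursive definition of $\makro{q}_u$, a state leaf $q'(u)$ is produced only at an la-state leaf, so every state leaf of $\makro{q_i}(cl)$ has the form $q'(w)$. Consequently the hypothesis $(q_1,q_2)\vdash^{cl}(q_1,q_2)$ unfolds to: $q_1(w)$ occurs in $\makro{q_1}(cl)$ and $q_2(w)$ occurs in $\makro{q_2}(cl)$. (The additional ``generates output'' clause contained in $\vdash$ is not needed here.)

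For the first conclusion I would simply invoke the cited Proposition: whenever $q(u)$ occurs in $\makro{q'}(s)$ one has $\rho(q)=s[u]$. Applying it with $s=cl$, $q'=q=q_i$, and $u=w$ gives $\rho(q_i)=(cl)[w]=l$ for $i=1,2$, i.e.\ $l=\rho(q_1)=\rho(q_2)$.

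For the second conclusion $\delta^*(cl)=l$ I would argue from la-uniformity, splitting on whether $c=x_1$. If $c=x_1$ then $cl=l$ and hence $\delta^*(cl)=l=\rho(q_1)$ directly. Otherwise $cl$ has a proper root labelled by some $a\in\Sigma_k$ whose children carry la-states $l_1,\dots,l_k$, so that $\delta^*(cl)=\delta(a,l_1,\dots,l_k)$. Since $\makro{q_1}(cl)$ is defined --- it contains the leaf $q_1(w)$ --- the root rule $\rhs{q_1,a,l_1,\dots,l_k}$ is defined, and the second clause of la-uniformity then forces $\rho(q_1)=\delta(a,l_1,\dots,l_k)=\delta^*(cl)$. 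Together with the first conclusion this yields $\delta^*(cl)=\rho(q_1)=l$.

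The argument is essentially bookkeeping once these two ingredients are in place; the only points needing care are confirming that all state leaves of $\makro{q_i}(cl)$ sit at the single hole $w$ (so that the cited Proposition is applied at the correct node) and separating the degenerate context $c=x_1$, for which no root rule fires and $\delta^*(cl)=l$ must be read off directly. I do not expect a genuine obstacle beyond getting these node-position details right.
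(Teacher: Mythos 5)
Your proposal is correct and follows essentially the same route as the paper: the first conclusion comes from the cited Proposition (Lemma~12 of~\cite{DBLP:journals/tcs/EngelfrietMS16}) together with the observation that $cl$ has a unique la-state-labeled node, and the second comes from la-uniformity, since $\makro{q_1}(cl)$ being defined forces $\delta^*(cl)=\rho(q_1)=l$. The only cosmetic difference is that the paper invokes the general consequence of la-uniformity that $\makro{q}(s)$ is defined iff $\delta^*(s)=\rho(q)$, whereas you derive the needed instance directly at the root rule (and separate the case $c=x_1$), which amounts to the same argument.
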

 \begin{proof}
 	If $q_i (u)$ occur in  $\makro{q_i} (c l)$, $i=1,2$ then, due to the previous proposition, $c l[u] =\rho (q_i)$.
    Note that as only a single node in $c l$ is labeled by an la-state $c l[u]=l$. Thus  $l=\rho (q_1)=\rho (q_2)$.
 	
 As $M$ is la-uniform, it holds for an arbitrary state $q$ and an arbitrary tree $s$ that $\makro{q} (s)$ is defined if
 and only if $\delta^* (s)=\rho (q)$ where  $\bar{s}\in T_\Sigma [L]$.
 As $\makro{q_1} (c_l)$ and $\makro{q_2} (c_l)$ are defined and   $l=\rho (q_1)=\rho (q_2)$, our second claim follows.\qed
 \end{proof}

 \noindent 
Now  we prove the first part of our main claim.
 
 \begin{lemma}
 	If $M$ is equivalent to a linear la-transducer $N$ that has the same la-automaton as $M$, 
 	then $M$ is zero output twinned. 
 \end{lemma}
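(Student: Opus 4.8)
The plan is to argue by contradiction, reusing the intuition of Example~\ref{zo-twined introducrion example} but phrased for la-transducers. Suppose $M$ is not zero output twinned: then there are pairwise occurring states $q_1,q_2$, a context $c\neq x_1$, and an la-state $l$ such that $(q_1,q_2){\vdash}^{cl}(q_1,q_2)$, where (renaming if necessary) $q_1$ generates output, i.e.\ $\makro{q_1}(cl)\neq q_1(u)$ for the input node $u$ at the hole. By the preceding proposition, $l=\rho(q_1)=\rho(q_2)$ and $\delta^*(cl)=l$, hence $\delta^*(c^n l)=l$ for every $n$, so iterating the loop keeps the annotation consistent and every intermediate tree well-typed. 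Since $q_1,q_2$ occur pairwise, there is a context $c_0$ (with la-state $l$ at its hole) and distinct output nodes $v_1,v_2$ of $M(c_0)$ carrying labels $q_1(x_1)$ and $q_2(x_1)$. I would then study the family of partial inputs $s_n=c_0[x_1\leftarrow c^n l]$. By construction $M(s_n)$ agrees with $M(c_0)$ above $v_1,v_2$ and has $\makro{q_1}(c^n l)$ below $v_1$ and $\makro{q_2}(c^n l)$ below $v_2$; because $q_1$ produces output on the loop, the subtree below $v_1$ has height $\Theta(n)$, and both subtrees terminate in a state ($q_1$, resp.\ $q_2$) reading the single bottom hole $u$ of $s_n$.

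Next I would analyse the linear la-transducer $N$. By Lemma~\ref{auxilliary appendix}, $N(s_n)$ is a prefix of $M(s_n)$ whose $\Delta$-labels coincide with those of $M(s_n)$; in particular $N$ must read the bottom hole $u$, as otherwise it could not match the dependence of $M(s_n)$ on the tree substituted for $u$ (using that $q_1,q_2$ are earliest, so $\makro{q_1}$ and $\makro{q_2}$ vary with their ground input). Linearity of $N$ forces $u$ to be read \emph{at most once}, so $N(s_n)$ contains exactly one leaf $q'(u)$, at some output node $\tilde v$. Filling $u$ with varying ground trees of $\text{dom}(l)$ shows that this single leaf must reproduce $M$'s output both below $v_1$ and below $v_2$; hence $\tilde v$ is a common ancestor of $v_1$ and $v_2$, i.e.\ $\tilde v$ lies at or above their lowest common ancestor $w$. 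Crucially, $w$ and all its ancestors already occur in $M(c_0)$, so the depth of $\tilde v$ is bounded independently of $n$. Since for the partial input $s_n$ nothing is produced below the hole-reading leaf, $N(s_n)$ has height bounded by a constant.

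Finally I would combine the two estimates: $\text{height}(M(s_n))\ge \text{depth}(v_1)+\text{height}(\makro{q_1}(c^n l))=\Theta(n)$, whereas $\text{height}(N(s_n))=O(1)$, so the height difference of the two equivalent la-transducers on the common partial input $s_n$ is unbounded. This contradicts the known bound on the height difference of equivalent (la-)transducers on partial inputs~\cite{DBLP:journals/ijfcs/Maneth15}, so no such loop context $c$ can exist and $M$ is zero output twinned. The main obstacle is the middle step: showing that linearity compels $N$ to read the looping input exactly once and at a node above the common ancestor of $v_1,v_2$, since this is precisely what converts the unbounded output of the loop into an unbounded height gap; by comparison, the book-keeping with the la-automaton (ensuring $\delta^*(c^n l)=l$ and that pairwise occurrence forces the same la-state $l$ at the hole) is routine.
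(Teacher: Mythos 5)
Your proposal follows the paper's own argument through all of its structural steps: assume a productive loop $(q_1,q_2)\vdash^{cl}(q_1,q_2)$, use the la-map proposition to get $l=\rho(q_1)=\rho(q_2)$ and $\delta^*(cl)=l$ so that the loop can be pumped consistently, take a context $c_0$ witnessing pairwise occurrence (whose hole must carry the same la-state $l$), apply Lemma~\ref{auxilliary appendix} and linearity to conclude that $N(s_n)$ has exactly one node $\tilde v$ not labeled in $\Delta$, and show that $\tilde v$ is a common ancestor of $v_1$ and $v_2$. Up to that point you and the paper coincide.

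The endings differ, and yours has a gap. You conclude by claiming $\text{height}(N(s_n))=O(1)$ and then invoking the bounded height-difference result of \cite{DBLP:journals/ijfcs/Maneth15} on the partial inputs $s_n$. But your justification for the $O(1)$ bound only controls the depth of the unique hole-reading leaf $\tilde v$; it says nothing about the other branches of $N(s_n)$. Outside the subtree rooted at $\tilde v$, $N(s_n)$ coincides node-for-node with $M(s_n)$ (Corollary~\ref{auxilliary appendix 2}), and those branches contain the subtrees $\makro{q'}(c^n l)$ produced by the \emph{other} state calls of $M(c_0)$; your argument does not rule out that these (necessarily ground) subtrees grow with $n$, in which case the height \emph{difference} need not be unbounded and the contradiction evaporates. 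The claim is in fact true, but it needs an extra stabilization argument: a ground $\makro{q'}(c^n l)$ is eventually constant in $n$, since once no state call of $q'$ survives a traversal of $c$ the output freezes. The paper's ending sidesteps all of this: instead of comparing heights of partial outputs, it fills the hole with a fixed ground tree $s\in\text{dom}(l)$ and chooses $n$ so large that $\text{height}(\makro{q_1}(c^{n}s)) > \text{height}(s)\cdot\text{maxrhs}_N$, where $\text{maxrhs}_N$ is the maximal height of a right-hand side of $N$. Since $\tilde v$ carries a single state $q'$ and is an ancestor of $v_1$, on the ground input the subtree $N(c_0c^{n}s)/\tilde v=\makro{q'}(s)$ has height at most $\text{height}(s)\cdot\text{maxrhs}_N$, yet it must contain $M(c_0c^{n}s)/v_1=\makro{q_1}(c^{n}s)$ --- a direct contradiction with equivalence on ground inputs, requiring no appeal to \cite{DBLP:journals/ijfcs/Maneth15} (whose applicability to la-transducers on la-annotated partial inputs you would otherwise also have to justify) and no bound on the other branches of $N(s_n)$.
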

 
 \begin{proof}
 	We prove  by contradiction that  $M$ is   zero output twinned if $M$ is equivalent to a linear la-transducer $N$ with the same la-automaton.
 	Let $c$ be a context and $l$ be a la-state such that distinct nodes $v_1$ and $v_2$ are labeled $q_1 (u)$ and $q_2 (u)$, respectively, in $M(cl)$. Clearly , $q_1$
 	and $q_2$ are pairwise occurring states. Note that only a single node  labeled by an la-state occurs in $c l$ and thus
 	$l=\rho (q_1)=\rho (q_2)$ due to previous corollary.
 	Let $c'$ be a context and $l'$ be a la-state such that $(q_1,q_2) {\vdash}^{c' l'} (q_1, q_2)$.
 	The  previous proposition  yields $\rho (q_1)=\rho (q_2)=l'$
 	and $\delta^*(c' l')=l'$.  Hence, $l=l'$.
 	Therefore, $M(c c'^{n} l)$ and $N(c c'^{n} l)$ are defined for all $n\in \mathbb{N}$.
 	Furthermore, it is clear that $M(c c'^{n} l)/v_i=\makro{q_i} (c'^{n} l)$, $i=1,2$. 
 	
 	W.l.o.g. assume that $q_1$ generates some output on input $c' l$.
 	Let $s\in T_\Sigma$ such that $\delta^* (s)=l$.
 	Then, it holds that $n'\in \mathbb{N}$ exists such that 
 	$\text{height} (\makro{q_1}({c}^{n'} s)) >  (\text{height}(s) \text{maxrhs}_N ),$ where $\text{maxrhs}_N$ is the maximal height of any right-hand side of $N$ and $s\in T_\Sigma$ such that $\delta^*(s)=l$.	 
 	
 	The tree ${M}(c c'^{n'} l)$ is not ground as $M(c c'^{n} l)/v_i=\makro{q_i} (c'^{n} l)$ is not ground for $i=1,2$.
 	Hence,  ${N}(c c'^{n'} l)$ is not ground due to Lemma~\ref{auxilliary appendix}
 	and due to the linearity of $N$ we conclude that there is exactly one node $v$
 	of  ${N}(c c'^{n'} l)$ not labeled by a symbol in $\Delta$. 
 	We  show that $v$ is an ancestor of $v_1$ and $v_2$. Note that as 
 	$v_1$ and $v_2$ are leaves  in ${M}(c l)$, $v_1$ and $v_2$ are disjoint.
 	Assume the contrary, i.e., that $v$ is either not an ancestor of $v_1$ or $v_2$, then
 	either  ${N}(c c'^{n'} l)/v_1$ or ${N}(c c'^{n'} l)/v_2$ must be defined and be ground, otherwise
 	 ${M}(c c'^{n'} s)=  {N}(c c'^{n'} s)$ cannot hold. Note that ${N}(c  {c'}^{n'}  s)={M}(c {c'}^{n'} s)$ as $M$ and $N$ are equivalent.
 	 This contradicts  
 	Lemma~\ref{auxilliary appendix} as both ${M}(c c'^{n'} l)/v_1$ and ${M}(c c'^{n'} l)/v_2$ are not ground.
 	
 	Let  ${N}(c c'^{n'} l)/v=q'(u')$ where $u'$ is the node of $c c'^{n'} l$ labeled $l$. 
 	Clearly, $q'$ can on input $s$ generate a tree of height at most $(\text{height}(s) \text{maxrhs}_N )< \text{height}({M}(c {c'}^{n'}  s)/v_1)= \text{height}\makro{q_1}({c'}^{n'} s)$. As
 	$v$ is an ancestor of $v_1$,
 	this contradicts the
 	equality ${N}(c  {c'}^{n'}  s)={M}(c {c'}^{n'} s)$. \qed
 \end{proof}

In the following we show how to determine whether or not $M$ is zero output twinned.
For this we construct  a set of pairs $\text{PO}$ such that  $(q_1,q_2)\in \text{PO}$ if and only if $q_1$ and $q_2$ are
pairwise occurring states and a
directed graph $G=(V,E)$.
We define $V:=\text{PO}\cup \{[q_1,q_2] \mid (q_1,q_2)\in  \text{PO}\}$. 
We define the set of edges $E$ as follows. Let $a\in \Sigma_k$, $k>0$, $l_1,\dots,l_k$ be la-states
and $(q_1,q_2)\in \text{PO}$ such that $\bar{q_i}(x_j)$ occurs in $t_i=\rhs{q_i,a,l_1,\dots,l_k}$, $i=1,2$, $j\in [k]$,  then
\begin{enumerate}
	\item  
	there is an edge
	from $[q_1,q_2]$ to $[\bar{q}_1, \bar{q}_2]$.
	\item   there is an edge
	from $(q_1,q_2)$ to $(\bar{q}_1, \bar{q}_2)$ if  $t_i=\bar{q_i}(x_j)$ for  $i=1,2$ 
	\item   there is an edge
	from $(q_1,q_2)$ to $[\bar{q}_1, \bar{q}_2]$ if  $t_1\neq \bar{q_1}(x_j)$ or $t_2\neq \bar{q_2}(x_j)$.
\end{enumerate}
Due to the la-uniformity of $M$, $\text{dom} (q)=\text{dom} (\rho (q))$. With this, structural induction yields that there is a path from
$(q_1,q_2)$ to  $[ \bar{q}_1,\bar{q}_2 ]$ if and only if there exits a context
$c$  and a look ahead state $l$  such that $(q_1,q_2) {\vdash}^{c  l} (\bar{q}_1, \bar{q}_2)$.

For the construction of $\text{PO}$ and hence $G$ consider the following.
If $q_1,q_2$ are pairwise occurring states then one of the following statements must hold
\begin{description}
	\item[(a)] distinct nodes $v_1$, $v_2$ with labels $q_1(x_1)$ and $q_2 (x_1)$ occur in $A(l)$ for some $l\in F$,
	\item[(b)] distinct nodes $v_1$, $v_2$ with labels $q_1(x_i)$ and $q_2 (x_i)$ occur in $\rhs{q,a,l_1,\dots,l_k}$ where $q\in Q$, $a\in \Sigma_k$, $l_1,\dots,l_k \in L$, and $i\in [k]$,
	\item[(c)]  $q_1(x_i)$ occurs in  $\rhs{q'_1,a,l_1,\dots,l_k}$  and
	$q_2 (x_i)$ occurs in $\rhs{q'_2,a,l_1,\dots,l_k}$   where $q'_1$, $q'_2$ are pairwise occurring states, $a\in \Sigma_k$, $l_1,\dots,l_k \in L$, and $i\in [k]$.
\end{description} 
All pairwise occurring states  such that $(a)$ holds can be identified
by comparing the labels of $v_1$ and $v_2$
for all pairs of distinct leaves  $(v_1, v_2)$ of $A(l)$ for all $l\in F$. Obviously $A(l)$ has $O(|M|)$ leaves and $|F| \in O(|M|)$. Thus this step is completed in time $O(|M|^3)$ and we add all identified pairs to $\text{PO}$.
All pairwise occurring states  $q_1,q_2$ such that $(b)$ holds can be identified as follows.
For all right hand sides $t$ of rules in $R$  
we compare the labels of  all pairs of leaves  $(v_1, v_2)$ of $t$. This yields all pairwise occurring states  such that $(b)$ holds. 
This step is completed in time $O(|M|^3)$ and again we add all pairs identified in this step to $\text{PO}$.
Starting from the set $\text{PO}$ we have constructed so far we identify all pairwise occurring states  $q_1,q_2$ such that $(c)$ holds 
inductively. Additionally we construct the set $E$ in this step.

Given  $(\bar{q_1}, \bar{q_2})\in \text{PO}$ we determine all pairs of right hand sides $t_1$, $t_2$ of $R$ such that $t_1=\rhs{\bar{q_1},a,l_1,\dots,l_k}$
and $t_2=\rhs{\bar{q_2},a,l_1,\dots,l_k}$ for some $a\in \Sigma_k$, $l_1,\dots,l_k \in L$. This can be done in  $O(|M|^2)$.
For $t_1$ and $t_2$ we  determine all a pair $(q_1,q_2)$ such that  
$q_1(x_i)$ and $q_2 (x_i)$ occur in $t_1$ and $t_2$, respectively, where $i\in [k]$. Determining all such pairs
can be done in time  $O( |M|^2)$.
We add all these pairs $(q_1,q_2)$ to $\text{PO}$. Additionally, we add an edge from $[\bar{q_1}, \bar{q_2}]$ to  $[q_1,q_2]$ to $E$.
Furthermore, if $t_1=q_1 (x_i)$ and $t_2= q_2 (x_i)$, then we
add an edge from $(\bar{q_1}, \bar{q_2})$ to $({q}_1,{q}_2)$  to $E$, otherwise we
add an edge from $(\bar{q_1}, \bar{q_2})$ to $[{q}_1,{q}_2]$.
Applying this procedure to all elements in $PO$ yields all pairwise occurring states  for which $(c)$ holds. 
Clearly, this step is completed in time $O( |M|^6)$ as the cardinality of $\text{PO}$
cannot exceed $|Q|^2 $.

It can be verified by structural induction that $\text{PO}$  contains all pairs of states that occur pairwise and
that the set $E$ has the properties we specified.
Clearly $\text{PO}$ and $G$ can be constructed in time $O( |M|^6)$.

The size of $G$ is $|V|+|E|\leq 2 |Q|^2+ 4 |Q|^4$, i.e., $|G|\in O( |Q|^4)$.
Determining whether $M$ is zero output twinned consists of applying depth-first search on $G$.
More precisely for each $(q_1,q_2)\in \text{PO}$ it needs to be determined whether $[q_1,q_2]$ is reachable from $(q_1,q_2)$.
This can be done in time $O(|Q|^6)$ as $|\text{PO}|\in O(|Q|^2)$ and $|G|\in O( |Q|^4)$. If $[q_1,q_2]$ is reachable, then clearly $M$ is not zero output twinned.

\subsection{Proof of Lemma~~\ref{lca-conform}}

For la-transducers we modify our \emph{lca-conformity} definition as follows.
Let $s\in T_\Sigma [L]$ such that  $\delta^*(s)\in F$.
We denote by $\nu_M (s,u)$ the lowest common ancestor of all leaves of $M(s)$ that have
label of the form $q(u)$ where $q\in Q$.
We say that $M$ is  \emph{lowest common ancestor conform} (\emph{lca-conform} for short) if  for an arbitrary (partial) input tree $s$ such that  $\delta^*(s)\in F$
the output  tree ${M}(s)$ is lca-conform. 
An output tree  ${M}(s)$  is \emph{lca-conform} 
if for all $u$ such that $\nu_M (s,u)$ is defined, no leaf with label of the form $q(u')$, $u\neq u'$, occurs in $M(s)/ \nu_M (s,u)$.

\begin{lemma}
	If $M$ is equivalent to a linear la-transducer $N$ that has the same la-automaton as $M$, then $M$ is  lca-conform.
\end{lemma}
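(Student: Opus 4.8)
The plan is to mirror the proof of Lemma~\ref{lca-conform} for the total case, replacing the role of the pattern variables $x_i$ by the leaves of the input that carry la-states and replacing every appeal to Lemma~\ref{auxiliary total lemma} by an appeal to its la-analogue, Lemma~\ref{auxilliary appendix}. Fix an input $s\in T_\Sigma[L]$ with $\delta^*(s)\in F$ and a leaf $u$ of $s$ with $s[u]=l\in L$ for which $\nu_M(s,u)$ is defined. Since $M(s)$ then contains a leaf labeled $q(u)$, the output of $M$ on $s$ genuinely depends on the subtree plugged in at $u$; as $M$ is earliest and $N$ is the equivalent linear la-transducer with the same la-automaton, the tree $N(s)$ must contain a leaf $v$ with $N(s)[v]=q'(u)$ for some state $q'$, and by Corollary~\ref{appendix corollary} we have $\rho(q')=l$. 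This node $v$ is the anchor for the whole argument.

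First I would establish the analogue of the property $(*)$ from the total proof: for every node $v$ with $N(s)[v]=q'(u)$, the subtree $M(s)/v$ --- which is defined by Lemma~\ref{auxilliary appendix} --- contains no leaf of the form $q(u')$ with $u'\neq u$. Suppose it did, say $M(s)[\hat v]=q(u')$ for a descendant $\hat v$ of $v$. I would choose a ground tree $s_l\in\text{dom}(l)$ and substitute it at $u$, forming $s[u\leftarrow s_l]$. Because $\delta^*(s_l)=l=s[u]$, the bottom-up run on $u$ and on every node above $u$ is unchanged, so $\delta^*(s[u\leftarrow s_l])\in F$; and since $u$ and $u'$ are distinct, hence incomparable, leaves, the output node $\hat v$ remains labeled $q(u')$. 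Thus $M(s[u\leftarrow s_l])/v$ is not ground, whereas $N(s[u\leftarrow s_l])/v=\makro{q'}(s_l)$ is ground, contradicting Lemma~\ref{auxilliary appendix}.

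Next I would show that $v$ is a common ancestor of all leaves of $M(s)$ labeled $q(u)$. Suppose some such leaf $\tilde v$ is not a descendant of $v$. By Lemma~\ref{auxilliary appendix}, either $\tilde v\notin V(N(s))$ or $N(s)[\tilde v]\notin\Delta$, so by equivalence some ancestor of $\tilde v$ in $N(s)$ carries a label $q''(u'')$; this ancestor is distinct from $v$, since $v$ is not an ancestor of $\tilde v$. Linearity of $N$ forces distinct state-leaves of $N(s)$ to reference distinct input leaves, whence $u''\neq u$; but then $(*)$ applied at this ancestor rules out the label $q(u)$ on its descendant $\tilde v$, a contradiction. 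Consequently $\nu_M(s,u)$ lies below $v$, so $M(s)/\nu_M(s,u)\subseteq M(s)/v$ contains no leaf $q(u')$ with $u'\neq u$; ranging over all relevant $u$ yields that $M(s)$ is lca-conform.

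The main obstacle I anticipate is the bookkeeping around the la-annotation under substitution: I must argue carefully that replacing the subtree at the la-leaf $u$ by an arbitrary $s_l\in\text{dom}(l)$ leaves $\delta^*$ untouched on and above $u$, so that the rules applied by both $M$ and $N$ along the paths to $u'$ and to $\tilde v$ do not change and the output positions $v$, $\hat v$, $\tilde v$ are preserved. The second delicate point is the use of linearity of $N$, which both guarantees uniqueness of the $q'(u)$-leaf and, crucially, ensures that two distinct state-leaves of $N(s)$ cannot reference the same input leaf; this is exactly what permits applying $(*)$ at the ancestor of $\tilde v$ with a strictly different input node $u''$.
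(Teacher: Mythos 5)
Your proof is correct and takes essentially the same route as the paper's: you establish the same key property $(*)$ by substituting a ground tree at the referenced la-leaf and appealing to Lemma~\ref{auxilliary appendix}, and then use linearity of $N$ in exactly the paper's way to show the anchoring node of $N(s)$ is a common ancestor of all leaves of $M(s)$ referencing that input leaf. The only differences are presentational: you quantify over input leaves $u$ and justify explicitly (via earliestness and equivalence) that $N(s)$ must reference $u$, whereas the paper quantifies over state-labeled nodes of $N(s)$ and invokes Corollary~\ref{auxilliary appendix 2} where you re-derive the ancestor step from Lemma~\ref{auxilliary appendix} and equivalence.
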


\begin{proof}
	We prove  that  $M$ is  lca-conform if $M$ is equivalent to a linear la-transducer $N$ with the same la-automaton.
	
	Let  $N=(T',B,A)$ and $s\in T_\Sigma [L]$ such that $\delta^*(s)\in F$.
	Let $v$ be a node such that ${N}(s)/v= q'(u')$ where $q'\in Q'$  and $s[u']\in L$. 
	First, we show that the subtree ${M}(s)/v$  contains no leaves that have label of the form $q(u)$ with $u\neq u'$.	
	Note that due to Lemma~\ref{auxilliary appendix},  ${M} (s) / v$ is  defined.
	
	Assume that  ${M}(s)/v$ contains  some node labeled $q(u)$, i.e., ${M}(s)/ \hat{v}=q(u)$ for some descendant
	$\hat{v}$ of $v$.
	Consider the trees ${N}(s[u'\leftarrow g])$ and ${M}(s[u'\leftarrow g])$ where $g\in T_\Sigma$ such that $\delta^*(g)=s[u']$. Clearly   ${M}(s[u'\leftarrow g])/\hat{v}=q(u)$ and hence ${M}(s[u'\leftarrow g])/v$ is not ground. However ${N}(s[u'\leftarrow g])/v= \makro{q'} (g)$ is ground. This contradicts Lemma~\ref{auxilliary appendix}.
	Thus we deduce that for all $v\in V(N(s))$ if  $N(s) / v =q'(u')$ then no leaf with label of the form
	$q(u)$ with $u\neq u'$ and $q\in Q$ occurs in
	 $M(s)/v$ (*).
	
	We now show that if leaves with label of the form $q(u')$ occur in $M(s)$ then $v$ is an ancestor node of all those 
	leaves. Assume to the contrary that some node $\tilde{v}$ with label of the form $q(u')$ occur in $M(s)$ that is not a descendant of $v$ occurs in $s$.
	Due to 	Corollary~\ref{auxilliary appendix 2}, it follows that some ancestor of  $\tilde{v}$ is labeled by
	a symbol in $Q(L)$ in $N(s)$.  As  $\tilde{v}$ is not a descendant of $v$ and $N$ is linear, that symbol cannot be of the form $q''(u')$ where $q''$ is some state. This contradicts (*).
	Thus,  
	 we conclude that 
	if  $N(s) / v =q'(u')$ 
	and  leaves with label
	$q(u')$ occur in $M(s)$, then  $M(s)/v$ contains all these leaves and hence $v$  is a common ancestor of all these leaves.
	
	As ${M}(s)/v$  does not contain leaves that have label of the form $q(u)$ with $u\neq u'$  neither does $M(s)/ \nu_M (s,u)$ as $v$ is an ancestor of $\nu_M (s,u)$.
	Hence, ${M}(s)$ is la-conform. \qed
\end{proof}
\noindent
In the following we show how to decide whether or not a given la-transducer $M$ is lca-conform.
Before we begin consider the following auxiliary result:
In the following lemma  we show that a single letter suffices to turn a lca-conform tree to a tree that is not
lca-conform. Note that ${M}(c  l)$, where $c$ is a context over $\Sigma$ and $l\in L$ is obviously lca-conform.

\begin{lemma}\label{decision aux}
	If $M$ is not lca-conform, then a context $c$, $l_1,\dots,l_k \in L$, and $a\in \Sigma_k$, $k\geq 2$, exists such that 
	${M}(c  a[l_1,\dots,l_k])$ is not  lca-conform.
\end{lemma}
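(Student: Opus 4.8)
The plan is to prove the contrapositive in a constructive, localized form: if $M$ is not lca-conform, then some witness of non-conformity can be ``pushed'' to an input of the special shape $c\,a[l_1,\dots,l_k]$, i.e. a context followed by a single branching symbol whose children are look-ahead states. First I would unfold the definition of non-conformity: there exists a (partial) input tree $s$ with $\delta^*(s)\in F$ and two leaves of $M(s)$ labeled $q_1(u_1)$ and $q_2(u_2)$ with $u_1\neq u_2$, such that $u_2$ lies in the subtree $M(s)/\nu_M(s,u_1)$. Since $u_1\neq u_2$ and both are leaves of the input $s$ carrying la-states, there is a lowest node $w$ of $s$ that is a common ancestor of $u_1$ and $u_2$ but whose label has rank $k\geq 2$; the two leaves descend into distinct children of $w$. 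The key observation is that lca-conformity of $M(s)$ at the pair $(u_1,u_2)$ is determined entirely by the computation of $M$ down to the node $w$ together with which \emph{children directions} below $w$ the two state-occurrences enter — it does not depend on what lies strictly below $w$ except through the la-states reached there.

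Guided by this, the main step is a \emph{truncation} argument. I would replace the subtree $s/w = a(s_1,\dots,s_k)$, $a\in\Sigma_k$, by $a(l_1,\dots,l_k)$ where $l_i=\delta^*(s_i)$ is the la-state each child subtree reaches. Writing $s = c'\,(s/w)$ where $c'$ is the surrounding part of $s$ (a context with its hole at $w$), set $c = c'$ and consider the input $c\,a[l_1,\dots,l_k]$. Because $M$ is la-uniform and the la-automaton only communicates information about a subtree through the state it reaches, the computation of $M$ on $c\,a[l_1,\dots,l_k]$ agrees with that on $s$ exactly up to node $w$: the state occurrences $q_1(u_1')$ and $q_2(u_2')$ survive, now sitting at the leaves (the la-state leaves $l_i$) of the truncated input, and by Corollary~\ref{appendix corollary} the la-uniformity guarantees $\rho(q_1)=l_{i_1}$ and $\rho(q_2)=l_{i_2}$ for the relevant children, so both occurrences are genuinely present in $M(c\,a[l_1,\dots,l_k])$. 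The node $\nu_M(\cdot,u_1')$ and the nesting relationship between the two occurrences are preserved by this truncation, since they are features of the portion of the output generated along $c$ down to $w$, which is unchanged.

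I would then conclude that $M(c\,a[l_1,\dots,l_k])$ witnesses non-lca-conformity: the leaf labeled $q_2(u_2')$ with $u_2'\neq u_1'$ still occurs inside $M(c\,a[l_1,\dots,l_k])/\nu_M(c\,a[l_1,\dots,l_k],u_1')$. This gives a witness of precisely the claimed form, with $k\geq 2$ because the branching at $w$ is what separated $u_1$ from $u_2$ in the first place.

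The main obstacle I anticipate is making the ``computation agrees up to $w$'' claim precise, namely that truncating the children of $w$ to their la-states neither destroys nor creates the relevant state occurrences, nor alters the lowest common ancestor $\nu_M$. This requires care because $M(s)$ is defined through the substitution semantics $\makro{q}_u$, and replacing subtrees by la-states changes the output \emph{below} $w$ but must leave the output \emph{above and at} $w$ structurally intact; the argument hinges on the fact that $\makro{q}_u(l)=q(u)$ terminates the recursion exactly at an la-state leaf, so the partial output produced along $c$ is literally identical in both computations, and the two distinguished occurrences appear at corresponding nodes. I would also need to double-check that both truncated computations are defined, which follows from la-uniformity together with $\delta^*$ reaching the same la-states $l_i$ at the children of $w$.
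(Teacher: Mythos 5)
Your construction is exactly the paper's: cut the witness input $s$ at the lowest common ancestor $w$ of the two referenced leaves and replace $s/w=a(s_1,\dots,s_k)$ by $a(l_1,\dots,l_k)$ with $l_i=\delta^*(s_i)$. However, the central step of your argument rests on a claim that is false, and it is precisely the point you flag as ``the main obstacle'': truncation does \emph{not} preserve the node $\nu_M$, nor the nesting of the two occurrences node-for-node. The node $\nu_M(s,u_1)$ is not a feature of the output produced ``along $c$ down to $w$'' alone, because it depends on \emph{which} of the leaves of $M(c\,a[l_1,\dots,l_k])$ labeled by states paired with $w.1$ actually expand to trees containing references to $u_1$, and that depends on the subtrees of $s$ below $w$. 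Concretely, suppose $M(c\,a[l_1,\dots,l_k]) = f\bigl(q'(w.1),\, g(q''_1(w.1), q''_2(w.1), q'''(w.2))\bigr)$, where $\makro{q'}(s/w.1)$ is ground while $\makro{q''_1}(s/w.1)$ and $\makro{q''_2}(s/w.1)$ contain references to $u_1$, and $\makro{q'''}(s/w.2)$ contains references to $u_2$. Then $\nu_M(s,u_1)$ is the $g$-node, and $M(s)$ is not lca-conform because the $u_2$-references lie below it; yet $\nu_M(c\,a[l_1,\dots,l_k],w.1)$ is the root, strictly above. So the lca genuinely moves under truncation; the lemma's conclusion still holds in this example, but not for the reason you give, and your proof has no argument at exactly this point.

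What is needed instead --- and what the paper proves --- is monotonicity rather than preservation, plus one extra fact. Monotonicity: every leaf of $M(s)$ referencing $u_j$ lies inside the expansion of some leaf of $M(s')$ referencing $w.j$ (where $s'=c\,a[l_1,\dots,l_k]$), hence $\nu_M(s,u_j)$ is a descendant of $\nu_M(s',w.j)$ for $j=1,2$. Extra fact (the paper's $(*)$): non-conformity of $M(s)$ forces $\nu_M(s,u_1)$ and $\nu_M(s,u_2)$ to be comparable, since both are ancestors of the offending $u_2$-labeled leaf. The paper then concludes by contradiction: if $M(s')$ were lca-conform, $\nu_M(s',w.1)$ and $\nu_M(s',w.2)$ would be disjoint, hence so would their respective descendants $\nu_M(s,u_1)$ and $\nu_M(s,u_2)$. (One can also repair your direct argument: the offending $u_2$-leaf $v$ of $M(s)$ lies in the expansion of some $w.2$-referencing leaf $v'$ of $M(s')$; both $v'$ and $\nu_M(s',w.1)$ are ancestors of $v$, hence comparable, and since $v'$ is a leaf of $M(s')$ it cannot be a proper ancestor of the node $\nu_M(s',w.1)$, so $v'$ lies below $\nu_M(s',w.1)$, witnessing non-conformity of $M(s')$.) A secondary flaw: you truncate an arbitrary witness $s$, but if $s$ has la-state leaves other than $u_1,u_2$ outside the subtree rooted at $w$, then your surrounding part $c'$ is not a context over $\Sigma$, so your input does not have the form $c\,a[l_1,\dots,l_k]$ required by the statement. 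The paper first normalizes the witness so that $u_1,u_2$ are its only la-state leaves; this is done by plugging ground trees from the appropriate domains into all other la-state leaves, which leaves the positions of the $u_1$- and $u_2$-references, and hence the witness, unchanged.
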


\begin{proof}
	If $M$ is not lca-conform, then  $s\in T_\Sigma [L]$ with  $\delta^*(s)\in F$  exists such that 
	only two distinct leaves $u_1, u_2 \in V(s)$ are labeled by a symbol not in $\Delta$, 
	and $M(s)$ is not lca-conform. Without restriction assume that
	some node with label $q (u_2)$ occurs in
	$M(s) / \nu_M (s,u_1) $, i.e., some descendant $v$ of $\nu_M (s,u_1)$ exists such that
	$M(s)/ v = q (u_2)$. Obviously, all ancestors of $v$ are either ancestors or descendants of $\nu_M (s,u_1)$.
	Hence,  $\nu_M (s,u_1)$ and $\nu_M (s,u_2)$ cannot be disjoint (*).
	
	Let $u$ be the lowest common ancestor of $u_1$ and $u_2$. Without restriction let $u.j$ be an ancestor of $u_j$, $j=1,2$.
	Consider the tree $M(s')$ where $s'=s [u\leftarrow a(l_1,\dots,l_k) ]$ with  $s[u] = a\in \Sigma _k$ and $\delta^* (s/u.i )=l_i$, $i\in [k]$.
	Note that clearly a context $c$ exists such that $
	c  a(l_1,\dots,l_k) =
	s [u\leftarrow a(l_1,\dots,l_k) ]=s'  $.
	Assume that $M(s')$ is lca-conform. 
	Then $\nu_M (s',u.1)$ and $\nu_M (s'  ,u.2)$ must be disjoint  otherwise  $M(s') / \nu_M (s' ,u.1)$
	contains some node labeled by $q'(u.2)$ or vice versa and hence  $M(s')$ would not be
	lca-conform.  Note that some nodes with label  $q_j'(u.j)$,  $j=1,2$, must occur in $M(s') $ 
	otherwise nodes with labels  of the form $q_1(u_1)$   and $q_2(u_2)$ cannot occur in $M(s)$.
	Clearly only descendants of $\nu_M (s',u.j)$ occurring $V(M(s')$ process the subtree $s/u.j$, $j=1,2$.
	Hence only descendants of 
	$\nu_M (s',u.j)$ occurring $V(M(s))$ can have labels of the form $q_j (u_j)$. This implies that $\nu_M (s,u_j)$ is a descendant of
	$\nu_M (s',u.j)$, which means that  $\nu_M (s,u_1)$ and $\nu_M (s,u_2)$  are disjoint contradicting (*).\qed
\end{proof}

We now show how to decide whether or not $M$ is lca-conform.
By Lemma~\ref{decision aux}, if $M$ is not lca-conform a 
context $c$, $l_1,\dots,l_k \in L$, and $a\in \Sigma_k$, $k\geq 2$, exists such that 
$t=M(c  a[l_1,\dots,l_k])$ is not lca-conform, 
i.e.,  leaves  $v_1, v_2, v_3 \in V(t)$ labeled $q_1(u)$, $q_2 (u)$ and $q_3 (u')$, respectively, with $u\neq u'$
such that the lowest common ancestor of $v_1$ and $v_2$ is an ancestor of $v_3$ exist.
For these leaves with these labels and this  ancestor relation to occur in $t$ one of the following must hold
\begin{itemize}
	\item  $q\in Q$  exists 
	such that  leaves  $v'_1, v'_2, v'_3$ labeled $q_1(\hat{u})$, $q_2 (\hat{u})$ and $q_3 (\hat{u}')$, respectively, occur in $\makro{q}(a[l_1,\dots,l_k])$ where  the lowest common ancestor of $v'_1$ and $v'_2$ 
	is an ancestor of $v'_3$
	\item pairwise occurring states $q$, $q'$  exists such that  $q_1(\hat{u})$ and $q_3 (\hat{u}')$ occur in $\makro{q}(a[l_1,\dots,l_k])$
	and   $q_2 (\hat{u})$  occurs in $\makro{q'}(a[l_1,\dots,l_k])$
	\item there exists a triplet $((q, q') q'')$  such that
	$q_1(\hat{u})$, $q_2 (\hat{u})$ and $q_3 (\hat{u}')$ occur in $\makro{q}(a[l_1,\dots,l_k])$
	$\makro{q'}(a[l_1,\dots,l_k])$ and $\makro{q''}(a[l_1,\dots,l_k])$, respectively,
\end{itemize}
where $\hat{u},\hat{u}' \in V(a[l_1,\dots,l_k])$  with $\hat{u}\neq \hat{u}'$.
In the following we say that a tuple $((q_1,q_2),q_3)$ is a \emph{triplet} if there exists some context $c$ and a la-state $l$ such that three distinct
nodes $v_1$, $v_2$ and $v_3$ with labels $q_1(u)$, $q_2(u)$ and $q_3(u)$, respectively, occur in ${M}(c  l)$ and
the lowest common ancestor of $v_1$ and $v_2$ is an ancestor of $v_3$ as well.

Determining the set of all triplets occurring in $M$ is analogous to determining $\text{PO}$ and can be done in polynomial time. 
Note that determining the 	 lowest common ancestor of $v_1$ and $v_2$ and whether this node is an ancestor of $v_3$
essentially consists of determining the longest common prefix of $v_1$ and $v_2$ and checking whether this string is a prefix of $v_3$. 

To decide whether or not $M$ is lowest common ancestor, it suffices to check whether a state, pairwise occurring states or a triplet 
exists such that one of the conditions above is satisfied for some $a\in \Sigma_k$, $k\geq 2$, and some $l_1,\dots,l_k \in L$.
This can be done in polynomial time.

\subsection{Proof of Lemma~\ref{lca-tree bound}}
In the following we denote by $\eta$ the maximal height of any right hand side of $M$.
Before we prove Theorem~\ref{lca-tree bound} consider the following definition.

Consider some output tree $M(s)$ where $s\in T_\Sigma [L]$. Let $v$ be a node such that   $M(s)[v]=d \in \Delta$.
We call the node $u$ of $s$ the 
\emph{origin of} $v$ in ${M}(s)$ if for  $\delta^*(s/u)=l$
\begin{enumerate}
	\item $v \notin V({M}(s[u\leftarrow l]))$  or ${M}(s[u\leftarrow l])/v \in Q(L)$  and
	\item ${M}(s[u.i\leftarrow  l_i \mid i\in [k]] ) [v] =d $ where $s[u] \in \Sigma_k$, $k\geq 0$, and
	$\delta^*(s/u.i)=l_i$.
\end{enumerate}
If $v$ is a node occurring in $A(l')$ where $l'=\delta^*(s)$ then we say that the origin of $v$ in ${M}(s)$ is NULL.
Informally,  one can consider the origin of a node $v$ to be the node of $s$ which ``creates'' the $\Delta$-node $v$.

Let $c$ be a context and $l$ be a la-state. Let  $t={M} (c l) / \nu (u)$. 
Let $u_0$ be the origin of $\nu (u)$. We show that if  $M$ is lca-conform
then  only the nodes that are descendants of $u_0$ and ancestors of $u$ are important for the generation of $t$.
All other descendants of $u_0$  do not influence the generation of $t$, i.e.,
removing descendants of $u_0$ that are not  ancestors of $u$ from $cl$ does not affect $t$.  We prove this statement in the following.

\begin{lemma}\label{lca-tree origins}
	Let $M$ be lca-conform.
	Let $c$ be a context and $l$ be an la-state such that $\delta^* (cl)\in F$.
	Let $u$ be the node  such that $cl [u]=l$.
	Let  $t={M} (c l) / \nu (u)$ and $u_0$ be the origin of  $\nu (u)$.
	Let   $u'$ be an arbitrary descendant  of $u_0$ that is not an ancestor of $u$. 
	Let $s$ be the tree that results from 
	replacing the subtree rooted at $u'$ by $\delta^*(cl/u')$. Then 
	$t ={M} (s) / \nu (u)$.
\end{lemma}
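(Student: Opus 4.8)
The plan is to reduce the statement to a single claim about origins: \emph{every node of $t={M}(cl)/\nu(u)$ carrying a symbol of $\Delta$ has its origin on the input path $P$ from $u_0$ to $u$}. First I would record that $u_0$ is an ancestor of $u$. Since $c$ is a context, $cl$ has exactly one leaf labelled by an la-state, namely $u$, so \emph{all} state-leaves occurring anywhere in $M(cl)$ have the form $q(u)$; in particular $\nu(u)$ is an output-ancestor of all these leaves, which are produced by states that process the input leaf $u$. As such states can only be spawned by processing ancestors of $u$, and the symbol at $\nu(u)$ is emitted while processing $u_0$, the node $u_0$ must lie on $P$. Granting the claim, the lemma follows directly: since $u'$ is a descendant of $u_0$ that is not an ancestor of $u$, no descendant of $u'$ is an ancestor of $u$ either, so the whole subtree $cl/u'$ is disjoint from $P$ and from $u$. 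Hence no $\Delta$-node of $t$ originates inside $cl/u'$, and no state-leaf of $t$ refers to a node of $cl/u'$. Replacing $cl/u'$ by $\delta^*(cl/u')$ leaves every la-state at and above $u'$ unchanged, so the processing along $P$ — and therefore the generation of $t$ — is identical in $M(cl)$ and $M(s)$, which yields $t={M}(s)/\nu(u)$.

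The core claim I would prove by contradiction. Suppose some $\Delta$-node $v$ of $t$ has an origin $u^{*}$ off $P$. Let $w$ be the lowest common ancestor (in the input tree) of $u^{*}$ and $u$; then $w\in P$, and the child $w.j$ of $w$ towards $u^{*}$ is off $P$. Because the only la-state leaf of $cl$ is $u$ and $u$ lies on the other branch, the subtree $cl/(w.j)$ is ground, so the state $q'$ that processes $w.j$ and emits $v$ produces a ground output $\makro{q'}_{w.j}(cl/(w.j))$; as no further state-leaves can be spliced below a ground input, this output forms a complete subtree of $M(cl)$ rooted at some position $v_{\mathrm{root}}$. I would then argue that $v_{\mathrm{root}}$ is a \emph{strict} descendant of $\nu(u)$: it cannot equal $\nu(u)$ (the two have different origins), and it cannot be a proper ancestor of $\nu(u)$, since otherwise $\nu(u)$ — which has the state-leaves $q(u)$ below it — would sit inside a ground tree.

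Now I would exploit lca-conformity on a modified input. Put $l''=\delta^*(cl/(w.j))$ and $s'=cl[w.j\leftarrow l'']$; then $\delta^*(s')=\delta^*(cl)\in F$, and $s'$ has la-state leaves exactly at $u$ and at $w.j$. Collapsing the off-path subtree turns the ground image at $v_{\mathrm{root}}$ into the single state-leaf $\makro{q'}_{w.j}(l'')=q'(w.j)$, while the processing along $P$ is untouched, so the $q(u)$-leaves keep their positions and $\nu_M(s',u)=\nu(u)$. Since $v_{\mathrm{root}}$ remains a strict descendant of $\nu(u)$, the tree $M(s')/\nu_M(s',u)$ now contains a leaf $q'(w.j)$ with $w.j\neq u$, contradicting the lca-conformity of $M$ on input $s'$; this establishes the claim. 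The main obstacle I anticipate is the bookkeeping of output positions under collapsing: I must verify that the collapse affects only the region strictly below $\nu(u)$, so that $\nu(u)$ — determined solely by the unchanged $q(u)$-leaves — keeps its position and the output above it is preserved. The facts that make this go through are that a context contributes a single la-state leaf (hence off-path subtrees are ground and shrink to one state-leaf), that collapsing a subtree to its own la-state preserves every la-state along $P$, and that la-uniformity guarantees the modified computations $M(s)$ and $M(s')$ are defined and directly comparable to $M(cl)$.
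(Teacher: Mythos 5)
Your proof is correct, and at its core it runs on exactly the same mechanism as the paper's: collapse an input subtree that does not contain $u$ to its la-state, observe that this substitution-at-leaves relation between the two outputs leaves the positions of the $q(u)$-leaves (hence $\nu(u)$) unchanged, and invoke lca-conformity of $M$ \emph{on the collapsed input} to forbid any state-leaf $q'(u'')$ with $u''\neq u$ below $\nu(u)$. The difference is one of decomposition. The paper argues contrapositively and directly at the given node $u'$: since $M(cl)$ arises from $M(s)$ by substituting ground trees for the $q(u')$-leaves, $t\neq M(s)/\nu(u)$ forces such a leaf to lie below $\nu(u)=\nu_M(s,u)$ in $M(s)$, which is immediately the desired contradiction. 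You instead prove a stronger intermediate claim --- every $\Delta$-node of $t$ originates on the spine from $u_0$ to $u$ --- by collapsing at $w.j$, the topmost off-spine node towards the offending origin, and then derive the lemma; this costs extra bookkeeping that the paper avoids: the analysis that $v_{\mathrm{root}}$ is a strict descendant of $\nu(u)$, and a tacit monotonicity fact (origins of output-descendants are input-descendants of the ancestor's origin) which you need both to place $w$ on $P$ and to exclude the case that the offending origin $u^{*}$ is a proper ancestor of $u_0$ --- a case your contradiction argument, which presupposes $u^{*}$ is not an ancestor of $u$, does not cover. This unhandled case is harmless for the lemma itself, because origins inside $cl/u'$ are never ancestors of $u$, so the weaker claim ``no node of $t$ originates at a non-ancestor of $u$'' suffices and is fully established by your argument; but it should be fixed (or the claim weakened) if you state the claim in terms of $P$. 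What your route buys is a reusable spine statement close in spirit to the paper's Corollary~\ref{lca-tree origins corollary}, which the paper instead obtains afterwards by iterating the lemma over all off-spine nodes.
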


\begin{proof}
	First, consider the following.
	As $u'$ is not an ancestor of $u$ clearly the node $u$ still occurs in $s$ and $s[u]=l$ holds. 
	Therefore, we deduce that a node $v$  is labeled by $q(u)$, where $q\in Q$, in  ${M}(s)$ if and only if $v$
	is labeled by $q(u)$ in ${M}(c l)$. Hence $\nu_M (c l, u)=\nu_M (s, u)$.
	
	Secondly, it obviously holds that $cl[u'] \in \Delta$. Therefore it is easy to see that if no node in $t$
	has origin $u'$ then $t ={M} (s) / \nu (u)$.
	Assume that $t \neq {M} (s) / \nu (u)$ and thus that some descendant  $v'$ of $\nu_M (c l, u)$  has origin $u'$.
	Clearly, we can assume that $v'$ has label of the form $q(u')$ in $M(s)$. If such a node $v'$
	does not exist then no node in $t$ can have origin $u'$.
	This contradicts the lca-conformity of $M$ as $\nu_M (c l, u)=\nu_M (s, u)$,
	 $v'$ is a descendant of $\nu_M (c l, u)$
	 and $M(s)[v']=q(u')$ with $u'\neq u$. \qed
\end{proof}

\noindent
In the following we denote by $V(c)_{u_0,u}$ the set of all nodes in $V(c)$ that are descendants of $u_0$ and children of some ancestor of $u$, but no
ancestors of $u$ themselves. 

\begin{corollary}\label{lca-tree origins corollary}
	Let $M$ be lca-conform.
	Let $c$ be a context and $l$ be a la-state such that $\delta^* (cl)\in F$.
	Let $u$ be the node  such that $cl [u]=l$.
	Then $$M (c l) /\nu (u) = M (cl [u'\leftarrow l' \mid u' \in V(c)_{u_0,u} \text{ and }
	\delta^*(cl /u')=l' ] )/ \nu (u)$$
	where $u_0$ is the origin of  $\nu (u)$.
\end{corollary}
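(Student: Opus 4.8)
The plan is to derive the corollary from Lemma~\ref{lca-tree origins} by iterating the single–subtree replacement over all of $V(c)_{u_0,u}$ at once. First I would fix an enumeration $u_1',\dots,u_r'$ of $V(c)_{u_0,u}$ and set $s_0=cl$ and $s_i=s_{i-1}[u_i'\leftarrow \delta^*(cl/u_i')]$. Since the elements of $V(c)_{u_0,u}$ are pairwise disjoint (being off-path children of the ancestors of $u$ below $u_0$), this sequential replacement agrees with the simultaneous one, so $s_r$ is exactly the tree on the right-hand side of the claimed identity; it then suffices to show $M(s_{i-1})/\nu(u)=M(s_i)/\nu(u)$ for each $i$ and to chain the equalities.

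Before applying the lemma I would record the structural facts I need. As $u_0$ is the origin of $\nu(u)$ and $\nu(u)$ is an ancestor of the leaves labelled $q(u)$ (which are produced while reading the input down to $u$), the node $u_0$ is an ancestor of $u$; hence the ancestors of $u$ lying weakly below $u_0$ form a chain, and $V(c)_{u_0,u}$ consists precisely of the off-path children hanging from that chain. Each such $u_i'$ is therefore a strict descendant of $u_0$ that is not an ancestor of $u$ --- exactly the hypotheses of Lemma~\ref{lca-tree origins} --- and it is disjoint from the whole path from the root to $u$. Consequently $u$ survives every replacement with $s_i[u]=l$, so that $\nu_M(s_i,u)=\nu(u)$ throughout, and $\delta^*(s_{i-1}/u_i')=\delta^*(cl/u_i')$ since the subtree at $u_i'$ is untouched by the earlier steps.

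The point that needs care, and which I expect to be the main obstacle, is that Lemma~\ref{lca-tree origins} requires $u_0$ to be the origin of $\nu(u)$ in the \emph{current} tree $s_{i-1}$, not merely in $cl$. To settle this I would show the origin is invariant under each replacement by inspecting the two trees in the origin definition: both $s_{i-1}[u_0\leftarrow l]$ and $s_{i-1}[u_0.j\leftarrow l_j\mid j\in[k]]$ collapse the subtree(s) below $u_0$ to la-states, and since every $u_i'$ sits strictly inside a child subtree of $u_0$ while the $\delta^*$-values of those subtrees are preserved by replacing side branches, these trees coincide with the corresponding trees built from $cl$. Hence conditions~(1) and~(2) of the origin definition hold for $u_0$ in $s_{i-1}$ exactly as in $cl$.

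With the origin, $\nu(u)$, and the la-states all under control, Lemma~\ref{lca-tree origins} applies at each step and yields $M(s_{i-1})/\nu(u)=M(s_i)/\nu(u)$; transitivity over $i=1,\dots,r$ then gives $M(cl)/\nu(u)=M(s_r)/\nu(u)$, which is the assertion of the corollary.
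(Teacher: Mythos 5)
Your strategy --- enumerate $V(c)_{u_0,u}$, replace one subtree at a time, and chain applications of Lemma~\ref{lca-tree origins} --- is what the paper intends by labeling this statement a corollary, and several of your side conditions (pairwise incomparability of the nodes in $V(c)_{u_0,u}$, hence sequential equals simultaneous substitution; preservation of the $\delta^*$-values; invariance of the origin $u_0$ under the replacements) are checked correctly.

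There is, however, one hypothesis that your iteration violates and that you never address: Lemma~\ref{lca-tree origins} is stated only for input trees of the form $cl$ with $c$ a context over $\Sigma$, i.e., trees with exactly one leaf labeled by an la-state, namely $u$. Already $s_1=cl[u_1'\leftarrow \delta^*(cl/u_1')]$ has a second la-state leaf at $u_1'$, and in general $s_{i-1}$ has la-state leaves at $u_1',\dots,u_{i-1}'$ besides $u$, so it is not of the form $c'l$ for any context $c'$; hence for every step $i\geq 2$ you invoke the lemma on a tree that does not satisfy its hypotheses. (Ironically, the hypothesis you do work to re-establish --- that $u_0$ is still the origin of $\nu(u)$ in $s_{i-1}$ --- is one that the paper's proof of the lemma never actually uses, whereas the single-la-state format is built into its statement.) The gap can be closed in two ways: either restate and re-prove the lemma for arbitrary trees $s\in T_\Sigma[L]$ with $\delta^*(s)\in F$ --- its proof carries over verbatim, since both lca-conformity and the notion of origin are defined for such trees --- or drop the iteration and run the lemma's argument once, simultaneously: writing $s_r$ for the tree on the right-hand side of the claimed identity, one has $M(cl)=M(s_r)[q(u')\leftarrow \makro{q}_{u'}(cl/u')\mid q\in Q,\ u'\in V(c)_{u_0,u}]$ and $\nu_M(s_r,u)=\nu_M(cl,u)$, so any difference between the two subtrees at $\nu(u)$ would force a leaf labeled $q(u')$ with $u'\neq u$ below $\nu_M(s_r,u)$ in $M(s_r)$, contradicting lca-conformity. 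As written, though, the induction step of your proof appeals to a lemma whose hypotheses the intermediate trees fail to satisfy.
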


\noindent
We now prove our main claim.

\begin{lemma}
	Let $M$ be lca-conform and zero output twinned.
	Let $c$ be a context and $l$ be an la-state such that $\delta^* (cl)\in F$. Let $u$ be the node such that $cl [u]=l$. Then ${M}(cl)/ \nu(u)$
	has height  at most $(|Q|^2 +1) \eta$.
\end{lemma}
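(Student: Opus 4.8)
The plan is to reduce $t=M(cl)/\nu(u)$ to the processing of a single input path and then bound its height by a pumping argument on \emph{pairs} of states, where zero output twinnedness forbids exactly the repetitions that would force growth.

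First I would apply Lemma~\ref{lca-tree origins} and Corollary~\ref{lca-tree origins corollary} to replace in $cl$ every subtree hanging off the path from the origin $u_0$ of $\nu(u)$ down to $u$ by its la-state, without changing $t$. This lets me treat the relevant input as a path $u_0=w_0,w_1,\dots,w_m=u$, where $w_{i+1}$ is the child of $w_i$ on the way to $u$ and all other children of $w_i$ are la-states. Along this path $t$ is assembled from the right-hand sides applied at $w_0,\dots,w_{m-1}$: the state $q_0$ processing $u_0$ creates the root $\nu(u)$, and the state-calls directed at $w_{i+1}$ determine the states processing $w_{i+1}$. Since $cl[u]=l$ carries an la-state, exactly the states processing $u$ become the leaves $q(u)$ of $t$; by lca-conformity these are the only non-$\Delta$ leaves of $t$.

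Assuming $\text{height}(t)=h\ge 1$, I would pick a deepest leaf $\pi^{*}=q^{*}(u)$ of depth $h$, and a second leaf $\pi'=q'(u)$ with $\text{LCA}(\pi^{*},\pi')=\nu(u)$; such a $\pi'$ exists because $\nu(u)$ is the lowest common ancestor of \emph{all} $q(u)$-leaves, so not all of them can lie in the same child-subtree of $\nu(u)$ as $\pi^{*}$. The crucial point is that both leaves reference the same input node $u$, hence both are produced by state sequences running along the very same path $w_0,\dots,w_m$. This yields, at each $w_i$, a pair $(q_i,q_i')$ of states processing $w_i$ that track the ancestors of $\pi^{*}$ and $\pi'$; for $i\ge 1$ these occupy distinct output positions (the two branches have already split at $\nu(u)$), so $q_i$ and $q_i'$ are pairwise occurring.

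Finally, the output path from $\nu(u)$ to $\pi^{*}$ decomposes into the contributions of $w_0,\dots,w_{m-1}$, each of height at most $\eta$, so $h$ is at most $\eta$ times the number of \emph{productive} steps, i.e.\ steps whose right-hand side contributes at least one $\Delta$-node to this path. The heart of the argument is to bound the number of productive steps with index $i\ge 1$ by $|Q|^{2}$: if two such productive steps $w_i,w_j$ ($1\le i<j$) carried the same pair $(q_i,q_i')=(q_j,q_j')$, then the path context $c'$ between $w_i$ and $w_j$ would satisfy $(q_i,q_i')\,{\vdash}^{c'}\,(q_i,q_i')$ with output produced, since $w_i$ is productive, contradicting zero output twinnedness. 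Adding the single step at $w_0$ gives at most $|Q|^{2}+1$ productive steps and hence $h\le(|Q|^{2}+1)\eta$; for $h=0$ the bound is trivial. I expect the main obstacle to be the bookkeeping that makes the segment between two productive steps a genuine context to which ${\vdash}^{c'}$ applies and along which the tracked pair returns to itself: this is precisely where the origin-based reduction of the first paragraph, together with the fact that both tracked leaves reference $u$ and therefore never leave the path, is needed.
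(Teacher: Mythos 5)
Your proposal is correct and takes essentially the same route as the paper's proof: reduce to the input path from the origin $u_0$ of $\nu(u)$ down to $u$ (via the origin lemma and its corollary), track a pairwise-occurring pair of states along that path (one following the deepest output branch, one following a branch separated from it at $\nu(u)$), and use pigeonhole over the at most $|Q|^2$ state pairs so that a repeated pair at productive steps gives a loop $(q,q')\vdash^{c'}(q,q')$ that produces output, contradicting zero output twinnedness, with the $+1$ accounting for the step that creates $\nu(u)$ itself. The only cosmetic deviations are that you bound the number of productive steps directly instead of assuming the height exceeds $(|Q|^2+1)\eta$ and deriving a contradiction, and that you leave implicit the degenerate case where the origin of $\nu(u)$ is NULL (i.e., $\nu(u)$ stems from the axiom), which the paper dispatches in one sentence as analogous.
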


\begin{proof}
	 Assume  that the height of ${M}(cl)/ \nu(u)$ 
	exceeds $(|Q|^2 +1) \eta$. Then clearly $\nu (u)$ cannot be a leaf.
	
	Assume that the origin of $\nu (u)$ is not NULL.
	Let the node $u_0$ be the origin of $\nu (u)$. 
	Consider $$s=cl [u'\leftarrow l' \mid u' \in V(c)_{u_0,u} \text{ and }
	\delta^*(cl /u')=l'].$$
	As $M$ is lca-conform , it is sufficient to show that ${M}(s)/ \nu(u)$
	has height  at most $(|Q|^2 +1) \eta$ 	due to Corollary~\ref{lca-tree origins corollary}.
	Note that $\nu_M(cl,u) = \nu_M(s,u)$.
	Let $u_1,\dots u_m$ be  descendants of $u_0$  such that
	$u_{i}$  is a child node of $u_{i-1}$ for $i\in [m]$ and
	$u_m=u$.
	In the following  we denote by  $s_{i}$ the tree 
	$s [u_i\leftarrow \delta^*( s/u_i)]$.

	As $u_0$  is the origin of $\nu (u)$ it is clear that ${M}(s_1)/\nu(u) $ has height at most $\eta$.
	Additionally it holds that ${M}(s_1)[\nu (u)] \in \Delta$. Therefore there exist two distinct descendants  
	$v_1$ and $v_2$ of $\nu$ labeled $q_1(u_1)$ and $q_2(u_1)$ in ${M}(s_1)$, respectively,
	such that  $\makro{q_1}( s /u_1)$ and  $\makro{q_2}( s /u_1))$
	are not ground, otherwise $\nu (u)$ would not be the lowest common ancestor of all nodes with label of the form $q(u)$
	in ${M}(s)$. 
	
	As the height of  ${M}(cl)/ \nu(u)$
	exceeds $(|Q|^2 +1) \eta$ and ${M}(s_1)/\nu (u) $ has height at most $\eta$ we deduce that a descendant $v$ of 
	$\nu (u)$  labeled $q(u_1)$  in ${M}(s_1)$ must exists such that $\makro{q}( s /u_1)$ has height
	greater than $|Q|^2  \eta$~$(*)$.  W.l.o.g we assume $v\neq v_1$. Note that $q$ and $q_1$ are pairwise occurring.
	
	The only difference between  the trees $s_{i}$ and $s_{i+1}$ is that $s_{i} /u_i=l \in L$
	and $s_{i+1} /u_i=a(l_1,\dots,l_k)$ with $a\in \Sigma_k$, $l_1,\dots,l_k$, such that $\delta(a,l_1,\dots,l_k)=l$.
	Therefore the height difference of  ${M}(s_{i})/v$ and $ {M}(s_{i+1} )/v$ 
	is at most  $\eta$. 
	Clearly, 	${M}( s_{i} )/v=\makro{q}_{u_1}(s_{i}/u_1)$ for all $i\in [m]$. Hence 
	the height difference between $\makro{q}(s_{i}/u_1)$ and $\makro{q}(s_{i+1}/u_1)$
	is bounded by $\eta$~$(*')$.
	
	Note that by definition $s=s_m$. Let $n=|Q|^2$.
	Due to $(*)$ and $(*')$ we deduce that $n=|Q|^2 < m$.  Additionally,
	it follows that 
	nodes $v_{\iota_1},\dots,v_{\iota_n}$ where $1< \iota_1< \dots < \iota_{n}< m$ exist such that
	 $v_{\iota_{\tau}}$ has label of the form $q(u_{\iota_\tau})$ in $\makro{q}(s_{\iota_{\tau}}/u_1)$ and
	$v_{\iota_{\tau}}$ is a proper ancestor of  $v_{\iota_{\tau+1}}$. 
	Thus, as $\makro{q_1}( s/u_1)$ is not ground, states $q'_{\iota_1},\dots,q'_{\iota_n}$ exist such that
	\[ (q,q_1)\vdash^{s_{\iota_1}/u_1} (q_{\iota_1}, q'_{\iota_1} ) \vdash^{s_{\iota_2}/u_{\iota_1}} \dots  \vdash^{s_{\iota_n}/u_{\iota_{n_1}}} (q_{\iota_{n}}, q'_{\iota_{n}} ). \]
	This in turn means that $\gamma$ and $\vartheta$ with $0\leq \gamma <\vartheta \leq n$  exist such that $q_{\iota_{\gamma}}=q_{\iota_{\vartheta}}$ and
	$q'_{\iota_{\gamma}}=q'_{\iota_{\vartheta}}$ which means 
	$(q_{\iota_{\gamma}},q'_{\iota_{\gamma}}) \vdash^{s_{\iota_\vartheta}/u_{\iota_\gamma}} (q_{\iota_{\vartheta}},q'_{\iota_{\vartheta}})=(q_{\iota_{\gamma}},q'_{\iota_{\gamma}})$.
	This contradicts the zero output twinned property of $M$.
	
	Consider the case that the origin of $\nu (u)$ is NULL. Let $\hat{l}=\delta^* (c l)$. Clearly, $M(\hat{l})/\nu (u)$ has height of at most
	$\eta$. Furthermore it is obvious that two distinct descendants of $\nu (u)$ have label of the form $q(\lambda)$ in $M(\hat{l})$.
	The argument is therefore analogous to the previous case.\qed
\end{proof}

\subsection{Construction for La-Transducer and Proof of Correctness}
We give a construction which yields  a linear la-transducer $N$ equivalent to a given canonical earliest la-transducer $M$ 
if $M$ is zero output twinned and lca-conform.

Assume in the following that there is an la-state $l\in F$ such that  $A(l)$ is  not ground (the case that $A(l)$ is ground
for all $l\in F$ is trivial). 
By Lemma~\ref{auxilliary appendix} the output of $M$ is ``ahead'' of the output of any equivalent la-transducer with the same la-automaton. 
 Therefore, the basic idea is the same as in~\cite{DBLP:conf/icalp/ManethS20} and \cite{DBLP:journals/tcs/EngelfrietMS16}: we store the ``aheadness'' of $M$ compared to $N$ in the states of $N$.
The  states of $N$ are given by $\langle t \rangle$, where $t$ is a non-ground tree
in $T_\Delta [Q]$. Furthermore we demand that   the root of $t$
is the lowest common ancestor of all leaves of $t$ labeled by  symbols in $Q$.
For  such a state $\langle t\rangle$ 
we write ${t}[\leftarrow b]$ instead of $t[q\leftarrow q(b) \mid q\in Q]$ where $b$ is some symbol
for  better readability. 

We define  $N$ inductively starting with the mapping $A'$.
Let $l\in F$. Then we define $A'(l)=c \langle t_0 \rangle (x_1)$, where $c$ is a context and $t_0\in T_\Delta [Q]$,  such that  $c t_0 [\leftarrow x_1]   =A(l)$. 
Note that $t_0 [\leftarrow x_1] $ is the subtree  of $A(l)$ rooted at the lowest common ancestor of all nodes
with label of the form $q(x_1)$.

We now define the rules of $N$.
Let  $\langle t \rangle$ be a state of $N$ where $t=p[x_i\leftarrow q_i \mid q_i\in Q, i\in [n]]$ and $p\in T_\Delta [X_n]$. Let $a\in \Sigma_k$ with $k\geq 0$  and $l_1,\dots ,l_k$ be la-states such that
for all $i\in [n]$, $\text{rhs}_M(q_i,a,l_1,\dots,l_k)$ is defined.
Then, we define
$\langle t \rangle (a(\stat{x_1}{l_1},\dots,\stat{x_k}{l_k})) \to p'[x_j\to \langle t_j \rangle (x_j)\mid j\in [k]  ]$
where $p'\in T_\Delta [X_k]$   such that each variable $x_j$, $j\in [k]$, occurs at most once in $p'$ and
\[ p'[x_j\leftarrow t_j[\leftarrow  x_j]\mid j\in [k]  ]= p[ x_i \leftarrow \rhs{q_i,a,l_1,\dots,l_k}  \mid i\in [n]]\]
and $t_j[\leftarrow  x_j]$ is the subtree of
$ p[ x_i \leftarrow \rhs{q_i,a,l_1,\dots,l_k}  \mid i\in [n]]$ rooted at the lowest common ancestor of all nodes
with labels of the form $q(x_j)$.
In the following we show that  states of the form  $\langle t\rangle$ indeed  store the ``lead'' of $M$ compared to $N$.

\begin{lemma}\label{aux partial lemma}
	Assume that in our construction some state $\langle t \rangle$  has been defined at some point.
	Then there exists a context $c$, a  la-state  $l$ and a node $u$ such that $cl[u]=l$ and
	${M}(cl)/\nu (u)=t[\leftarrow u]$.
\end{lemma}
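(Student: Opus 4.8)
The plan is to prove the statement by induction on the step of the construction at which the state $\langle t\rangle$ is first introduced, mirroring the inductive definition of $N$ itself. For the base case $\langle t\rangle=\langle t_0\rangle$ is introduced through an axiom $A'(l)=c\langle t_0\rangle(x_1)$ for some $l\in F$. Here I would take the trivial context $x_1$, the la-state $l$, and the node $u=\lambda$; the corresponding input tree is the single node $l$, and since $\delta^*(l)=l\in F$ we get $M(l)=A(l)[q(x_1)\leftarrow q(\lambda)]$. As $t_0[\leftarrow x_1]$ is by definition the subtree of $A(l)$ rooted at the lca of the nodes labelled $q(x_1)$, substituting $\lambda$ for $x_1$ yields $M(l)/\nu(\lambda)=t_0[\leftarrow\lambda]=t_0[\leftarrow u]$, as required.

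For the inductive step suppose $\langle t_j\rangle$ is introduced from an already-defined state $\langle t\rangle$ with $t=p[x_i\leftarrow q_i\mid i\in[n]]$ via the rule for input symbol $a\in\Sigma_k$ and la-states $l_1,\dots,l_k$. By the induction hypothesis there are $\hat c,\hat l,\hat u$ with $\hat c\hat l[\hat u]=\hat l$ and $M(\hat c\hat l)/\nu(\hat u)=t[\leftarrow\hat u]$; in particular each $q_i(\hat u)$ occurs in $M(\hat c\hat l)$, so Corollary~\ref{appendix corollary} gives $\rho(q_i)=\hat l$, and since all $\rhs{q_i,a,l_1,\dots,l_k}$ are defined, la-uniformity forces $\delta(a,l_1,\dots,l_k)=\rho(q_i)=\hat l$. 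I would then pick, for each $j'\neq j$, a ground tree $g_{j'}\in\text{dom}(l_{j'})$ and set
\[
cl:=\hat c\hat l[\hat u\leftarrow a(g_1,\dots,g_{j-1},l_j,g_{j+1},\dots,g_k)],\quad u:=\hat u.j,\quad l:=l_j.
\]
Because $a(\dots)$ still evaluates to $\hat l$, we have $\delta^*(cl)=\delta^*(\hat c\hat l)\in F$, and $cl$ has $l_j$ as its unique la-state leaf, located at $u$, so $cl[u]=l$. Since the computations of $M$ on $\hat c\hat l$ and $cl$ agree on and above $\hat u$ (same input and same la-annotations there), the nodes $q_i(\hat u)$ reappear in $M(cl)$ at the same positions, and each expands to $\makro{q_i}_{\hat u}(a(\dots))=\rhs{q_i,a,l_1,\dots,l_k}$ with every $q(x_{j'})$, $j'\neq j$, replaced by the ground tree $\makro{q}(g_{j'})$ and every $q(x_j)$ replaced by $q(\hat u.j)$. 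Hence the subtree of $M(cl)$ at $\nu(\hat u)$ is the construction tree $P=p[x_i\leftarrow\rhs{q_i,a,l_1,\dots,l_k}]$ after these substitutions. As $cl$ carries a single la-state leaf, the only non-$\Delta$ nodes of $M(cl)$ have the form $q(\hat u.j)$, so $\nu(u)$ is their lca, which coincides with the lca of the $q(x_j)$-nodes of $P$ --- precisely the root of $t_j[\leftarrow x_j]$ --- and reading off the subtree there yields $M(cl)/\nu(u)=t_j[\leftarrow u]$.

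The main obstacle I anticipate is the final lca bookkeeping: I must argue that substituting ground trees for the children $j'\neq j$ does not move the lowest common ancestor of the $q(\hat u.j)$-nodes away from the position the construction selects. This is exactly where \emph{lca-conformity} enters, since it guarantees that the subtree of $P$ rooted at the lca of the $q(x_j)$-nodes contains no $q(x_{j'})$ with $j'\neq j$, so that $t_j[\leftarrow x_j]$ is well defined and the ground substitutions touch only strictly disjoint subtrees. A secondary point needing explicit justification is that the states reaching $\hat u$ are unchanged when the leaf $\hat l$ is replaced by $a(\dots)$; this holds because the la-annotation at $\hat u$ and above is preserved, so the prefix of the derivation up to $\hat u$ is identical in both runs.
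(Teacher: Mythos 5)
Your proof is correct and takes essentially the same route as the paper's: structural induction over the steps of the construction, using Corollary~\ref{appendix corollary} and la-uniformity to pin down the la-state ($\delta(a,l_1,\dots,l_k)=\hat l$), then expanding the input at $\hat u$ by $a$ and reading off $t_j$ via the construction equation, with lca-conformity guaranteeing that $t_j[\leftarrow x_j]$ contains no $q(x_{j'})$ for $j'\neq j$. The only difference is that you explicitly substitute ground trees $g_{j'}\in\text{dom}(l_{j'})$ for the sibling positions so as to produce a genuine single-la-state context, whereas the paper works with the tree $a(l_1,\dots,l_k)$ carrying all $k$ la-states and leaves that last step implicit (``it is easy to see that our claim holds''); your version is, if anything, slightly more complete on this point.
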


\begin{proof}
	We prove our claim by structural induction.
	If  $\langle t \rangle$  has been defined then there exists a la-state $l\in F$ and a context $c$ such that
	such that $A'(l)=c \langle t \rangle (x_1)$
	or  there exists some state $\langle t' \rangle$, la-states $l_1,\dots,l_k$,  and some $a\in \Sigma_k$, $k>0$, such that  $\langle t \rangle (x_j)$ occurs in
	$\text{rhs}_N (\langle t' \rangle, a ,l_1,\dots, l_k)$  and $j\in [k]$. As the former case is trivial consider the latter case.
	
	Let $t'=p[q_1 ,\dots,q_{n}]$ and  $\delta (a,l_1,\dots,l_k)=l_0$. As $\text{rhs}_N (\langle t' \rangle, a ,l_1,\dots, l_k)$
	is defined, by construction  $\text{rhs}_M (q_i,a,l_1,\dots,l_k)$ is defined for all $i\in [n]$. Thus,
	the la-conformity of $M$ yields
	 $\rho (q_i)=l_0$ for all $i\in [n]$. 
	Let $\text{rhs}_N (\langle t' \rangle, a ,l_1,\dots, l_k)= p'[ x_j\leftarrow \langle t_j \rangle (x_j) \mid j\in [k] ]$. 
	By construction
	\begin{equation}\label{aux equation}
	p'[ x_j\leftarrow t_j[\leftarrow x_j] \mid j\in [k] ] =
	p[ x_i \leftarrow \rhs{q_1,a,l_1,\dots,l_k} \mid i \in [n]].
	\end{equation}
	By induction, a context $c'$, a la-state $l'$ and a node $u'$ exist such that ${M}(c'l')/ \nu(u')
	=t'[\leftarrow u']= p[q_1 (u'),\dots,q_{n} (u')]$ and $c'l'[u']=l'$.
	Recall that  $\rho (q_i)=l_0$ for all $i\in [n]$ and hence by Corollary~\ref{appendix corollary}, $l'=l_0$.
	Therefore, for  $s=c'l' [u' \leftarrow a[l_1, \dots, l_k]]$ the following equation   \begin{multline}\label{aux equation 2}
	{M}(s) / \nu(u')= {M}(cl') [q(u') \leftarrow
	\makro{q}_{u'} (a[l_1, \dots, l_k])  \mid q\in Q] / \nu(u') =\\
	p[ x_i \leftarrow \makro{q_i}_{u'} (a[l_1, \dots, l_k]) \mid i\in [n] ]
	\end{multline}
	holds.
	Due to Equation~\ref{aux equation} it follows easily that, 
	\begin{equation*}
	p[ x_i \leftarrow \makro{q_i}_{u'} (a[l_1, \dots, l_k]) \mid i\in [n] ]=
	p'[x_j\leftarrow t_j[\leftarrow u'.j] \mid j\in [k] ]. 
	\end{equation*}
	Note that ${M}(s) / \nu(u')$ contains all nodes of $M(s)$ with label in $Q(V)$.
	Therefore it follows that
	$t_j[\leftarrow u.j]={M}(s) / \nu( u'.j) $ holds. 
	Thus, it is easy to see that our claim holds. \qed
\end{proof}

Due to Lemmas~\ref{lca-tree bound} and~\ref{aux partial lemma},
the height of any tree $t$ such that $\langle t \rangle$ is a state of $N$ is bounded.
Accordingly, our construction  terminates.
We now show that our construction is well defined.

\begin{lemma}\label{not unsuccessful}
 Let $\langle t \rangle$ with $t=p[q_1,\dots,q_n]$ and let $\text{rhs}_M (q_i,a,l_1,\dots,l_k)$
 be defined  for all $i\in [n]$.
	Then,  $p'\in T_\Sigma [X_k]$ exists such that
	$$  p'[x_j\leftarrow t_j[\leftarrow  x_j]\mid j\in [k]  ]= p[ x_i \leftarrow \rhs{q_i,a,l_1,\dots,l_k}  \mid i\in [n]],$$
	where $t_j[\leftarrow  x_j]$ is the subtree of
	$ p[ x_i \leftarrow \rhs{q_i,a,l_1,\dots,l_k}  \mid i\in [n]]$ rooted at the lowest common ancestor of all nodes
	with labels of the form $q(x_j)$, 
	and each variable $x_j$, $j\in [k]$, occurs at most once in $p'$.
\end{lemma}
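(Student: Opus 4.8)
The plan is to reduce well-definedness entirely to the lca-conformity of $M$, using the correspondence supplied by Lemma~\ref{aux partial lemma}. Write $S := p[\, x_i\leftarrow \rhs{q_i,a,l_1,\dots,l_k}\mid i\in[n]\,]\in T_\Delta[Q(X_k)]$ for the right-hand side of the required equation, and for each $j\in[k]$ let $\nu_j$ be the lowest common ancestor in $S$ of all leaves labeled $q(x_j)$, $q\in Q$ (undefined if no such leaf occurs), setting $t_j[\leftarrow x_j]:=S/\nu_j$ as in the statement. Since $\nu_j$ is by definition an ancestor of every $q(x_j)$-leaf, every $Q(X_k)$-leaf of $S$ lies in at least one subtree $S/\nu_j$; if the subtrees $S/\nu_1,\dots,S/\nu_k$ are pairwise disjoint, then each such leaf lies in exactly one of them. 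Granting disjointness, the desired $p'$ is produced by the obvious surgery: replace each subtree $S/\nu_j$ by a single fresh leaf $x_j$. Re-substituting yields $p'[\, x_j\leftarrow t_j[\leftarrow x_j]\mid j\in[k]\,]=S$; since each $\nu_j$ is cut out once, $x_j$ occurs at most once in $p'$; and since every $Q$-labeled leaf of $S$ is removed by these cuts, $p'\in T_\Delta[X_k]$.

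Everything therefore hinges on proving that $S/\nu_1,\dots,S/\nu_k$ are pairwise disjoint, and this is exactly where lca-conformity enters. By Lemma~\ref{aux partial lemma} there are a context $c$, an la-state $l$ and a node $u$ with $cl[u]=l$ and $M(cl)/\nu(u)=t[\leftarrow u]$. Because every $\rhs{q_i,a,l_1,\dots,l_k}$ is defined we have $\rho(q_i)=l$ for all $i$ and $\delta(a,l_1,\dots,l_k)=l$, so $s := cl[u\leftarrow a(l_1,\dots,l_k)]$ still satisfies $\delta^*(s)\in F$ and $M(s)$ is defined. Expanding the single input node $u$ as in the proof of Lemma~\ref{aux partial lemma} gives
\[
M(s)/\nu(u)\;=\;p\bigl[\, x_i\leftarrow \makro{q_i}_u(a(l_1,\dots,l_k)) \mid i\in[n]\,\bigr],
\]
which is precisely $S$ with every variable $x_j$ renamed to the input node $u.j$ (recall $\makro{q}_{u.j}(l_j)=q(u.j)$). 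As all $q(u.j)$-leaves are descendants of $\nu(u)$, their common ancestor $\nu_M(s,u.j)$ is a descendant of $\nu(u)$, so under this renaming $\nu_j$ corresponds to $\nu_M(s,u.j)$ and $S/\nu_j=M(s)/\nu_M(s,u.j)$.

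Now invoke lca-conformity of $M$ on the output tree $M(s)$: for each $j$ with $\nu_M(s,u.j)$ defined, no leaf labeled $q(u')$ with $u'\neq u.j$ occurs in $M(s)/\nu_M(s,u.j)$. Translated back to $S$, the subtree $S/\nu_j$ contains no leaf labeled $q(x_{j'})$ with $j'\neq j$. Disjointness follows at once: were $\nu_j$ an ancestor of (or equal to) $\nu_{j'}$ for some $j\neq j'$, then $S/\nu_j$ would contain the $q(x_{j'})$-leaves sitting under $\nu_{j'}$, contradicting the preceding sentence; hence $\nu_1,\dots,\nu_k$ are pairwise incomparable and the subtrees $S/\nu_j$ are disjoint. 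I expect the main obstacle to be the careful bookkeeping of the variable-to-node correspondence $x_j\leftrightarrow u.j$ needed to apply lca-conformity to the genuine output tree $M(s)$ rather than to the abstract pattern $S$ directly; once that correspondence is pinned down, both the disjointness argument and the final tree surgery are routine.
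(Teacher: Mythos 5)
Your proposal is correct and follows essentially the same route as the paper's own proof: invoke Lemma~\ref{aux partial lemma} to realize $t$ as $M(cl)/\nu(u)$, expand the input at $u$ into $a(l_1,\dots,l_k)$ (which stays in the domain since $\rho(q_i)=l=\delta(a,l_1,\dots,l_k)$), identify $S$ with $M(s)/\nu(u)$ under the correspondence $x_j\leftrightarrow u.j$, and use lca-conformity of $M$ to get pairwise disjointness of the subtrees rooted at the lca nodes, from which $p'$ is obtained by cutting them out. You merely spell out the disjointness argument and the final tree surgery that the paper compresses into ``it follows easily.''
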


\begin{proof}
	By Lemma~\ref{aux partial lemma} a context $c$ and an la-state $l$ such that
	${M}(cl)/\nu (u) =t [\leftarrow u] =p[q_1 (u),\dots,q_n (u)]$.
	Let $s=c [u\leftarrow a (l_1,\dots ,l_k) ]$. 
	Then  ${M}(s)={M}(s') [q(u) \leftarrow \makro{q}_u( a [l_1,\dots ,l_k]) \mid q\in Q ]$. 
	Hence $${M}(s) /\nu_M (cl,u) = p[x_i \leftarrow  \makro{q_i}_u( a (l_1,\dots ,l_k)) \mid i [n]].$$
	Note that ${M}(s) /\nu_M (cl,u)$ contains all nodes of $M(s)$ labeled by a symbol in $Q(V)$.
	Therefore,   
	for all $i\in [k]$, $\nu_M (cl,u)$ is an ancestor of $\nu_M (s,u.i)$.
	Additionally, the lca-conformity of $M$ yields that  $\nu_M (s,u.i)$ and $\nu_M (s,u.i')$ are disjoint if
	$i\neq i'$.  
	Therefore it follows easily that  $p'\in T_\Sigma [X_k]$  exist such that
	$  p'[x_j\leftarrow t_j[\leftarrow  x_j]\mid j\in [k]  ]= p[ x_i \leftarrow \rhs{q_i,a,l_1,\dots,l_k}  \mid i\in [n]]$,
	where $t_i \in T(Q)$ such that ${M}(s) /\nu (u.i) =t_i[\leftarrow u.i]$,
	and each variable $x_i$, $i\in [n]$, occurs at most once in $p'$. \qed
\end{proof}

\noindent
We now prove that the la-transducer $N$ our construction yields is equivalent to $M$.

\begin{lemma}\label{equal}
	If the construction terminates successfully then  it  yields an la-transducer $N$ which is linear and equivalent to $M$.
\end{lemma}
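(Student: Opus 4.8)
The plan is to establish the equivalence $N(s) = M(s)$ for all ground input trees $s \in \mathcal{L}(B)$ by a structural induction that crucially relies on the semantic invariant captured by Lemma~\ref{aux partial lemma}: a state $\langle t \rangle$ of $N$ encodes exactly the ``aheadness'' tree $t[\leftarrow u] = M(cl)/\nu(u)$ that $M$ has already committed to but $N$ has not yet emitted. The linearity of $N$ follows immediately from the construction, since in each rule $\langle t \rangle (a(\dots)) \to p'[x_j \leftarrow \langle t_j \rangle (x_j) \mid j \in [k]]$ the pattern $p'$ is explicitly required to contain each variable $x_j$ at most once (this is exactly what Lemma~\ref{not unsuccessful} guarantees is achievable), and the axiom $A'(l) = c\langle t_0 \rangle(x_1)$ is linear as well. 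So the real work is the equivalence claim.

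First I would fix an accepting la-state $l \in F$ and prove a strengthened statement by induction on the structure of a ground input $s \in \mathrm{dom}(l)$, namely that for every state $\langle t \rangle$ of $N$ with $t = p[q_1,\dots,q_n]$ and every ground $s$ with $\delta^*(s) = \rho(q_i)$ (note all $q_i$ share one la-state by la-uniformity and Corollary~\ref{appendix corollary}), we have
\[
\makro{\langle t \rangle}(s)[\leftarrow \lambda] \;=\; p[x_i \leftarrow \makro{q_i}(s) \mid i \in [n]].
\]
Informally: running $N$ from $\langle t \rangle$ on $s$ and then re-attaching the stored prefix $t$ reproduces exactly what $M$ produces by running each buried state $q_i$ on $s$ and plugging the results into $p$. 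The base case is when $s$ is a leaf (rank $0$), where the rule for $\langle t \rangle$ directly emits the fully assembled output because no $\langle t_j \rangle$ remain; here I would unfold the defining equation of the rule and use that $\mathrm{rhs}_M(q_i, a)$ is ground. For the inductive step with $s = a(s_1,\dots,s_k)$, I would apply the rule for $\langle t \rangle$ on $a$, obtaining $p'[x_j \leftarrow \makro{\langle t_j \rangle}(s_j)]$, apply the induction hypothesis to each $\langle t_j \rangle$ on $s_j$, and then substitute into the construction's defining identity $p'[x_j \leftarrow t_j[\leftarrow x_j]] = p[x_i \leftarrow \mathrm{rhs}_M(q_i,a,\vec l)]$ to collapse everything back to $p[x_i \leftarrow \makro{q_i}(s)]$, using the recursive definition of $\makro{q_i}$.

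Once the state-level invariant holds, I would derive the global equivalence by applying it at the axiom: for ground $s$ with $\delta^*(s) = l \in F$, the axiom $A'(l) = c\langle t_0\rangle(x_1)$ gives $N(s) = c \cdot \makro{\langle t_0\rangle}(s)$, and by the invariant together with the defining identity $c\,t_0[\leftarrow x_1] = A(l)$, this equals $A(l)[q(x_1) \leftarrow \makro{q}(s)] = M(s)$. **The main obstacle** I anticipate is bookkeeping the node-annotated semantics $\makro{q}_u$ correctly through the substitutions: the construction tracks lowest-common-ancestor subtrees $t_j$, and I must verify that the lca where $N$ ``resumes'' processing input position $u.j$ matches the subtree $M(s)/\nu(u.j)$ extracted in Lemma~\ref{aux partial lemma}. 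This requires carefully invoking lca-conformity to ensure the $\nu(u.j)$ are pairwise disjoint (so the $t_j$ can be separated into a linear $p'$ without overlap), which is precisely the content reused from the proof of Lemma~\ref{not unsuccessful}; I would cite that disjointness rather than re-deriving it.
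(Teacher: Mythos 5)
Your proof is correct, but it takes a genuinely different route from the paper's. You prove a \emph{per-state simulation invariant} by bottom-up structural induction on ground inputs: $\makro{\langle t \rangle}(s) = p[x_i \leftarrow \makro{q_i}(s) \mid i\in[n]]$, i.e., running $N$ from $\langle t \rangle$ on $s$ equals plugging $M$'s state outputs into the stored pattern; the inductive step composes the construction's defining identity $p'[x_j\leftarrow t_j[\leftarrow x_j]] = p[x_i \leftarrow \rhs{q_i,a,l_1,\dots,l_k}]$ with the substitution $[q(x_j)\leftarrow \makro{q}(s_j)]$, and the axiom case then gives $N(s) = c\,t_0[q\leftarrow\makro{q}(s)] = A(l)[q(x_1)\leftarrow\makro{q}(s)] = M(s)$. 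The paper instead inducts \emph{top-down over partial inputs}: for trees $s$ with a single la-state leaf it proves the global equation $M(s) = N(s)[\langle t \rangle(u) \leftarrow t[\leftarrow u]]$, extending $s$ one input symbol at a time and using the linearity of $N$ inside the induction to locate the unique non-$\Delta$ node of $N(s)$. Your version is more modular and avoids the node-annotated semantics $\makro{q}_u$ and the bookkeeping about which output node carries the state; it needs Lemma~\ref{aux partial lemma} only indirectly, through the well-definedness result (Lemma~\ref{not unsuccessful}), not as a driving invariant as you suggest in your opening sentence. The paper's version yields, as a by-product, the stronger statement for partial inputs, which matches the form of Lemma~\ref{aux partial lemma} and the aheadness intuition. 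Two minor points: the notation $\makro{\langle t \rangle}(s)[\leftarrow \lambda]$ is superfluous since $\makro{\langle t \rangle}(s)$ is already ground for ground $s$; and for full equivalence of partial la-transducers you should note explicitly that $\mathrm{dom}(N)=\mathrm{dom}(M)$, which follows from la-uniformity since $N$'s rule for $(\langle t\rangle, a, l_1,\dots,l_k)$ is defined exactly when $M$'s rules for all buried states $q_i$ are --- the same fact you already invoke via Corollary~\ref{appendix corollary}.
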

\begin{proof}
	Clearly our construction yields a linear la-transducer. Hence, we only need to show that $M$ and
	$N$ are equivalent.  Note that by construction both la-transducers have the same la-automaton.
	We prove the equivalence by structural induction.
	
	In particular we show that for an arbitrary tree  $s=\bar{s}[x_1\leftarrow l]$ where $\bar{s}\in T_\Sigma \{X_1\}$ and  $l\in L$, such that $\delta^* (s)\in F$, ${M}(s)={N}(s) [\langle t \rangle (u) \leftarrow t[\leftarrow u]\mid \langle t \rangle \in Q', u\in V(s)]$. Obviously this statement
	holds if $s=l$ and $l\in F$. 
	
	 Let $u$ be a node of $s$ such that $s/u=a[s_1,\dots,s_k]$ where $a\in \Sigma_k$, $k\geq 0$, 
	 $s_i= l$ for some $i\in [k]$ and $s_j \in T_\Sigma$ if $j\neq i$.
	 Let  $\delta^*(a[s_1,\dots,s_k])=l'$ and $s'=s[u'\leftarrow l']$. 
	By induction $${M}(s')={N}(s') [\langle t' \rangle (u') \leftarrow t [\leftarrow u'] \mid \langle t' \rangle \in Q'].$$
	Let $\nu$ be the node labeled $\langle t' \rangle (u') $ in $N(s)$. As $N$ is linear there is only one such node.
	 Clearly $M(s)/\nu = t' [\leftarrow u']$.
	Let $t'=p[q_1,\dots q_n]$. By construction,  $\langle t' \rangle (a(\stat{x_1}{l_1},\dots,\stat{x_k}{l_k})) \to p'[x_j\to \langle t_j \rangle (x_j)\mid j\in [k]  ]$
	where $p'\in T_\Delta [X_k]$   such that 
	\[ p'[x_j\leftarrow t_j[\leftarrow  x_j]\mid j\in [k]  ]= p[ x_i \leftarrow \rhs{q_i,a,l_1,\dots,l_k}  \mid i\in [n]]\ (*).\]
	
	Clearly, $M(s)/\nu = p[x_i\leftarrow \makro{q_i}_{u'} (a[s_1,\dots,s_k]) \mid i \in [n] ]$ and
	${N}(s)/\nu = \makro{ \langle t' \rangle}_{u'} (a[s_1,\dots,s_k]) =p'[x_j\leftarrow \makro{ \langle t_j \rangle}_{u'.j}  (s_j) \mid j\in [k]  ] $.
	Note that $M(s)/\nu$ and $N(s)/\nu$ contain all nodes of $M(s)$ and $N(s)$ not labeled by symbols in $\Delta$.
	Recall that $s_i=l\in L$. Thus, $ \makro{ \langle t_i \rangle}_{u'.i}  (s_i) = \langle t_i \rangle (u'.i)$
	and nodes with label of the form $q(u'.i)$ occur in $M(s)$.
	Clearly a node is labeled by $ \langle t_i \rangle (u'.i)$ in $N(s)/\nu$ iff that node is labeled by $\langle t_i \rangle (x_i)$ in $p'[x_j\leftarrow t_j[\leftarrow  x_j]\mid j\in [k]  ]$.
	Analogously, a node ha label of the form $q(u'.i)$ in $M(s)$ iff that node has label of the form $q(x_i)$ in
	  $p[ x_i \leftarrow \rhs{q_i,a,l_1,\dots,l_k}  \mid i\in [n]]$.
	Therefore,  ${M}(s)/\nu \neq {N}(s)\nu [\langle t' \rangle (u'.i) \leftarrow t[\leftarrow u'.i] ]$ contradicts $(*)$.
	Hence our claim follows. \qed
\end{proof}

The construction and Lemmas~\ref{zero output twinned} and~\ref{lca-conform} yield the following theorems.

\begin{theorem}
	Let $M$ be a  canonical earliest la-transducer. An equivalent linear la-transducer $N$ with the same la-automaton as 
	$M$  exists if and only if 
	$M$ is zero output twinned  and lca-conform.
\end{theorem}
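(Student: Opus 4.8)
The plan is to prove the two directions of the equivalence separately; in each case the argument is an assembly of lemmas already established, so no new technical work is required beyond checking that the hypotheses line up.

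For the ``only if'' direction I would assume that a linear la-transducer $N$ equivalent to $M$ and sharing its la-automaton exists. The la-transducer version of the zero-output-twinnedness lemma (the analogue of Lemma~\ref{zero output twinned}) then yields immediately that $M$ is zero output twinned, and the la-transducer version of the lca-conformity lemma (the analogue of Lemma~\ref{lca-conform}) yields that $M$ is lca-conform. Both of these lemmas are stated precisely under the hypothesis that $M$ is equivalent to a linear la-transducer with the same la-automaton, so this direction follows with no further argument.

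For the ``if'' direction I would assume $M$ is zero output twinned and lca-conform and run the construction of the previous subsection. If $A(l)$ is ground for every $l\in F$, then $M$ produces only ground output and is trivially linear, so take $N=M$; otherwise I would verify the three properties that make the construction yield the desired $N$. First, termination: by Lemma~\ref{aux partial lemma} every state $\langle t\rangle$ created corresponds to a subtree $t[\leftarrow u]=M(cl)/\nu(u)$ for a suitable context $c$ and la-state $l$, and by Lemma~\ref{lca-tree bound} the height of each such subtree is at most $(|Q|^2+1)\eta$; since only finitely many trees in $T_\Delta[Q]$ of bounded height exist, only finitely many states arise. Second, well-definedness: Lemma~\ref{not unsuccessful} guarantees that whenever the relevant right-hand sides $\rhs{q_i,a,l_1,\dots,l_k}$ are defined, the required linear decomposition $p'$ exists, so each rule of $N$ is well defined. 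Third, correctness: Lemma~\ref{equal} shows that the constructed $N$ is linear and equivalent to $M$, and by construction $N$ retains the la-automaton of $M$. Together these give a linear la-transducer $N$ equivalent to $M$ with the same la-automaton, as required.

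Since every ingredient has been supplied by earlier lemmas, I do not expect a genuine obstacle here; the only point demanding care is that the cited forward-direction lemmas require $N$ to share the la-automaton of $M$, and that the construction in the backward direction indeed preserves the la-automaton. Both conditions hold, so the two directions fit together to establish the stated equivalence.
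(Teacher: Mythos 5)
Your proof is correct and follows essentially the same route as the paper: the ``only if'' direction is exactly the appendix versions of Lemmas~\ref{zero output twinned} and~\ref{lca-conform}, and the ``if'' direction is the construction, justified by Lemmas~\ref{aux partial lemma}, \ref{lca-tree bound}, \ref{not unsuccessful}, and~\ref{equal}, just as the paper intends. The only minor imprecision is the ground-axiom case: if some rules of $M$ are non-linear you cannot literally take $N=M$, but since the ground axioms never invoke any state, replacing the right-hand sides by arbitrary linear ones gives the required $N$.
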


\begin{theorem}\label{partial main result}
	Let $M$ be an la transducer. 
	It is decidable  whether or not an 
	equivalent linear la-transducer $N$ with the same la-automaton as $M$  exists, 
	and if so $N$ can be effectively constructed. 
\end{theorem}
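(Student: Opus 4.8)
The plan is to reduce the general case to the canonical earliest case and then invoke the characterization theorem above together with the two effective decision procedures. First I would apply Theorem~\ref{canonical earliest theorem} to the given la-transducer $M$ to construct an equivalent canonical earliest la-transducer $M'$ having the \emph{same} la-automaton as $M$. Since $M$ and $M'$ are equivalent and share their la-automaton, a linear la-transducer $N$ with the same la-automaton as $M$ is equivalent to $M$ if and only if it is equivalent to $M'$; hence it suffices to decide the question for $M'$. This reduction is the only genuinely new observation at this level, and it is what routes the arbitrary input $M$ into the normal form in which all the earlier machinery applies.

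Next I would appeal to the preceding characterization theorem: a canonical earliest la-transducer admits an equivalent linear la-transducer with the same la-automaton if and only if it is zero output twinned and lca-conform. Here I rely on the fact that the canonical earliest form produced in the first step is in particular la-uniform, which is the standing precondition under which both properties and their decision procedures are defined. By Lemmas~\ref{zero output twinned} and~\ref{lca-conform} I can decide effectively whether $M'$ is zero output twinned (via reachability between the nodes $(q_1,q_2)$ and $[q_1,q_2]$ in the graph $G$ built from the pairwise-occurring states) and whether $M'$ is lca-conform (via the triplet analysis supported by Lemma~\ref{decision aux}). If either property fails, I would output ``no'', which is correct by the ``only if'' directions of those two lemmas.

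If both properties hold, I would run the construction of the preceding subsection on $M'$ to obtain $N$. Its termination is guaranteed by Lemmas~\ref{lca-tree bound} and~\ref{aux partial lemma}, which together bound the height, and hence the number, of the tree-indexed states $\langle t \rangle$; its well-definedness is Lemma~\ref{not unsuccessful}; and its correctness --- that $N$ is linear, equivalent to $M'$, and carries the same la-automaton --- is Lemma~\ref{equal}. Chaining the equivalences $N \equiv M' \equiv M$ then yields that $N$ is an equivalent linear la-transducer with the same la-automaton as $M$, as required. Note I only claim decidability, not a complexity bound, since the canonical earliest construction of the first step may itself be expensive.

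The step that carries the real weight is not this assembly but the upstream characterization: everything hinges on the fact that zero-output-twinnedness and lca-conformity are not merely necessary but \emph{jointly sufficient}, which is exactly what the construction and its correctness proof (Lemma~\ref{equal}) supply. At the level of this theorem the main obstacle is purely one of hygiene --- namely checking that canonicalization preserves both equivalence and the la-automaton, so that the decision made for $M'$ faithfully answers the question posed for $M$; the remainder is bookkeeping.
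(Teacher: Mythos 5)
Your proposal is correct and follows essentially the same route as the paper: normalize $M$ to an equivalent canonical earliest (hence la-uniform) la-transducer with the same la-automaton via Theorem~\ref{canonical earliest theorem}, decide zero-output-twinnedness and lca-conformity with the appendix procedures, and in the affirmative case run the construction, whose termination, well-definedness, and correctness are exactly Lemmas~\ref{lca-tree bound}, \ref{aux partial lemma}, \ref{not unsuccessful}, and \ref{equal}. The paper leaves this assembly implicit (``The construction and Lemmas \ldots yield the following theorems''), and your write-up simply makes that chain explicit.
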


\subsection{Proof of Theorem~\ref{linear main result}}
\begin{proof}
First, note that  if a linear transducer (without look-ahead) $T$ equivalent to $M$ exists then there is a linear la-transducer
$N$ equivalent to $M$ which has the same la-automaton as $M$. 
By Theorem~\ref{partial main result} it is decidable for a given la-transducer whether or not
an equivalent linear la-transducer  with the same la-automaton  exists and 
if so then this linear la-transducer  can  be constructed.
Hence, if no such transducer exists we return ``no''. Otherwise, we
use Theorem~20 of \cite{DBLP:conf/icalp/ManethS20} to decide whether or not a linear transducer equivalent to $N$ exists and
in the affirmative case we construct such a transducer. \qed
\end{proof}

\subsection{Proof of Lemma~\ref{hom lemma 1}}

\begin{proof}
	Assume the contrary. 
	Then $a\in \Sigma_k$, $k>0$, must exist such that $t_1={T} (a(x_1,\dots,x_k))$ is not subtree conform to $A$.
	Let $t_2=h(a(x_1,\dots,x_k))$.
	
	By 	Corollary~\ref{auxilliary appendix 2}, $t_1= t_2[ v\leftarrow t_1/v \mid v\in V_{\neg \Delta}(t_2)]$. Therefore if for all 
	$v\in V_{\neg \Delta}(t_2)$, either 
	(1)~$t_1/v\simeq A$  or (2)~$t_1/v \in T_\Delta$ holds then $t_1$ would be subtree conform to $A$.
	Therefore, there must be a node $v'\in V_{\neg \Delta}(t_2)$
	such that  (1)~$t_1/v'\not\simeq A$  and (2)~$t_1/v'$ is not ground. Let 	$t_2/v'=h(i)$, $i\in [k]$.
	For arbitrary trees $s_1,\dots,s_k$, clearly
	\begin{multline}\label{central equation}
	t_1[q(x_j) \leftarrow \makro{q}(s_j)\mid q\in Q, j\in [k]] /v'= {T} (a(s_1,\dots,s_k))/v'=\\
	h(a(s_1,\dots,s_k))/v'=h(s_i)=  T(s_i) = A[q'(x_1) \leftarrow \makro{q'}(s_i)\mid q'\in Q]
	\end{multline}
	First we show that $V(t_1/v') \subseteq V(A)$. Assume that a node in $V(t_1/v') \setminus V(A)$ exists, then, a proper ancestor $v_A$ of that node exists such that $A [v_A] =q_A(x_1)$ otherwise Equation~\ref{central equation} would not hold.
	Clearly  $t_1/v_1 [v_A]=d\in\Delta$ holds in this case.
	If a node $v_A$ such that  $A [v_A] =q_A(x_1)$ and $t_1/v_1 [v_A]=d$ exists,then  Equation~\ref{central equation}  implies that $q_A$ generates trees with root label $d$ for arbitrary inputs contradicting the earliest property of $T$.
	Therefore we deduce that $V(t_1/v') \subseteq V(A)$ and that if $v\in V(t_1/v')$ such that $t/v_1 [v] =d\in \Delta$ then
	$A [v] =d$.
	Analogously, it follows that $V(t_1/v') \supseteq V(A)$  and that if $A [v] =d\in \Delta$ then $t/v_1 [v] =d$, 
	which means $V(t_1/v') = V(A)$ and 	$A [v_A] =d \in \Delta$ if and only if $t_1/v_1 [v_A]=d\in\Delta$.
	
	Let $v_{A} \in V(A)$ such that $t_1/v'[v_A]=q (j)$, $j\in [k]$, and $A[v_A]= q'(x_1)$.
	As $s_1,\dots,s_k$ are arbitrary consider the case in which these trees are identical. 
	Then Equation~\ref{central equation} yields $\makro{q}(s_1)=\makro{q'} (s_1)$  for arbitrary  $s_1\in T_\Sigma$. Hence $q= q'$
	as $T$ is canonical earliest.
	
	By  Equation~\ref{central equation}	$$t_1/v'[q(x_j) \leftarrow \makro{q}(s_j)\mid q\in Q, j\in [k]]=   A[q'(x_1) \leftarrow \makro{q'}(s_i)\mid q'\in Q].$$
	We now show that if $q(x_j)$ occurs in $t_1/v'$ then $i=j$. Assume the contrary.
	Consider the case that  $s_1,\dots,s_k$ are mutually distinct. 
	Due to previous considerations clearly if $t_1/v'[v_A]=q (j)$ then $A[v_A]= q(x_1)$.
	Thus
	Equation~\ref{central equation} yields $\makro{q}(s_j)=\makro{q'} (s_i)$ meaning that the output of $q$ and $q'$ do not depend on the input.
	This contradicts the earliest property of $T$. 
	Thus, we have shown that no node with label of the form $q(j)$ can occur in $t_1/v'$ such that $j\neq i$.
	Therefore, there is  clearly a bijection such that $t_1/v'\simeq A$. \qed
\end{proof}

\subsection{Full Proof of Lemma~\ref{hom lemma 2}}

\begin{proof}
	We construct the homomorphism $h$ as follows. Let $a\in \Sigma_k$, $k\geq 0$.
	If ${T} (a(x_1,\dots,x_k))$ is ground, then we define $h (a(x_1,\dots,x_k))= 	{T} (a(x_1,\dots,x_k))$.
	Otherwise,
	our premise yields ${T} (a(x_1,\dots,x_k))=t[x_j\leftarrow A({j}) \mid j\in [k] ]$ 
	for a suitable $t\in T_\Delta [X_k]$ 
	where $A(j)=A[
	q(x_1)\leftarrow q(j) \mid q\in Q]$.
	Then we define  $h (a(x_1,\dots,x_k))= 
	t[ x_j\leftarrow h({j}) \mid j\in [k] ]$.
	
	The equality of $h$ and $T$  is shown by induction on the structure of the input trees.
	Let $s=a(s_1,\dots,s_k)$  with $a\in \Sigma_k$ and $s_1,\dots,s_k \in T_\Sigma$.
	By our construction $h(a(s_1,\dots,s_k))=t[ x_j\leftarrow h(s_{j}) \mid j\in [k]  ]$,
	where  $t$ is the pattern for which ${T} (a(x_1,\dots,x_k))=t[
	x_j\leftarrow A({j}) \mid j\in [k] ]$ holds.  
	By  induction  $h(s_{j})=	{T} (s_{j})$ holds for all $j\in [k]$  and hence 
	$h(a(s_1,\dots,s_k))=t[x_j\leftarrow T(s_{j}) \mid j\in [k] ]$.
	The latter tree equals ${T}(a(s_1,\dots, s_k))$ by the choice of $t$ and the
	definition of ${T}$.
	Therefore our claim follows.\qed
\end{proof}

\subsection{Deciding Subtree conformity to $A$ and Time Complexity of the Construction in  Lemma~\ref{hom lemma 2}}

\begin{theorem}
	It is decidable in polynomial time whether or not 
	${T}(a(x_1,\dots,x_k))$ is subtree conform to $A$ for all $a\in \Sigma$
	where $T$ is a total transducer $T$ and $A$ is its axiom.
	 In the affirmative case,
	an equivalent homomrphism $h$ can be constructed in  polynomial time as well.
\end{theorem}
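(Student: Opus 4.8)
The plan is to show that both parts of the theorem reduce to a single polynomial-time top-down traversal of the trees $T(a(x_1,\dots,x_k))$, one per input symbol $a\in\Sigma$. First I would observe that these trees are cheap to produce: since $\makro{q}(x_i)=q(x_i)$, the defining substitution in the semantics of $\makro{q}$ is the identity, so $\makro{q}(a(x_1,\dots,x_k))=\rhs{q,a}$ and hence $T(a(x_1,\dots,x_k))=A[q(x_1)\leftarrow\rhs{q,a}\mid q\in Q]$. Each such tree thus has size at most $|A|\cdot\max_q|\rhs{q,a}|\le|T|^2$ and is computable in polynomial time, so it suffices to give a polynomial procedure that, for a fixed $t=T(a(x_1,\dots,x_k))$, decides subtree conformity to $A$ and simultaneously extracts the data needed to build $h(a(x_1,\dots,x_k))$.

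For the decision I would implement the recursive definition directly: a subtree $t/v$ is subtree conform to $A$ iff it is ground, or $t/v\simeq A$, or its root is a symbol $d\in\Delta_k$ all of whose $k$ children are subtree conform to $A$. The key simplification is that, by the standing assumption, $A$ is non-ground and not of the form $q_0(x_1)$, and $A\in T_\Delta[Q(X_1)]$ uses only the variable $x_1$; thus $X_A=\{x_1\}$ and the relation $t/v\simeq A$ reduces to the statement that $t/v$ contains exactly one variable $x_j$ and that renaming $x_j$ to $x_1$ turns $t/v$ into $A$ --- a single structural comparison of cost $O(|A|)$ with early abort. I would precompute, bottom-up in time $O(|t|)$, for every node both its subtree size and the set of variables occurring below it, so that groundness and the size precondition of the $\simeq A$ test are available in constant time. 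Since the recursion descends into a node only through its unique parent, each of the $O(|t|)$ nodes is examined once, giving total time $O(|t|\cdot|A|)$ per symbol and hence polynomial time over all of $\Sigma$.

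For the construction I would reuse exactly this traversal. Whenever the test at a node $v$ succeeds via $t/v\simeq A$, the subtree $t/v$ is a maximal subtree equivalent to $A$ and, because $X_A=\{x_1\}$, equals $A(j)=A[q(x_1)\leftarrow q(x_j)\mid q\in Q]$ for its unique variable $x_j$; I then replace $t/v$ by $x_j$ and stop descending, while ground subtrees are left in place (they carry no variable) and internal $\Delta$-nodes are copied. This yields a pattern $p\in T_\Delta[X_k]$ with $T(a(x_1,\dots,x_k))=p[x_j\leftarrow A(j)\mid j\in[k]]$, which is precisely the factorization required by Lemma~\ref{hom lemma 2}; setting $h(a(x_1,\dots,x_k))=p[x_j\leftarrow h(x_j)\mid j\in[k]]$, and $h(a(x_1,\dots,x_k))=T(a(x_1,\dots,x_k))$ when the latter is ground, defines $h$ in time linear in $|T(a(x_1,\dots,x_k))|$ and hence polynomially overall. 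Equivalence of $h$ and $T$ is then exactly the statement of Lemma~\ref{hom lemma 2}.

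The main obstacle I anticipate is not the running time but making the factorization step rigorous: I must argue that the maximal $\simeq A$ subtrees together with the ground subtrees partition the frontier of $t$ cleanly, so that the replacement yields a well-defined pattern $p$. This uses that ground subtrees contain no variable while every $\simeq A$ subtree contains exactly one (as $A$ is non-ground), so the two kinds of frontier subtrees can neither be nested nor overlap, and that testing $\simeq A$ \emph{before} descending guarantees the extracted subtrees are the maximal ones. A secondary point to verify is that the index $j$ attached to each extracted subtree is uniquely determined, which again follows from $|X_A|=1$.
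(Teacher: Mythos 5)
Your proposal is correct, and its overall architecture matches the paper's: compute $t=T(a(x_1,\dots,x_k))=A[q(x_1)\leftarrow \rhs{q,a}\mid q\in Q]$ (polynomial size), decide conformity by locating the subtrees that are ground or $\simeq A$, and feed the resulting factorization $t=p[x_j\leftarrow A(j)\mid j\in[k]]$ into Lemma~\ref{hom lemma 2} to build $h$. Where you differ is in the algorithmic realization of the conformity test. The paper exploits the fact that any subtree $\simeq A$ must have height exactly $\text{height}(A)$: it inspects only the nodes $v$ with $\text{height}(t/v)=\text{height}(A)$, and since such nodes are pairwise disjoint, the combined size of the inspected subtrees is at most $|t|$, giving an $O(|\Sigma|\,|T|^2)$ bound. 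You instead implement the recursive definition of subtree conformity verbatim, top-down, paying up to $O(|A|)$ per visited node for the $\simeq A$ comparison, which yields the slightly weaker but still polynomial bound $O(|\Sigma|\,|t|\,|A|)$. What the paper's height filter buys is the sharper quadratic complexity; what your direct recursion buys is completeness for free: it automatically rejects trees like $f(q(x_1),A')$ with $A'\simeq A$, where a non-ground leaf $q(x_1)$ sits outside every subtree of height $\text{height}(A)$ --- a case the paper's check, as literally written, does not explicitly rule out and handles only implicitly. Your observations that $X_A=\{x_1\}$ collapses the bijection in $\simeq$ to a single renaming, that $p$ lives in $T_\Delta[X_k]$ (so repeated variables are permitted, as in Example~\ref{hom example 1}), and that stopping the descent at the first ground or $\simeq A$ node makes the extracted subtrees pairwise incomparable, are exactly the points needed to make both the decision and the construction rigorous.
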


\begin{proof}
	Whether or not ${T} (a(x_1,\dots,x_k))$ is subtree conform to $A$ for all $a\in \Sigma$ can be determined as follows.
	For all $a\in \Sigma$ we analyze the structure of 	$t={T} (a(x_1,\dots,x_k)$.
	Consider all nodes $v$ of $t$ such that  $\text{height}(t/v)= \text{height}(A)$.
	If  $t$ is subtree conform to $A$ then clearly 
	$t/v$ must be either ground or $t/v \simeq A$ must hold. Hence if $t/v$ is   non-ground or $t/v \not\simeq A$ then clearly 
	${T} (a(x_1,\dots,x_k))$ cannot be subtree conform to $A$.  (Note that if  $t$ is subtree conform to $A$  and $\text{height}(t)< \text{height}(A)$
	then $t$ must be ground.)
	Determining all nodes $v$ of $t$ such that  $\text{height}(t/v)= \text{height}(A)$
	can clearly be done in  time linear in $|t|$.
	Checking whether $t/v$ is ground or whether    $t/v \simeq A$ can  be
	done in time linear in $|t/v|$. 
	Hence determining whether $t/v$ is ground or whether    $t/v \simeq A$ for all $v\in V(t)$ 
	such that $\text{height}(t/v)= \text{height}(A)$ can  be
	done in time linear in $|t|$.
	Note that all nodes $v$ such that  $\text{height}(t/v)= \text{height}(A)$
	are disjoint.  Hence the combined size of all such  subtrees $t/v$ 
	can obviously not exceed $|t|$.
	
	By definition $|\rhs{q,a} |\leq |T|$, where  $a\in \Sigma$, $q\in Q$, and $|A|\leq |T|$ hold.
	Additionally, it holds  that $A$ has no more than $|T|$ nodes with label of the form $q(x_1)$.
	Therefore $|t|=|{T} (a(x_1,\dots,x_k))|\leq |T|+|T|^2$ holds.
	
	Therefore, determining whether ${T} (a(x_1,\dots,x_k))$ is subtree conform to $A$ for all $a\in \Sigma$ can be done in time $O(|\Sigma| |T|^2)$.
	
	Note that constructing a homomorphism $h$ that is equivalent to a total transducer $T$ such that
	${T} (a(x_1,\dots,x_k))$ is subtree conform to $A$ for all $a\in \Sigma$ can be done in  time $O(|\Sigma| |T|^2)$ as well.
	For the construction in  Lemma~\ref{hom lemma 2} it is clearly sufficient to determine all 
	nodes $v$ of ${T}(a)$ such that ${T}(a)/v \simeq A$
	for all $a\in \Sigma$. \qed
\end{proof}

\subsection{Determining the total la-transducer $M'$}
In the following we show  that the existence of a non-total la-transducer $M$ that is equivalent to some homomorphism
implies that a total la-transducer $M'$ exists that is effectively equivalent to $M$. In particular we show that $M'$
can be obtained from $M$ by restricting its domain.
\begin{lemma}\label{partition}
	Let  $M$ be a non-total  transducer with regular-look ahead that is equivalent to a homomorphism $h$. Let $\Sigma$ be the input alphabet of $M$ and $h$. Then there exists a subset $\Sigma'$ of $\Sigma$ such that
	${M}$ is undefined for trees in $T_\Sigma \setminus T_{\Sigma'}$ while
	${M}$ restricted to $T_{\Sigma'}$ is total.
\end{lemma}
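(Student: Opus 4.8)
The plan is to read off the set $\Sigma'$ directly from the homomorphism $h$. Since $M$ is equivalent to $h$, the two agree on all ground input trees, and in particular their domains coincide. A homomorphism is partial precisely because some of its single-state rules may be missing, so the natural candidate is to let $\Sigma'$ be the set of input symbols for which $h$ has a defined rule, that is $\Sigma' = \{a \in \Sigma \mid h(a) \text{ is defined}\}$. Everything then reduces to showing that the domain of $h$ is exactly $T_{\Sigma'}$, after which equivalence transfers the claim to $M$.

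First I would establish a compositional characterisation of the domain of $h$. Recall that a homomorphism has a single state $q$ and axiom $q(x_1)$, so its semantics satisfies $h(a(s_1,\dots,s_k)) = h(a)[q(x_i) \leftarrow h(s_i) \mid i \in [k]]$; hence $h(a(s_1,\dots,s_k))$ is defined if and only if $h(a)$ is defined and each $h(s_i)$ is defined. A straightforward structural induction on $s$ then yields that $h(s)$ is defined if and only if $h(a)$ is defined for every symbol $a$ occurring in $s$. Given this characterisation, totality on $T_{\Sigma'}$ is immediate: for $s \in T_{\Sigma'}$ every symbol occurring in $s$ lies in $\Sigma'$ and therefore has a defined rule, so $h(s)$ is defined. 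Conversely, any tree in $T_\Sigma \setminus T_{\Sigma'}$ contains at least one symbol $a \notin \Sigma'$, for which $h(a)$ is undefined by the definition of $\Sigma'$; by the same characterisation $h(s)$ is then undefined as well.

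Finally I would transfer both properties from $h$ to $M$ using equivalence. Because $M(s) = h(s)$ for all ground input trees $s$, the domains of $M$ and $h$ are equal, so $M$ restricted to $T_{\Sigma'}$ is total and $M$ is undefined on all trees in $T_\Sigma \setminus T_{\Sigma'}$, exactly as required. I do not expect a serious obstacle in this argument; the only point that genuinely needs care is the compositional characterisation of the domain of a homomorphism, and this rests entirely on the fact that a one-state transducer processes each input symbol independently, so that its domain is fully determined by which symbols carry a defined rule.
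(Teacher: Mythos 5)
Your proposal is correct and follows essentially the same route as the paper: the paper likewise takes $\Sigma'$ to be the set of symbols on which $h$ has a defined rule (phrased via its complement $\Sigma''$), concludes that $h$ is total on $T_{\Sigma'}$ and undefined on $T_\Sigma \setminus T_{\Sigma'}$, and transfers both properties to $M$ via equivalence of the translations. The only difference is presentational --- you spell out the structural induction characterizing $\mathrm{dom}(h)$, which the paper leaves implicit (note that both arguments tacitly assume that an undefined rule at any input node makes the whole output undefined, i.e., neither treats deleting rules as potentially rescuing definedness of deleted subtrees).
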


\begin{proof}
	As $M$ is non-total and equivalent to $h$, $h$ is clearly non-total as well. By definition $h$ is a single-state transducer. Therefore, $h$ being non-total means
	there is a subset $\Sigma''$ of $\Sigma$ such that $h(a_1(x_1,\dots,x_n))$ is undefined for all  $a_1\in \Sigma''$.
	On the other hand,  $h(a_2(x_1,\dots,x_n))$ is defined for all $a_2\in \Sigma'$, where  $\Sigma'=\Sigma\setminus \Sigma''$. This means that $h$ restricted to $T_{\Sigma'}$ is total,
	while $h$ is undefined for all trees in $T_\Sigma\setminus T_{\Sigma'}$. As $M$ and $h$ are equal,  ${M}$  is total if restricted to $T_{\Sigma'}$ and undefined for all trees in $T_\Sigma\setminus T_{\Sigma'}$ as well. \qed
\end{proof}

It is well-known that the domain of a transducer $M$ with regular look-ahead is effectively recognizable, i.e., there is a tree automaton
accepting $\text{dom}({M})$. \cite[Corollary 2.7]{DBLP:journals/mst/Engelfriet77}  Hence, it is   decidable whether there exists a
ranked alphabet $\Sigma'$ such that $\text{dom}({M})=T_{\Sigma'}$. If such $\Sigma'$
does not exist, then we return "no".

Otherwise, the la-transducer $M'$ is $M$ restricted to $T_{\Sigma'}$.
Obviously $M'$ is total.

\subsection{Considerations leading to Theorem~\ref{difference tree bound}}

But before we formalize our claim that on input $c$, $T$ cannot generate more nodes than ${M'} (c l_1) \sqcup {M'} (c  l_2)$ has.

\begin{lemma}\label{look ahead lemma}
	Consider a total transducer with regular look-ahead $M'$  that is equivalent
	to a total   transducer $T$.
	Let $l_1$ and $l_2$ be la-states of $M$. Let $c$ be a context over
	$\Sigma$. 
	Then  $V(T(c)) \subseteq V({M'} (cl_1) \sqcup {M'} (c l_2))$.
\end{lemma}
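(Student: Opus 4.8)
The plan is to reduce $V(T(c))\subseteq V(M'(cl_1)\sqcup M'(cl_2))$ to a pointwise statement about the internal nodes of $T(c)$. Since $T(c)=\makro{T}(c)\in T_\Delta[Q(X_1)]$, every non-leaf node of $T(c)$ carries a $\Delta$-symbol, and only leaves may carry a label $q(x_1)$. As node sets are prefix-closed, it suffices to prove the \emph{core claim}: for every non-leaf node $w$ of $T(c)$, with $d:=T(c)[w]\in\Delta$, we have $M'(cl_1)[w]=M'(cl_2)[w]=d$. Granting this, every $v\in V(T(c))$ has all its proper ancestors among the non-leaf nodes of $T(c)$, so $M'(cl_1)$ and $M'(cl_2)$ agree with $T(c)$ — and hence with each other — on all proper ancestors of $v$; by the definition of $\sqcup$ as the maximal common prefix this forces $v\in V(M'(cl_1)\sqcup M'(cl_2))$.

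To prove the core claim I would fix a non-leaf node $w$ of $T(c)$ and use that $T$ carries no look-ahead, so $T(c)$ is one fixed tree. Let $u$ be the unique node of $c$ labelled $x_1$. For any ground $s_1$ with $\delta^*(s_1)=l_1$ the compositional semantics give $T(cs_1)=T(c)[q(x_1)\leftarrow\makro{q}(s_1)]$ and $M'(cs_1)=M'(cl_1)[q(u)\leftarrow\makro{q}(s_1)\mid q\in Q]$, where all $\makro{q}(s_1)$ are ground. Since $w$ is a $\Delta$-node of $T(c)$ the first substitution does not touch its label, so $T(cs_1)[w]=d$, and equivalence of $T$ and $M'$ yields $M'(cs_1)[w]=d$. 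If $w$ is itself a $\Delta$-node of $M'(cl_1)$, then it lies strictly above every $Q$-leaf, so grafting leaves it unchanged and $M'(cl_1)[w]=M'(cs_1)[w]=d$; the same computation with any $s_2\in\text{dom}(l_2)$ gives $M'(cl_2)[w]=d$, as required.

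The main obstacle is the remaining case, in which $w$ lies at or below a $Q$-leaf of $M'(cl_1)$: there is a leaf $w_0\preceq w$ of $M'(cl_1)$ with label $q(u)$, and by Corollary~\ref{appendix corollary} together with la-uniformity $\rho(q)=l_1$ and $\text{dom}(q)=\text{dom}(l_1)\neq\emptyset$. I would resolve this by climbing to $w_0$. In this case $w_0$ is a non-leaf $\Delta$-node of $T(c)$ — if $w_0=w$ this is our hypothesis, and if $w_0\prec w$ it holds because $w_0$ has the descendant $w\in V(T(c))$ — so $T(c)[w_0]$ is a fixed symbol of $\Delta$. But grafting gives $M'(cs_1)[w_0]=$ the root symbol of $\makro{q}(s_1)$, which would then have to equal this fixed symbol for \emph{every} $s_1\in\text{dom}(q)$; this contradicts the earliest property of $M'$, since $\bigsqcup\{\makro{q}(s)\mid s\in\text{dom}(q)\}=x_1$ forces the root symbols of the $\makro{q}(s)$ to differ. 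Hence this case cannot occur, so $M'(cl_1)[w]=d$ and, symmetrically, $M'(cl_2)[w]=d$, proving the core claim and the lemma. The points to check carefully are the grafting identity $M'(cs_1)=M'(cl_1)[q(u)\leftarrow\makro{q}(s_1)]$ and that $l_1,l_2$ are reachable (so their domains, and those of the relevant states, are nonempty) — precisely where earliestness is invoked. Note that only earliestness of $M'$, not full canonicity, is needed.
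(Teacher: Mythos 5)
Your proof is correct, and it takes a route that is genuinely different from the paper's. The paper argues by contradiction on the joined tree: it fixes trees $s_1,s_2$ with $\delta^*(s_i)=l_i$, picks a node of $V(T(c))$ missing from $V(t)$ for $t=M'(cl_1)\sqcup M'(cl_2)$, passes to a proper ancestor $v$ with $t[v]\in X$, and asserts that ``by definition'' the root symbols of $M'(cs_1)/v$ and $M'(cs_2)/v$ differ, contradicting $M'(cs_i)=T(cs_i)$ because $T(c)[v]\in\Delta$ is unchanged by grounding. You instead prove a stronger pointwise claim --- every internal node of $T(c)$ carries the same $\Delta$-label in $M'(cl_1)$ and in $M'(cl_2)$ --- treating each look-ahead output separately, and your contradiction comes from earliestness of a single state (all of $\makro{q}(s)$, $s\in\text{dom}(q)$, would share one root symbol) rather than from a disagreement between the two grafted trees. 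This buys something concrete: the paper's ``by definition'' step is only literally forced when $v$ carries two distinct $\Delta$-labels in $M'(cl_1)$ and $M'(cl_2)$; when $v$ sits at a state call $q(u)$ in one or both trees, the grafted roots for the particular chosen $s_1,s_2$ need not differ, and exactly your Case-B argument (earliestness together with $\rho(q)=l_1$ from Corollary~\ref{appendix corollary} and $\text{dom}(q)=\text{dom}(l_1)$ from la-uniformity) is what closes that case. Your version also makes explicit the hypotheses that the lemma statement suppresses but both proofs need --- $M'$ canonical earliest (hence la-uniform and earliest) as assumed in the surrounding text, and reachability of $l_1,l_2$; indeed the inclusion is false for a non-earliest $M'$, so flagging earliestness is not pedantry. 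One wording slip to fix: in your Case A, a $\Delta$-node of $M'(cl_1)$ does not ``lie strictly above every $Q$-leaf''; what you need (and what holds, since $Q$-leaves are leaves) is that it is neither a $Q$-leaf nor below one, so the grafting substitution leaves its label unchanged.
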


\begin{proof}
	In the following let $t={M'} (c l_1) \sqcup {M'} (c l_2)$.
	Let $s_{i}$ be a tree with la-state $l_i$ for $i=1,2$.
	Clearly ${M'} (c s_{i})={T} (c s_{i})$.
	
	Assume that  $V(T(c)) \not \subseteq V(t)$.
	Then a node in $V(T(c)) \setminus V(t)$ exists. 
	As ${M'} (c s_{i})={T} (c s_{i})$ for $i=1,2$, a proper ancestor $v$ of that node exists such that
 	$t[v] \in X$.
	Clearly $T(c)[v] \in \Delta$.
	By definition, the root symbols of ${M'} (c s_{1})/v$ and ${M'} (c s_{2})/v$ are different. Therefore, it is clear that either 
	${M'} (c s_{1}) [v] \neq T(c)[v]$ or ${M'} (c s_{2}) [v] \neq T(c)[v]$ holds.
	This in turn means that either ${M'} (c s_{1})\neq {T} (c s_{1})$
	or  ${M'} (c s_{2})\neq {T} (c s_{2})$ holds contradicting the equality of $T$ and $M$. This yields our claim. \qed
\end{proof}

\noindent
We now show how to determine the difference bound of $M'$.
\begin{lemma}
	Let $l_1,\dots,l_n$ be states of the la-automaton and $s_1,\dots ,s_n$ be trees of minimal height that reach these states.
Let $i\in[n]$ such that $\text{height}(M(s_i))\geq \text{height}(M(s_j))$, $j\neq i$, then, $\text{height}({M'} (s_i))$ is a difference bound of $M'$  if $M'$ is equivalent to a homomorphism 
\end{lemma}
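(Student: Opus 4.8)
The plan is to bound the height of every difference tree of $M'$ by $\text{height}({M'}(s_i))$. A difference tree arises from a context $c$, two la-states $l_1,l_2$, the maximal common prefix $t={M'}(cl_1)\sqcup {M'}(cl_2)$, and a node $v$ with $t[v]=x_1$; it is $D={M'}(cl_j)/v$ for some $j\in\{1,2\}$. Since $M'$ is equivalent to the homomorphism $h$ and $h$ is in particular a total transducer, I would first apply Lemma~\ref{look ahead lemma} with $T=h$ to obtain $V(h(c))\subseteq V(t)$.

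The crucial step is to show that $v$ lies at, or strictly below, a \emph{hole} of $h(c)$, i.e.\ a node $v'$ with $h(c)[v']=q_h(x_1)$, where $q_h$ is the single state of $h$. Because $t$ is a pattern, $v$ is a leaf of $t$. If $v\in V(h(c))$ and $h(c)[v]$ were a $\Delta$-symbol of positive rank, its children would lie in $V(h(c))\subseteq V(t)$ strictly below the leaf $v$ of $t$, which is impossible; if $h(c)[v]$ were a $\Delta$-constant, then $M'$ would be forced to emit that constant at $v$ on every completion of $c$, which either collapses the divergence recorded by $t[v]=x_1$ or, when the roots of ${M'}(cl_1)/v$ and ${M'}(cl_2)/v$ are state-labelled, contradicts the earliest property of $M'$. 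Hence $h(c)[v]$ must be a hole. If instead $v\notin V(h(c))$, I would take the longest prefix $v'$ of $v$ in $V(h(c))$; the same frontier argument (a $\Delta$-constant leaf at $v'$ leaves no room for $v$ below it, yet $v\in V({M'}(cs_1))=V(h(cs_1))$) forces $h(c)[v']$ to be a hole. Either way $v=v'v''$ with $v'$ a hole of $h(c)$.

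I would then fill the hole of $c$ with the minimal-height tree $s_j$ reaching $l_j$ fixed in the statement and use the compositionality of the homomorphism: ${M'}(cs_j)=h(cs_j)=h(c)[q_h(x_1)\leftarrow h(s_j)]$, so $h(cs_j)/v'=h(s_j)={M'}(s_j)$ by equivalence on the ground tree $s_j$. Consequently ${M'}(cs_j)/v={M'}(s_j)/v''$ is a subtree of ${M'}(s_j)$, giving $\text{height}({M'}(cs_j)/v)\leq \text{height}({M'}(s_j))$. Finally, ${M'}(cs_j)/v$ is obtained from $D$ by replacing each state-leaf $q(u)$ of $D$ with the non-empty tree $\makro{q}(s_j)$, so filling cannot decrease height and $\text{height}(D)\leq \text{height}({M'}(cs_j)/v)\leq \text{height}({M'}(s_j))\leq \text{height}({M'}(s_i))$ by the choice of $i$. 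As $c,l_1,l_2,v,j$ were arbitrary, $\text{height}({M'}(s_i))$ bounds every difference tree and is therefore a difference bound.

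The main obstacle is the middle step: certifying that each divergence node $v$ of the common prefix really falls at or below a hole of $h(c)$, rather than inside the ground part that $h$ (and hence $M'$) emits independently of the look-ahead. This is where the homomorphism hypothesis, Lemma~\ref{look ahead lemma}, the fact that $t$ is a pattern, and the earliest property of $M'$ have to be combined carefully; once it is established, the substitution and height estimates are routine.
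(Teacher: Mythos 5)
Your proposal is correct and takes essentially the same route as the paper's proof: both apply Lemma~\ref{look ahead lemma} with $T=h$ to get $V(h(c))\subseteq V(t)$, locate the divergence node $v$ at or below a hole $v'$ of $h(c)$, and then chain $\text{height}(t_j)\leq\text{height}(M'(cs_j)/v)\leq\text{height}(h(cs_j)/v')=\text{height}(h(s_j))=\text{height}(M'(s_j))\leq\text{height}(M'(s_i))$. The only difference is that you spell out the case analysis (positive-rank frontier argument, $\Delta$-constant case, and the appeal to the earliest property of $M'$) that the paper compresses into the single assertion that ``the equality of $M'$ and $h$ yields that some ancestor $v'$ of $v$ must exist such that $h(c)[v']=h(x_1)$.''
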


\begin{proof}
	As $M'$ and $h$ are equivalent, there are clearly only  finitely
	many difference trees \cite[Corollary 28]{DBLP:journals/tcs/EngelfrietMS16} and hence a difference bound must exist.
	Note that by definition a homomorphism is a deterministic single state transducer.
	
	W.l.o.g. consider an arbitrary context $c$ over $\Sigma$ and the la-states $l_1,l_2$. 
	Let $v$ be a node of $t={M'} (c l_1) \sqcup {M'} (c l_2)$ with label in $X$ and hence $t_1 = {M'} (c l_1)/v$ and $t_2= {M'} (c l_2)/v$ be difference trees of $M'$.
	In the following, we  show that 
	$\text{height}(t_1)\leq \text{height}({M'} (s_1))$  and  $\text{height}(t_2)\leq \text{height}({M'} (s_{2})$
	hold.
	
	In the following let $\xi_1={M'} (c s_1)/v$ and $\xi_2={M'} (c s_{2})/v$.
	Clearly, $\text{height}(t_1)\leq \text{height}(\xi_1 )$ and $\text{height}(t_2)\leq \text{height}(\xi_2)$
	hold. 
	
	Let $h(c)= p[ h(x_1),\dots,h(x_1)]$ where $p\in T_\Delta \{X_n\}$. 
	By Lemma~\ref{look ahead lemma}, $V(p)=V(h(c)) \subseteq V(t)$. 
	Thus, the equality of $M$ and $h$ yields that some ancestor $v'$ of $v$ must exists such that $h(c)[v']= h(x_1)$.
	As $M'$ and $h$ are equivalent, ${M'} (c s_j)=h(c s_j)=	p[ h(s_{j}),\dots,h(s_{j})]$ holds for $j=1,2$.
	Therefore it follows easily that  
	$$\text{height}(\xi_j)\leq \text{height}(M' (c s_j) / v')=
	\text{height}(h (c s_j) / v')=\text{height}(h (s_j))
	$$  holds for $j=1,2$.
	As $M'$ and $h$ are equivalent, ${M'} (s_1)=h(s_1)$ and ${M'} (s_{2})= h(s_{2})$  hold. This yields  $\text{height}(t_1)\leq \text{height}({M'} (s_1))$ and $\text{height}(t_2)\leq \text{height}( {M'} (s_{2}) ) $.
	As $\text{height}(M(s_i))\geq \text{height}(M(s_j))$ if $j\neq i$,
	 clearly $\text{height}({M} (s_i))$ is a difference bound of $M$.  \qed
\end{proof}

\end{document}